
\ifx\synctex\undefined\else\synctex=1\fi
\documentclass[a4paper]{article}

\usepackage[utf8]{inputenc}
\usepackage[T1]{fontenc}

\usepackage{microtype}

\usepackage{amsxtra} 
\usepackage{amssymb}
\usepackage{amsthm}

\usepackage{latexsym}
\usepackage{stmaryrd}
\usepackage{mathabx}
\usepackage{mathtools}

\usepackage{galois}

\usepackage[pointedenum]{paralist}

\usepackage[tworuled,vlined,english]{algorithm2e}

\usepackage{url}

\usepackage{wrapfig}
\usepackage{tikz}
\usetikzlibrary{calc}


%
\clearpage{}%


\newtheorem{example}{Example}[section]
\newtheorem{lemma}{Lemma}[section]
\newtheorem{proposition}{Proposition}
\newtheorem{theorem}{Theorem}
\newtheorem{corollary}{Corollary}

\newtheorem{definition}{Definition}







\newcommand{\topleft}[1] {{~}^{\scriptscriptstyle{\blacksquare}} \! M}
\newcommand{\botleft}[1] {{~}_{\scriptscriptstyle{\blacksquare}} \! M}
\newcommand{\topright}[1] {M^{\scriptscriptstyle{\blacksquare}}}
\newcommand{\botright}[1] {M_{\scriptscriptstyle{\blacksquare}}}




\renewcommand{\vec}[1]{{\bf {#1}}}



\newcommand{\true}{\mbox{\bf true}}
\newcommand{\false}{\mbox{\bf false}}




\newcommand{\card}[1]{\mbox{card}({#1})}


\newcommand{\arrow}[2]{\xrightarrow[{\scriptstyle #2}]{{\scriptstyle #1}}}

\newcommand{\nat}{{\bf \mathbb{N}}}
\newcommand{\zed}{{\bf \mathbb{Z}}}






\def\vr{\kern-\arraycolsep & \kern-\arraycolsep}
\def\VR{\kern-\arraycolsep\strut\vrule}






\def\age#1{\left[#1\right]}
\def\set#1{{\left\{ #1 \right\}}}

\def\tuple#1{{\langle #1 \rangle}}
\def\nats{{\mathbb{N}}}
\def\zed{\mathbb{Z}}
\def\card#1{{|\!|{#1}|\!|}}
\def\len#1{{|{#1}|}}
\def\prod{\Delta}
\def\pat{{\mathbf{b}}}
\def\patt{{\widetilde{\mathbf{b}}}}

\def\df#1{\mathbf{df}(#1)}
\def\fin#1{\mathop{\mathcal{F}}(#1)}
\def\nf#1{\mathop{n\!\mathcal{F}}(#1)}

\def\Vars{\ensuremath{\Xi}}

\makeatletter
\newsavebox{\@brx}
\newcommand{\llangle}[1][]{\savebox{\@brx}{\(\m@th{#1\langle}\)}%
  \mathopen{\copy\@brx\kern-0.5\wd\@brx\usebox{\@brx}}}
\newcommand{\rrangle}[1][]{\savebox{\@brx}{\(\m@th{#1\rangle}\)}%
  \mathclose{\copy\@brx\kern-0.5\wd\@brx\usebox{\@brx}}}
\makeatother

\def\calls{\langle\!\!\langle}
\def\rets{\rangle\!\!\rangle}

\def\loc{\mbox{loc}}

\DeclareFontFamily{U}{mathx}{\hyphenchar\font45}
\DeclareFontShape{U}{mathx}{m}{n}{
      <5> <6> <7> <8> <9> <10>
      <10.95> <12> <14.4> <17.28> <20.74> <24.88>
      mathx10
      }{}
\DeclareSymbolFont{mathx}{U}{mathx}{m}{n}
\DeclareFontSubstitution{U}{mathx}{m}{n}
\DeclareMathAccent{\widecheck}{0}{mathx}{"71}
\DeclareMathAccent{\wideparen}{0}{mathx}{"75}


\def\bdwords{\Upsilon}

\renewcommand{\vec}[1]{{\mathbf {#1}}}

\newcommand{\sem}[1]{\llbracket #1 \rrbracket}




\def\proj{\mathbin{\downarrow}}

\clearpage{}%

\pagestyle{plain}

\usepackage{authblk}

\title{Underapproximation of Procedure Summaries for Integer Programs}
\author[1]{Pierre Ganty}
\author[2]{Radu Iosif}
\author[2,3]{Filip Kone\v{c}n\'{y}}
\affil[1]{\textsc{Imdea} Software Institute, Madrid, Spain}
\affil[2]{\textsc{Verimag/CNRS}, Grenoble, France}
\affil[3]{\'Ecole Polytechnique F\'ed\'erale de Lausanne (EPFL), Switzerland}
\date{}              
\begin{document}

\maketitle

%
\begin{abstract}
We show how to underapproximate the procedure summaries of recursive 
programs over the integers using off-the-shelf analyzers for non-recursive programs. The novelty of
our approach is that the non-recursive program we compute may capture unboundedly many behaviors
of the original recursive program for which stack usage cannot be bounded.
Moreover, we identify a class of recursive programs on which our method
terminates and returns the precise summary relations without
underapproximation. Doing so, we generalize a similar result for non-recursive
programs to the recursive case.  Finally, we present experimental results of an
implementation of our method applied on a number of examples.
\end{abstract}
 %
%
%

\section{Introduction}

Formal approaches to reasoning about behaviors of programs usually
fall into one of the following two categories: {\em certification}
approaches, that provide proofs of correctness, and {\em bug-finding}
approaches, that explore increasingly larger sets of traces in order
to find possible errors. While the methods in the first category are
used typically in the development of safety-critical software whose
failures may incur dramatic losses in terms of human lives (airplanes,
space missions, or nuclear power plants), the methods in the second
category have a broad application in industry, outside of the
safety-critical market niche. Another difference between the two
categories is methodological: certification approaches are based on
{\em over-approximations} of the set of behaviors (if the
over-approximation is free of errors, the original system is correct),
while bug-finding needs systematic {\em under-approximation}
techniques (if there are errors, the method will eventually discover
all of them). Finally, over-approximation methods are guaranteed to
terminate, but the answer might be inconclusive (spurious errors are
introduced due to the abstraction), whereas under-approximation
methods provide precise results (all reported errors are real), but
with no guarantee for termination.

{\em Procedure summaries} are relations between the input and return
values of a procedure, resulting from its terminating executions.
Computing summaries is important, as they are a key enabler for the
development of modular verification techniques for inter-procedural
programs, such as checking safety, termination or equivalence
properties. Summary computation is, however, challenging in the
presence of {\em recursive procedures} with integer parameters, return
values, and local variables. While many analysis tools exist for
non-recursive programs, only a few ones address the problem of
recursion (e.g. \textsc{InterProc} \cite{interproc}).

In this paper, we propose a novel technique to generate arbitrarily
precise {\em underapproximations} of summary relations. Our technique
is based on the following idea. The control flow of procedural
programs is captured precisely by the language of a context-free
grammar. A \(k\)-index underapproximation of this language (where
\(k\geq 1\)) is obtained by filtering out those derivations of the
grammar that exceed a budget, called \emph{index}, on the number (at
most \(k\)) of occurrences of nonterminals occurring at each
derivation step. As expected, the higher the index, the more complete
the coverage of the underapproximation.  From there we define the
\(k\)-index summary relations of a program by considering the
\(k\)-index underapproximation of its control flow. Our method then
reduces the computation of \(k\)-index summary relations for a
recursive program to the computation of summary relations for a
non-recursive program, which is, in general, easier to compute because
of the absence of recursion. The reduction was inspired by a
decidability proof \cite{AG11} in the context of Petri nets.

The contributions of this paper are threefold. First, we show that,
for a given index, recursive programs can be analyzed using
off-the-shelf analyzers designed for non-recursive programs. Second,
we identify a class of recursive programs, with possibly unbounded
stack usage, on which our technique is complete, i.e.\ it terminates
and returns the precise result. Third, we present experimental
results of an implementation of our method applied on a number of
examples.

\noindent{\bf Motivating Example} To properly introduce the reader to
our result, we describe our source-to-source program transformation
through an illustrative example. Consider the recursive program
$\mathcal{P}=\set{P}$, consisting of a single recursive procedure $P$,
given in Fig. \ref{fig:program} (a), whose control flow graph is given
in Fig. \ref{fig:program} (b). The nodes of this graph represent
control locations in the program, with a designated initial location
$Q^{init}_1$ and a final location $\varepsilon$. The edges are labeled
with relations denoting the program semantics, where primed variables
$x'$ and $z'$ denote the values at the next step. For instance, the
edge $t_2: Q_2 \arrow{z'=P(x-1) \wedge x'=x}{} Q_3$ corresponds to the
recursive call on line \(3\) in the program---the edge labels of the
control flow graph explicitly mention the copies of variables not
changed by the program action corresponding to the edge, e.g.\ $x'=x$.

In this paper, we model programs using visibly pushdown grammars (VPG)
\cite{AM09}. The VPG for \(P\) is given in Fig. \ref{fig:program} (c).
The role of the grammar is to define the set of
\emph{interprocedurally valid} paths in the control-flow graph of the
program \(P\). Every edge in the control-flow graph matches one or two
symbols from the finite alphabet $\set{\tau_1, \calls\tau_2,
  \tau_2\rets, \tau_3, \tau_4}$, where $\calls\tau_2$ and
$\tau_2\rets$ denote the call and return, respectively. Each edge in
the graph translates to a production rule in the grammar, labeled
$p^b_1,p^c_2,p^a_3$ and $p^a_4$---the superscript $a$, $b$ and $c$
distinguishes rules with $0$, $1$ and $2$ nonterminals on the
right-hand side, respectively. For instance, the call edge $t_2$
becomes the rule $Q_2 \rightarrow \calls\tau_2 Q^{init}_1 \tau_2\rets
Q_3$. The language of the grammar of Fig.~\ref{fig:program} (c) (with
axiom $Q^{init}_1$) is the set $\set{\left(\tau_1\calls\tau_2\right)^n
  \tau_4 \left(\tau_2\rets\tau_3\right)^n \mid n \in \nat}$ of
interprocedurally valid paths, where each call symbol $\calls\tau_2$
is matched by a return symbol $\tau_2\rets$, and the matching relation
is well-parenthesized.

The outcome of the program transformation applied to $P$ is the
non-recursive program $\mathcal{Q} = \set{\mathit{query}^i}_{i=0}^K$,
depicted in Fig. \ref{fig:program} (d), where $K$ is a parameter of
our analysis. The main idea is that the executions of the procedure
$\mathit{query}^k$, ending with an empty stack, correspond to the
derivations of the VPG in Fig. \ref{fig:program} (c), of index at most
$k$---since there is no derivation of index $0$, the set of executions
of $\mathit{query}^0$ will be empty. The body of a procedure
$\mathit{query}^k$ consists of a main loop, starting at the control
label $begin\_loop$ in Fig. \ref{fig:program} (d). Each branch inside
the main loop corresponds to the simulation of one of the production
rules of the grammar in Fig. \ref{fig:program} (c) and starts with a
control label which is the name of that rule
($p_1^b,p_2^c,p_3^a,p_4^a$). Next, we explain the relations labeling
the control edges of $\mathit{query}^k$. For each production rule $p$
in the grammar we have a relation $\rho_p(x_I,z_I,x_O,z_O)$, where
subscript \(I\) and \(O\) denote the input and output copies of the
program variables of $P$, respectively. In addition, we consider
auxiliary copies $x_J,z_J$, $x_K,z_K$ and $x_L,z_L$, defined in a
similar way. For instance, the auxiliary variables store intermediate
results of the computation of \(p^c_2\) as follows: \( [x_I,z_I]\;
\calls\tau_2 \;[x_J,z_J]\; Q^{init}_1 \;[x_K,z_K]\; \tau_2\rets
\;[x_L,z_L]\; Q_3 \;[x_O,z_O]\). The transition $p_2^c \rightarrow
in\_order/out\_of\_order$ can be understood by noticing that
$\calls\tau_2$ gives rise to the constraint $x_J=x_I-1$, $\tau_2\rets$
to $z_L=z_K$ and $x_I=x_L$ corresponds to the frame condition $x'=x$.

The peculiarity of the resulting program is that a function call is
modeled in two possible ways: \begin{inparaenum}[(i)]
\item \emph{in-order} execution of the function body, followed by the
  continuation of the call, and
\item \emph{out-of-order} execution of the continuation, followed by
  the execution of the function body.
\end{inparaenum}
The two cases correspond to $k$-index derivations of the VPG in Fig
\ref{fig:program} (c) of the form $u Q_1^{init} v Q_3 w \Rightarrow^*
u v_1 v Q_3 w \Rightarrow^* u v_1 v v_2 w$ and $u Q_1^{init} v v_2 w
\Rightarrow^* u v_1 v v_2 w \Rightarrow^* u v_1 v v_2 w$,
respectively, where $Q_1^{init} \Rightarrow^* v_1$ and $Q_3
\Rightarrow^* v_2$ are derivations of the VPG. In the first case, the
control path simulating the derivation in $\mathit{query}^k$ follows
the left branch $in\_order/out\_of\_order \rightarrow begin\_loop$,
whereas the second case is simulated by the right branch. 

Since the only call of $query^k$ is to $query^{k-1}$, on the edges
$in\_order/out\_of\_order \rightarrow begin\_loop$, the whole program
is a non-recursive under-approximation of the semantics of the
original program $P$, amenable to analysis using intra-procedural
program analysis tools. Indeed, the computation of the pre-condition
relation of the program
$\mathcal{Q}=\{\mathit{query}^2,\mathit{query}^1$, $\mathit{query}^0\}$
with the \textsc{Flata} tool \cite{HKGIKR12} yields the formula $z_O =
2\cdot x_I$, which matches the summary $z'=2\cdot x$ of the program
$P$.

In other words, the analysis of the under-approximation of
$\mathcal{P}$ of index at most $2$ suffices to infer the complete
summary of the program (the analysis for values $K>2$ will necessarily
yield the same result, since the under-approximation method is
monotonic in $K$). This fact matches the completeness result of Section
\ref{sec:completness}, stating that the analysis needs to be carried
up to a certain bound (linear in the size of the program's VPG)
whenever the language of the VPG is included in the language of the
regular expression \(w_1^* \ldots w_n^*\), for some non-empty words
\(w_1,\ldots,w_n\). In our case, the completeness result applies due to
\(\set{\left(\tau_1\calls\tau_2\right)^n \tau_4
  \left(\tau_2\rets\tau_3\right)^n \mid n \in \nat} \subseteq
\left(\tau_1\calls\tau_2\right)^* \tau_4^*
\left(\tau_2\rets\tau_3\right)^*\).

\begin{figure}[h]
	\centering
	\includegraphics{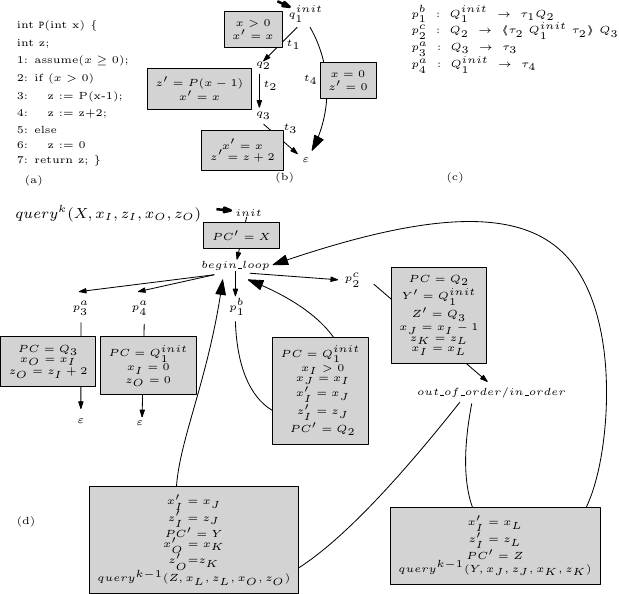}
 \caption{A recursive program returning the parameter value multiplied
  by two (a), its corresponding control flow graph (b) and visibly
  pushdown grammar (c), and the non-recursive program \(query^k(X,
  x_I, z_I, x_O, z_O)\) resulting from our index-bounded
  under-approximation (d). }
\label{fig:program}
\end{figure}

\noindent {\bf Related Work} The problem of analyzing recursive
programs handling integers (in general, unbounded data domains) has
gained significant interest with the seminal work of Sharir and Pnueli
\cite{sp81}. They proposed two orthogonal approaches for
interprocedural dataflow analysis. The first one keeps precise values
({\em call strings}) up to a limited depth of the recursion stack,
which bounds the number of executions. In contrast to the methods
based on the call strings approach, our method can also analyse
precisely certain programs for which the stack is unbounded, allowing
for unbounded number of executions to be represented at once.

The second approach of Sharir and Pnueli \cite{sp81} is based on
computing the least fixed point of a system of recursive dataflow
equations (the {\em functional approach}). This approach to
interprocedural analysis is based on computing an increasing {\em
  Kleene sequence} of abstract summaries. It is to be noticed that
abstraction is key to ensuring termination of the Kleene sequence, the
result being an over-approximation of the precise summary. Recently
\cite{EKL10:JACM}, a {\em Newton sequence} defined over the language
semiring was shown to converge at least as fast as the Kleene sequence
over the same semiring. An iterate of a Newton sequence is the set of
control paths in the program that correspond to words produced by a
grammar, with bounded number of nonterminals at each step in the
derivation. By increasing this bound, we obtain an increasing sequence
of languages that converges to the language of behavior of the
program. Our contribution can be thus seen as a technique to compute
the iterates of the Newton sequence for programs with integer
parameters, return values, and local variables, the result being, at
each step, an under-approximation of the precise summary.

The complexity of the functional approach was shown to be polynomial
in the size of the (finite) abstract domain, in the work of Reps,
Horwitz and Sagiv \cite{RHS95}. This result is achieved by computing
summary information, in order to reuse previously computed information
during the analysis. Following up on this line of work, most existing
abstract analyzers, such as \textsc{InterProc} \cite{interproc}, also
use relational domains to compute over-approximations of function
summaries -- typically widening operators are used to ensure
termination of fixed point computations. The main difference of our
method with respect to static analyses is the use of
under-approximation instead of over-approximation. If the final
purpose of the analysis is program verification, our method will not
return false positives. Moreover, the coverage can be increased by
increasing the bound on the derivation index.

Previous works have applied model checking based on abstraction
refinement to recursive programs. One such method, known as
\emph{nested interpolants} represents programs as nested word automata
\cite{AM09}, which have the same expressive power as the visibly
pushdown grammars used in our paper. Also based on interpolation is
the \textsc{Whale} algorithm \cite{AGC12}, which combines partial
exploration of the execution paths (underapproximation) with the
overapproximation provided by a predicate-based abstract post
operator, in order to compute summaries that are sufficient to prove a
given safety property. Another technique, similar to \textsc{Whale},
although not handling recursion, is the \textsc{Smash} algorithm
\cite{GNRT10} which combines may- and must-summaries for compositional
verification of safety properties. These approaches are, however,
different in spirit from ours, as their goal is proving given safety
properties of programs, as opposed to computing the summaries of
procedures independently of their calling context, which is our
case. We argue that summary computation can be applied beyond safety
checking, e.g., to prove termination
\cite{cook-podelski-rybalchenko-fmsd09}, or program equivalence.

The technique of under-approximation is typically used for bug
discovery, rather than certification of correctness. For instance, bug
detection based on under-approximation has been developed for
non-recursive C programs with arrays \cite{KroeningLW13}. Our approach
in orthogonal, as we consider more complex control structures
(possibly recursive procedure calls) but simpler data domains (scalar
values such as integers).

\noindent {\bf Paper organization.} After introducing the basic definition in
Section~\ref{sec:prelim}, we present, in Section~\ref{sec:int:programs}, our
model for programs, a semantics based on nested words and another one,
equivalent, based on derivations of the underlying grammar. Then, in
Section~\ref{sec:bounded-query}, we present our main contribution which is a
program transformation underapproximating the semantics of the input program.
In Section~\ref{sec:completness}, we define a class of programs for which the
underapproximation is complete.  Finally, after reporting on experiments in
Section~\ref{sec:experiments} we conclude in Section~\ref{sec:conclusions}.
 %
%
%

\section{Preliminaries}\label{sec:prelim}

\subsection{Grammars} 
Let $\Sigma$ be an \emph{alphabet}, that is a finite non-empty set of
symbols. We denote by $\Sigma^*$ the set of finite words over $\Sigma$
including $\varepsilon$, the empty word.  Given a word
\(w\in\Sigma^*\), let \(|w|\) denote its length and let \( (w)_i\),
with \(1\leq i\leq |w|\), be the \(i\)-th symbol of \(w\).  By \(
(w)_{i \ldots j} \), with \(1\leq i \leq j \leq |w|\), we denote the
subword \( (w)_i \ldots (w)_j \) of \(w\). For a word $w \in \Sigma^*$ and
\(\Sigma' \subseteq \Sigma\), we denote by $w \proj_{\Sigma'}$ the result of
erasing all symbols of \(w\) not in \(\Sigma'\).

A \emph{context-free grammar} (or simply grammar) is a tuple
$G=\tuple{\Vars,\Sigma,\prod}$, where $\Vars$ is a finite nonempty set
of \emph{nonterminals}, $\Sigma$ is an alphabet, such that $\Vars \cap
\Sigma = \emptyset$, and $\prod \subseteq \Vars\times (\Sigma \cup
\Vars)^*$ is a finite set of \emph{productions}.  
%
%
A production $(X,w) \in \prod$ is often conveniently noted \(X \rightarrow w\).
Also define \(\mathit{head(X\rightarrow w)=X}\) and \(\mathit{tail}(X\rightarrow w)=w\).
Given two strings $u,v \in (\Sigma \cup \Vars)^*$, a production \(
(X,w)\in\prod\) and \(1\leq j\leq |u|\), we define a \emph{step} $u
\xRightarrow{(X,w)/j}_G v$ if, and only if, \( (u)_j = X\) and \( v =
(u)_1 \cdots (u)_{j-1} \cdot w\cdot (u)_{j+1}\cdots (u)_{|u|}\).  We
omit \( (X,w)\) or \(j\) above the arrow when it
is not important. In this notation and others, when \(G\) is clear from
the context, we omit it.  \emph{Step sequences} (including the empty
sequence) are defined using the reflexive transitive closure of the
step relation $\xRightarrow{}_G$, denoted $\xRightarrow{}^*_G$. For
instance, \(X \xRightarrow{}^*_G w\) means there exists a sequence of
steps that produces the word \(w \in (\Sigma \cup \Vars)^*\), starting
from \(X\). We call any \emph{step sequence} \( v \xRightarrow{}^*_G
w\) \emph{a derivation} whenever \(v\in\Vars\) and \(w\in\Sigma^*\).
The language produced by \(G\), starting with a
nonterminal \(X\) is the set \(L_X(G) = \set{w\in\Sigma^* \mid X
\xRightarrow{}^*_G w}\).

By defining a \emph{control word} to be a sequence of productions $\gamma \in \prod^*$, we can
annotate step sequences as expected: \(\varepsilon\in\prod^*\) is the
control word for empty step sequences, and given a control word
\(\gamma\) of length \(n\) we write \( u \xRightarrow{\gamma}_G v\)
whenever there exists \(w_0,\ldots, w_{n} \in (\Vars\cup\Sigma)^*\)
such that
\[u = w_{0} \xRightarrow{(\gamma)_1}_G w_1
\xRightarrow{(\gamma)_2}_G \ldots w_{n-1}\xRightarrow{(\gamma)_n}_G
w_n = v\enspace .\]  

Given a nonterminal $X \in \Vars$ and a set $\Gamma \subseteq \prod^*$ of
control words (a.k.a \emph{control set}), we denote by $\hat{L}_{X}(\Gamma, G)
= \{ w\in \Sigma^* \mid \exists \gamma \in \Gamma \colon X \xRightarrow{\gamma}
w\}$ the language generated by $G$ using only control words in $\Gamma$. 

\subsection{Visibly Pushdown Grammars} 
To model the control flow of procedural programs we use languages
generated by visibly pushdown grammars, a subset of context-free
grammars. In this setting, words are defined over a {\em tagged
  alphabet} $\widehat{\Sigma} = \Sigma \cup \calls\Sigma \cup
\Sigma\rets$, where $\calls\Sigma = \{\calls a \mid a \in \Sigma\}$
represents procedure {\em call} sites and $\Sigma\rets = \{a \rets
\mid a \in \Sigma\}$ represents procedure {\em return}
sites. Formally, a \emph{visibly pushdown grammar} $G= \tuple{\Vars,
  \widehat{\Sigma}, \prod}$ is a grammar that has only productions of
the following forms, for some $a,b\in\Sigma$:
\begin{align*}
	X &\rightarrow a & X &\rightarrow a \, Y & X &\rightarrow \calls a \, Y\,  b \rets \, Z \enspace .
\end{align*}
 It is worth pointing that,
for our purposes, we do not need a visibly pushdown grammar to
generate the empty string \(\varepsilon\). Each tagged word generated
by visibly pushdown grammars is associated a \emph{nested word}
\cite{AM09} the definition of which we briefly recall.  Given a finite
alphabet $\Sigma$, a {\em nested word} over $\Sigma$ is a pair $(w,
\leadsto)$, where $\mathord{\leadsto} \subseteq \{1, \ldots, \len{w}\}
\times \{1, \ldots, \len{w}\}$ is a set of {\em nesting edges} (or
simply edges) where:
\begin{enumerate}
\item $i \leadsto j$ only if $i < j$; edges only go forward;
\item $\card{\{j \mid i \leadsto j\}} \leq 1$ and $\card{\{i \mid i
  \leadsto j\}} \leq 1$; no two edges share a call/return position;
\item if $i \leadsto j$ and $k \leadsto \ell$ then it is not the case
  that $i < k \leq j < \ell$; edges do not cross.
\end{enumerate}
Intuitively, we associate a nested word to a tagged word as follows:
there is an edge between tagged symbols \(\calls a\) and \(b\rets\) if
and only if both symbols are produced by the same derivation step.
Finally, let \(w\_nw\) denote the mapping which given a tagged word in
the language of a visibly pushdown grammar returns the nested word
thereof.

\begin{example}
For the tagged word
\(w=\tau_1\calls\tau_2\tau_1\calls\tau_2\tau_4\tau_2\rets\tau_3\tau_2\rets\tau_3\),
\(w\_nw(w)=(\tau_1\tau_2\tau_1\tau_2\tau_4\tau_2\tau_3\tau_2\tau_3,\set{2
  \leadsto 8, 4 \leadsto 6})\) is the associated nested word.
\(\blacksquare\)
\end{example}


\subsection{Integer Relations} 
Given a set $S$, let $\card{S}$ denote its cardinality. We denote by
$\zed$ the set of integers. Let $\vec{x} = \tuple{x_1,\ldots,x_d}$ be
a tuple of variables, for some $d > 0$. We define by
\(\vec{x}^\prime\) the \emph{primed} variables of \(\vec{x}\) to be
the tuple $\tuple{x'_1,x'_2,\ldots,x'_d}$. We consider implicitly that
all variables range over $\zed$. We denote by $\len{\vec{x}} = d$ the
length of the tuple $\vec{x}$, and for a tuple $\vec{y} =
\tuple{y_1,\ldots,y_e}$, we denote by $\vec{x} \cdot \vec{y} =
\tuple{x_1, \ldots, x_d, y_1, \ldots, y_e}$ their concatenation. For
two tuples of variables $\vec{t}$ and $\vec{s}$ such that
$\len{t}=\len{s}=k$, we denote by $\vec{t}=\vec{s}$ the conjunction
$\bigwedge_{i=1}^k t_i = s_i$.

A \emph{linear term} $t$ is a linear combination of the form $a_0 +
\sum_{i=1}^d a_ix_i$, where $a_0, \ldots, a_d \in \zed$. An
\emph{atomic proposition} is a predicate of the form $t \leq 0$, where
$t$ is a linear term.  We consider formulae in the first-order logic
over atomic propositions $t \leq 0$, also known as {\em Presburger
  arithmetic}.
A \emph{valuation} of $\vec{x}$ is a function $\smash{\nu : \vec{x}
  \arrow{}{} \zed}$. The set of all valuations of $\vec{x}$ is denoted
by $\zed^{\vec{x}}$. If $\vec{x} = \tuple{x_1, \ldots, x_d}$ and $\nu
\in \zed^{\vec{x}}$, then $\nu(\vec{x})$ denotes the tuple
$\tuple{\nu(x_1), \ldots, \nu(x_d)}$. An arithmetic formula
$\mathcal{R}(\vec{x}, \vec{y}')$ defining a relation \(R\subseteq
\zed^{\vec{x}}\times \zed^{\vec{y}}\) is evaluated with respect to two
valuations $\nu_1 \in \zed^{\vec{x}}$ and $\nu_2 \in \zed^{\vec{y}}$,
by replacing each $x \in \vec{x}$ by $\nu_1(x)$ and each $y' \in
\vec{y}^{\prime}$ by $\nu_2(y)$ in $\mathcal{R}$. The composition of
two relations $R_1 \subseteq \zed^{\vec{x}} \times \zed^{\vec{y}}$ and
$R_2 \subseteq \zed^{\vec{y}} \times \zed^{\vec{z}}$ is denoted by
$R_1 \comp R_2 = \{\tuple{\vec{u}, \vec{v}} \in \zed^{\vec{x}} \times
\zed^{\vec{z}} \mid \exists \vec{t}\in\zed^{\vec{y}} \ldotp
\tuple{\vec{u}, \vec{t}} \in R_1 ~\mbox{and}~ \tuple{\vec{t}, \vec{v}}
\in R_2\}$. We denote $\vec{y} \subseteq \vec{x}$ if $\vec{y} =
\tuple{x_{i_1}, \ldots, x_{i_\ell}}$, for a sequence of indices $1
\leq i_1 < \ldots < i_\ell \leq d$ of $\vec{x}$. For a valuation $\nu
\in \zed^{\vec{x}}$ and a tuple $\vec{y}\subseteq\vec{x}$, we denote
by $\nu\proj_\vec{y} \in \zed^{\vec{y}}$ the projection of $\nu$ onto
variables $\vec{y}$, i.e.\ $\nu\proj_\tuple{y_1, \ldots, y_k} =
\tuple{\nu(y_1), \ldots, \nu(y_k)}$. Finally, given two valuations $I,
O \in \zed^{\vec{x}}$, we denote by $I \cdot O$ the valuation
$I(\vec{x}) \cdot O(\vec{x})$, and we define
$\zed^{\vec{x}\times\vec{x}} = \{ I\cdot O ~|~ I, O \in \zed^\vec{x}
\}$.

\subsection{Parikh Images}
Let $\Theta = \{\theta_1, \ldots, \theta_k\}$ be a linearly ordered
subset of the alphabet \(\Sigma\). For a symbol \(a \in \Sigma\) its
{\em Parikh image} is defined as $Pk_{\Theta}(a) = \vec{e}_i$ if \(a =
\theta_i\), where \(\vec{e}_i\) is the $k$-dimensional vector having
$1$ on the $i$-th position and $0$ everywhere else. Otherwise, if \(a
\in \Sigma \setminus \Theta\), let \(Pk_{\Theta}(a) = \vec{0}\) where
\(\vec{0}\) is the \(k\)-dimensional vector with \(0\) everywhere. For
a word $w\in \Sigma^*$ of length \(n\), we define $Pk_{\Theta}(w) =
\sum_{i=1}^n Pk_{\Theta}((w)_i)$.%
\footnote{\label{foot:emptysum}We adopt the convention that the empty
  sum evaluates to \(\vec{0}\).} Furthermore, let $Pk_{\Theta}(L) = \{
Pk_{\Theta}(w) \mid w \in L \}$ for any language $L \subseteq
\Sigma^*$.

\subsection{Labelled Graphs}
In this paper we use of the notion of {\em labelled graph} $\mathcal{G} =
\tuple{Q, \mathcal{L}, \delta}$, where $Q$ is a finite set of vertices,
\(\mathcal{L}\) is a set of labels whose elements label edges as defined by the
edge relation $\delta \subseteq Q \times S \times Q$. We denote by $q
\arrow{\ell}{} q'$ the fact that $(q,\ell,q') \in \delta$. A \emph{path}
\(\pi\) in $\mathcal{G}$ is an alternating sequence of vertices and edges whose
endpoints are vertices. Sometimes, \(\pi\) is conveniently written as $q_0
\arrow{\ell_1}{} q_1 \arrow{\ell_2}{} \ldots q_{n-1} \arrow{\ell_{n}}{} q_n$ and
further abbreviated \(q_0 \arrow{w}{} q_n\) where \(w=\ell_1\ldots\ell_n\). 

 %
%
%

\section{Integer Recursive Programs}\label{sec:int:programs}

We consider in the following that programs are collections of
procedures calling each other, possibly according to recursive
schemes. Formally, an \emph{integer program} is an indexed tuple
$\mathcal{P} = \langle P_1,\ldots,P_n \rangle$, where $P_1,\ldots,P_n$
are \emph{procedures}. Each procedure is a tuple $P_i = \langle
\vec{x}_i, \vec{x}^{in}_i, \vec{x}^{out}_i, S_i, q^{\mathit{init}}_i,
F_i, \Delta_i \rangle$, where $\vec{x}_i$ are the \emph{local}
variables\footnote{Observe that there are no global variables in the
  definition of integer program. Those can be encoded as input and
  output variables to each procedure.} of $P_i$ ($\vec{x}_i \cap
\vec{x}_j = \emptyset$ for all $i \neq j$), $\vec{x}^{in}_i,
\vec{x}^{out}_i \subseteq \vec{x}_i$ are the tuples of input and
output variables, $S_i$ are the \emph{control states} of $P_i$ ($S_i
\cap S_j = \emptyset$, for all $i \neq j$), $q^{\mathit{init}}_i \in
S_i \setminus F_i$ is the {\em initial}, and $F_i \subseteq S_i$ ($F_i
\neq \emptyset$) are the {\em final} states of $P_i$, and $\Delta_i$
is a set of \emph{transitions} of one of the following forms:

\begin{itemize}
\item $q \arrow{\mathcal{R}(\vec{x}_i,\vec{x}'_i)}{} q'$ is an
  \emph{internal transition}, where $q,q' \in S_i$, and
  $\mathcal{R}(\vec{x}_i,\vec{x}'_i)$ is a Presburger arithmetic
  relation involving only the local variables of $P_i$;

\item $q \arrow{\vec{z}' = P_j(\vec{u})}{} q'$ is a \emph{call}, where
  $q,q' \in S_i$, $P_j$ is the callee, $\vec{u}$ are linear terms over
  $\vec{x}_i$, $\vec{z}\subseteq\vec{x}_i$ are variables, such that
  $\len{\vec{u}}=\len{\vec{x}^{in}_j}$ and
  $\len{\vec{z}}=\len{\vec{x}^{out}_j}$. The call is said to be {\em
  terminal} if $q' \in F_i$. It is well-known that terminal calls can
  be replaced by internal transitions.
\end{itemize}
%
%
The {\em call graph} of a program $\mathcal{P} = \langle
P_1,\ldots,P_n \rangle$ is a directed graph with vertices
$P_1,\ldots,P_n$ and an edge $(P_i, P_j)$, for each $P_i$ and $P_j$,
such that $P_i$ has a call to $P_j$. A program is {\em recursive} if
its call graph has at least one cycle, and {\em non-recursive} if its
call graph is a dag.

In the rest of this paper, we denote by
\(\fin{\mathcal{P}}=\bigcup_{i=1}^n F_i\) the set of final states of
the program $\mathcal{P}$, by \(\nf{P_i}\) the set \(S_i\setminus
F_i\) of non-final states of \(P_i\), and by
\(\nf{\mathcal{P}}=\bigcup_{i=1}^n \nf{\mathcal{P}} \) be the set of
non-final states of $\mathcal{P}$.

\subsection{Simplified syntax} 

To ease the description of programs defined in this paper, we use a
simplified, human readable, imperative language such that each
procedure of the program conforms to the following grammar:%
\footnote{Our simplified syntax does not seek to capture the
  generality of integer programs. Instead, our goal is to give a
  convenient notation for the programs given in
  this paper and only those.}
\begin{align*}
P &::= \mathbf{proc}\ P_i ( \mathit{id}^* ) ~\mathbf{begin}~\mathbf{var}~\mathit{id}^*~S_0;\, S~\mathbf{end} \\
\begin{split}
S_0 &::= \mathbf{assume}\ f \mid \mathbf{goto}\ \ell^+ \mid \mathbf{havoc}\ \mathit{id}^+ \mid id \leftarrow t\\
S   &::= S_0 \mid S;\, S \mid id \leftarrow P_i(t^*);\, S_0 \mid P_i(t^*);\, S_0 \mid \mathbf{return}\ \mathit{id}
\end{split}
\end{align*}
The local variables occurring in \(P\) are denoted by \(id\), linear
terms by \(t\), Presburger formulae by \(f\), and control labels by
\(\ell\). Each procedure consists in local declarations followed by a
sequence of statements. Statements may carry a label. Program
statements can be either {\bf assume} statements\footnote{{\bf assume}
  $\phi$ is executable if and only if the current values of the
  variables satisfy the Presburger formula \(\phi\).}, assignments,
procedure calls (possibly with a return value), return to the caller
(possibly with a value), non-deterministic jumps {\bf goto} \(\ell_1
~\mbox{\bf or}~ \ldots ~\mbox{\bf or}~ \ell_n\), and {\bf havoc}
\(x_1,x_2,\ldots,x_n\) statements\footnote{{\bf havoc} assigns non
  deterministically chosen integers to \(x_1,x_2,\ldots,x_n\).}. In
order to simplify the upcoming technical developments, we forbid empty
procedures, procedures starting with a call or a return, i.e.\ each
procedure must start with a statement generated by the $S_0$
nonterminal. We consider the usual syntactic requirements (used
variables must be declared, jumps are well defined, no jumps outside
procedures, etc.). We do not define them, it suffices to know that all
simplified programs in this paper comply with the requirements.  A
program using the simplified syntax can be easily translated into the
formal syntax (Fig.~\ref{fig:program}).

\begin{example}\label{ex:prg}
Figure~\ref{fig:program} shows a program in our simplified imperative
language and its corresponding integer program \(\mathcal{P}\).
Formally, $\mathcal{P} = \langle P \rangle$, where \(P\) is the only
procedure in the program, defined as:
\begin{multline*}
P = \langle \{ x,z \}, \{ x \}, \{ z \},
  \{q_1^{\mathit{init}},q_2,q_3,\varepsilon\},
  q_1^{\mathit{init}}, \{\varepsilon\}, \{t_1, t_2, t_3, t_4\} \rangle
\end{multline*}
Since $P$ calls itself once (within the call transition
$t_2$), this program is recursive.\hfill \(\blacksquare\)%
\end{example}

\subsection{Semantics} 

We are interested in computing the {\em summary relation} between the
values of the input and output variables of a procedure. To this end,
we give the semantics of a program $\mathcal{P} = \langle
P_1,\ldots,P_n \rangle$ as a tuple of relations, denoted $\sem{q}$ in
the following, describing, for each non-final control state $q \in
\nf{P_i}$ of a procedure $P_i$, the effect of the program when started
in $q$ upon reaching a state in $F_i$. The summary of a procedure
$P_i$ is the relation corresponding to its unique initial state,
i.e.\ $\sem{q_i^{\mathit{init}}}$.

An {\em interprocedurally valid path} is represented by a tagged word
over an alphabet $\widehat{\Theta}$, which maps each internal
transition $t$ to a symbol $\tau$, and each call transition \(t\) to a
pair of symbols $\calls \tau, \tau \rets \in \widehat{\Theta}$. In the
sequel, we denote by $Q$ the nonterminal corresponding to the control
state $q$, and by \(\tau\in \Theta\) the alphabet symbol corresponding
to the transition \(t\) of \(\mathcal{P}\). Formally, we associate
$\mathcal{P}$ a visibly pushdown grammar, denoted in the rest of the
paper by $G_{\mathcal{P}} = \tuple{\Vars, \widehat{\Theta}, \prod}$,
such that $Q \in \Vars$ if and only if $q \in \nf{\mathcal{P}}$
and:
\begin{compactitem}
\item[(a)] $Q \rightarrow \tau \in \prod$ if and only if $t \colon q
  \arrow{\mathcal{R}}{} q'$ and $q' \in \fin{\mathcal{P}}$

\item[(b)] $Q \rightarrow \tau ~ Q' \in \prod$ if and only if $t
  \colon q \arrow{\mathcal{R}}{} q'$ and $q' \in \nf{\mathcal{P}}$

\item[(c)] $Q \rightarrow \calls\tau ~ Q_j^{\mathit{init}} ~ \tau\rets
  ~ Q' \in \prod$ if and only if $t\colon q \arrow{\vec{z}' =
  P_j(\vec{u})}{} q'$.
\end{compactitem}
It is easily seen that interprocedurally valid paths in \(\mathcal{P}\) and
tagged words in \(G_{\mathcal{P}}\) are in one-to-one correspondence.  In fact,
each interprocedurally valid path of \(\mathcal{P}\) between state $q \in
\nf{P_i}$ and a state of $F_i$, where $1\leq i\leq n$, corresponds exactly to
one tagged word of $L_{Q}(G_{\mathcal{P}})$. 

\begin{example}\label{ex:vpl} (contd. from Ex.~\ref{ex:prg}) 
  The visibly pushdown grammar \(G_{\mathcal{P}}\) corresponding to
  \(\mathcal{P}\) is given in Fig. \ref{fig:program} (c). In the
  following, we use superscripts $a,b,c$ to distinguish productions of
  the form ($a$) \(Q\rightarrow \tau\), ($b$) \(Q\rightarrow \tau\,
  Q'\) or ($c$) \(Q\rightarrow \calls \tau \,Q_j^{\mathit{init}}\,
  \tau \rets\, Q'\), respectively.  The language
  \(L_{Q_1^{\mathit{init}}}(G_{\mathcal{P}})\) generated by
  $G_{\mathcal{P}}$ starting with $Q_1^{\mathit{init}}$ contains the
  word
  \(w=\tau_1\calls\tau_2\tau_1\calls\tau_2\tau_4\tau_2\rets\tau_3\tau_2\rets\tau_3\),
  of which
  \(w\_nw(w)=(\tau_1\tau_2\tau_1\tau_2\tau_4\tau_2\tau_3\tau_2\tau_3,\set{2
    \leadsto 8, 4 \leadsto 6})\) is the corresponding nested word.
  The word \(w\) corresponds to an interprocedurally valid path where
  \(P\) calls itself twice. The control words
  \(\gamma_1=p_1^bp_2^cp_1^bp_2^cp_4^ap_3^ap_3^a\) and
  \(\gamma_2=p_1^bp_2^cp_3^ap_1^bp_2^cp_4^ap_3^a\) both produce $w$ in
  this case, i.e.\ \(Q_1^{\mathit{init}}
  \stackrel{\gamma_1}\Longrightarrow w\) and \(Q_1^{\mathit{init}}
  \stackrel{\gamma_2}\Longrightarrow w\).  \hfill \(\blacksquare\)%
\end{example}

The semantics of a program is the union of the semantics of the nested
words corresponding to its executions, each of which being a relation
over input and output variables. To define the semantics of a nested
word, we first associate to each $\tau \in \widehat{\Theta}$ an
integer relation $\rho_\tau$, defined as follows:
\begin{compactitem}
\item for an internal transition $t \colon q \arrow{\mathcal{R}}{} q'
  \in \Delta_i$, we define $\rho_{\tau} \equiv
  \mathcal{R}(\vec{x}_i,\vec{x}'_i) \subseteq \zed^{\vec{x}_i} \times
  \zed^{\vec{x}_i}$;

\item for a call transition $t : q \arrow{\vec{z}' = P_j(\vec{u})}{}
  q' \in \Delta_i$, we define a \emph{call relation} $\rho_{\calls
    \tau} \equiv ({\vec{x}^{in}_j}^\prime = \vec{u}) \subseteq
  \zed^{\vec{x}_i} \times \zed^{\vec{x}_j}$, a \emph{return relation}
  $\rho_{\tau \rets} \equiv ({\vec{z}}^\prime = \vec{x}^{out}_j)
  \subseteq \zed^{\vec{x}_j} \times \zed^{\vec{x}_i}$ and a
  \emph{frame relation} $\phi_{\tau}\equiv
  \textstyle{\bigwedge_{x\in\vec{x}_i\setminus\vec{z}}} x'=x \subseteq
  \zed^{\vec{x}_i} \times \zed^{\vec{x}_i}$. Intuitively, the frame
  relation copies the values of all local variables, that are not
  involved in the call as return value receivers ($\vec{z}$), across
  the call.
\end{compactitem}
We define the semantics of the program $\mathcal{P} = \langle
P_1,\ldots,P_n \rangle$ in a top-down manner. Assuming a fixed
ordering of the non-final states in the program,
i.e. $\nf{\mathcal{P}} = \langle q_1,\ldots,q_m \rangle$, the
semantics of the program $\mathcal{P}$, denoted \(\sem{\mathcal{P}}\),
is the tuple of relations $\langle \sem{q_1}, \ldots, \sem{q_m}
\rangle$.  For each non-final control state $q \in \nf{P_i}$ where
\(1\leq i\leq n\), we denote by $\sem{q} \subseteq \zed^{\vec{x}_{i}}
  \times \zed^{\vec{x}_i}$ the relation (over the local variables of
procedure $P_i$) defined as \(\sem{q} = \bigcup_{\alpha \in
  L_{Q}(G_{\mathcal{P}})} \sem{\alpha}\).

It remains to define \(\sem{\alpha}\), the semantics of the tagged
word (or equivalently interprocedural valid path) \(\alpha\). Out of convenience, we define the semantics of its
corresponding nested word \(w\_nw(\alpha)=(\theta,\mathord{\leadsto})\)
over alphabet \(\Theta\), and define $\sem{\alpha} =
\sem{w\_nw(\alpha)}$. For a nesting relation \(\mathord{\leadsto}
\subseteq \{1, \ldots, \len{\theta}\} \times \{1, \ldots,
\len{\theta}\}\), we define \(\mathord{\leadsto}_{i,j} =
\{(s-(i{-}1),t-(i{-}1)) \mid (s,t) \in \mathord{\leadsto} \cap
\set{i,\ldots,j}\times \set{i,\ldots,j}\}\), for some $i,j \in \{1,
\ldots, \ell\}$, $i < j$. Finally, we define \(\sem{(\theta,
  \mathord{\leadsto})} \subseteq \zed^{\vec{x}_i} \times
\zed^{\vec{x}_i}\) as follows:
\[\begin{cases}
  \rho_{(\theta)_1} &\text{if } \len{\theta} = 1 
  \\
  \rho_{(\theta)_1} \comp \sem{((\theta)_{2\ldots\len{\theta}},\mathord{\leadsto}_{2,\len{\theta}})} & 
	\text{if } \len{\theta} > 1 ,\, 1 \leadsto j \text{ for no } j 
  \\
	\mathit{CaRet}^j_\theta\comp \sem{((\theta)_{{j+1} \ldots \len{\theta}},\mathord{\leadsto}_{j+1,\len{\theta}})} & 
	\text{if } \len{\theta} > 1 ,\, 1 \leadsto j \text{ for a }j
\end{cases}\]
where, in the last case, which corresponds to call transition \(t \in
\Delta_i\), we have \((\theta)_1=(\theta)_j=\tau\) and define
\(\mathit{CaRet}^j_\theta= \bigl(\rho_{\calls\tau} \comp \sem{(\theta)_{2\ldots{j-1}}, \mathord{\leadsto}_{2,j-1})} \comp \rho_{\tau\rets}\bigr) \cap \phi_{\tau}\).

\begin{example} (contd. from Ex.~\ref{ex:vpl})
The semantics of a given the nested word \(
\theta=(\tau_1\tau_2\tau_1\tau_2\tau_4\tau_2\tau_3\tau_2\tau_3,\set{2
  \leadsto 8, 4 \leadsto 6})\) is a relation between valuations of
$\{x,z\}$, given by:
\[\begin{array}{rcl}
\sem{\theta} & = & \rho_{\tau_1} \comp \bigl((\rho_{\calls\tau_2} \comp \rho_{\tau_1} 
\comp \bigl((\rho_{\calls\tau_2} \comp \rho_{\tau_4} \comp \rho_{\tau_2\rets}) \cap \phi_{\tau_2}\bigr) \\
&&  \comp \rho_{\tau_3} \comp \rho_{\tau_2\rets}) \cap \phi_{\tau_2}\bigr) \comp \rho_{\tau_3}
\end{array}\] 
One can verify that $\sem{\theta} \equiv x=2 \wedge z'=4$, i.e. the
result of calling $P$ with input valuation $x=2$ is an output
valuation $z=4$.\hfill \(\blacksquare\)%
\end{example}

Finally, we introduce a few useful notations.  An interprocedural
valid path \(\alpha\) is said to be \emph{feasible} whenever
\(\sem{\alpha}\neq\emptyset\). We denote by $\sem{\mathcal{P}}_{q}$
the component of $\sem{\mathcal{P}}$ corresponding to \(q\in
\nf{\mathcal{P}}\). Notice that $\sem{\mathcal{P}}_q \in
\zed^{\vec{x}_i} \times \zed^{\vec{x}_i}$, i.e.\ is a relation over
the valuations of the local variables of the procedure $P_i$ if $q$ is
a state of $P_i$, i.e.\ $q \in S_i$.  Slightly abusing notations, we
define \(L_{P_i}(G_{\mathcal{P}})\) as
\(L_{Q_i^{\mathit{init}}}(G_{\mathcal{P}})\) and
\(\sem{\mathcal{P}}_{P_i}\) as
\(\sem{\mathcal{P}}_{q^{\mathit{init}}_i}\).  Clearly we have that
$\sem{\mathcal{P}}_{P_i} \subseteq \zed^{\vec{x}_i} \times
\zed^{\vec{x}_i}$.

\subsection{A Semantics of Depth-First Derivations}

We present an alternative, but equivalent, program semantics, using
derivations of visibly pushdown program grammars, instead of the
generated (nested) words. This semantics brings us closer to the
notion of under-approximation defined in the next section. 

We start by defining {\em depth-first derivations}, that have the
following informal property: if \(X\) and \(Y\) are two nonterminals
produced by the application of one rule, then the steps corresponding
to a full derivation of the form \(X \xRightarrow{}^* u\) will be
applied {\em without interleaving} with the steps corresponding to a
derivation of the form \(Y \xRightarrow{}^* v\). In other words, once
the derivation of \(X\) has started, it will be finished before the
derivation of \(Y\) begins.

For an integer tuple $\vec{\alpha} = \tuple{\alpha_1, \ldots,
  \alpha_n}$, we denote by $\| \vec{\alpha} \|_{\max} = \max_{i=1}^n
\alpha_i$. For a set of symbols $S \subseteq \Vars \cup \Sigma$, and a
set of positive integers $I \subseteq \nat$, we define $S^I =
\{x^{\tuple{i}} \mid x\in S,\, i \in I \}$.  Given a word \(w\in
(\Vars\cup\Sigma)^*\) of length \(n\geq 0\), and a \(n\)-dimensional
vector $\vec{\alpha} = \tuple{ \alpha_1, \ldots, \alpha_n } \in
\nats^{n}$, we define \(w^{\vec{\alpha}}\) as the
\emph{birthdate-annotated word} (bd-word) \(
     {(w)_1}^{\tuple{\alpha_1}}\ldots {(w)_n}^{\tuple{\alpha_n}}\)
     over the alphabet \((\Vars \cup \Sigma)^{\nats}\). We denote
		 \(w^{ {\langle}{\langle} {c} {\rangle}{\rangle}} = w^{\vec{c}}\), where \(c \in \nat\) and
     \(\vec{c} = \tuple{c,\ldots,c}\in\nat^{|w|}\). For instance, \(
     abc^{\tuple{1,2,3}} = a^{\tuple{1}}\, b^{\tuple{2}}\,
		 c^{\tuple{3}} \) and \( abc^{{\langle}{\langle} {2} {\rangle}{\rangle}} = a^{\tuple{2}}
     b^{\tuple{2}} c^{\tuple{2}}\).

Let $G = \tuple{\Vars,\Sigma,\prod}$ be a grammar and \(u
\xRightarrow{(Z,w)/j} v\) be a step, for some production $(Z,w) \in
\prod$ and \( 1\leq j\leq |u|\). If \(\alpha\in\nat^{|u|}\) is a
vector of birthdates, the corresponding \emph{birthdate-annotated
  step} (bd-step) is defined as follows: \(u^{\vec{\alpha}}
\xRightarrow{(Z,w)/j} v^{\vec{\beta}}\) if and only if \(
(u^{\vec{\alpha}})_j = Z^{\tuple{i}} \) and \(v^{\vec{\beta}} =
(u^{\vec{\alpha}})_1 \cdots (u^{\vec{\alpha}})_{j-1} \cdot w^{{\langle}{\langle}
{\| \vec{\alpha} \|_{\max}+1} {\rangle}{\rangle} } \cdot
(u^{\vec{\alpha}})_{j+1}\cdots (u^{\vec{\alpha}})_{|u|} \).

\begin{example}\label{ex:ba-derivations}
  Consider the grammar $G = \tuple{\{X,Y,Z\},\, \{a,b\}, \prod}$ with
  rules $\prod = \{X \rightarrow YZ,\ Y \rightarrow aY \mid \varepsilon,\ Z
  \rightarrow Zb \mid \varepsilon\}$. Then $X^{\tuple{0}}
  \xRightarrow{(X,YZ)} Y^{\tuple{1}}Z^{\tuple{1}} \xRightarrow{(Y,aY)}
  a^{\tuple{2}}Y^{\tuple{2}}Z^{\tuple{1}} \xRightarrow{(Z,Zb)}
  a^{\tuple{2}}Y^{\tuple{2}}Z^{\tuple{3}}b^{\tuple{3}} \xRightarrow{(Y,\varepsilon)}
  a^{\tuple{2}}Z^{\tuple{3}}b^{\tuple{3}} \xRightarrow{(Z,\varepsilon)}
  a^{\tuple{2}}b^{\tuple{3}}$ and \linebreak$X^{\tuple{0}} \xRightarrow{(X,YZ)}
  Y^{\tuple{1}}Z^{\tuple{1}} \xRightarrow{(Y,aY)}
  a^{\tuple{2}}Y^{\tuple{2}}Z^{\tuple{1}} \xRightarrow{(Y,\varepsilon)}
  a^{\tuple{2}}Z^{\tuple{1}} \xRightarrow{(Z,Zb)}
  a^{\tuple{2}}Z^{\tuple{3}}b^{\tuple{3}} \xRightarrow{(Z,\varepsilon)}
  a^{\tuple{2}}b^{\tuple{3}}$ are birthdate-annotated step sequences. \(\blacksquare\)
\end{example}

A birthdate annotated step is further said to be \emph{depth-first}
whenever, in the above definition of a bd-step, we have, moreover,
that \(i\) is the most recent birthdate among the nonterminals of
\(u\) , i.e.\ \(i = \max \set{j \mid
  Pk_{\Vars^{\{j\}}}(u^{\vec{\alpha}})\neq\vec{0}}\).  We write this
fact as follows \( u^{\vec{\alpha}} \xRightarrow[\textbf{df}]{}
v^{\vec{\beta}}\). A birthdate annotated step sequence is said to be
depth-first if all of its steps are depth-first.
Finally, a step sequence \(w_0 \xRightarrow{(\gamma)_1/j_1} w_1 \ldots w_{n-1} \xRightarrow{(\gamma)_{n}/j_{n}} w_n\) for some control word \(\gamma\) is said to be depth-first,
written \(w_0 \xRightarrow[\mathbf{df}]{\gamma} w_n\), if there exist vectors
\(\vec{\alpha}_1 \in \nats^{\card{Pk_\Vars(w_1)}},\ldots,\vec{\alpha}_n \in \nats^{\card{Pk_\Vars(w_n)}}\) 
such that  \(w_0^{ {\langle}{\langle} 0 {\rangle}{\rangle} } \xRightarrow[\mathbf{df}]{(\gamma)_1/j_1} w_1^{\vec{\alpha}_1} \ldots w_{n-1}^{\vec{\alpha}_{n-1}} \xRightarrow[\mathbf{df}]{(\gamma)_{n}/j_{n}} w_{n}^{\vec{\alpha}_{n}}\) holds.

\begin{example}\label{ex:depth-first}(contd. from Ex. \ref{ex:ba-derivations})
  Consider the grammar $G$ from Example \ref{ex:ba-derivations}. Then
  $X \xRightarrow{(X,YZ)} YZ \xRightarrow{(Y,aY)} aYZ
  \xRightarrow{(Z,Zb)} aYZb \xRightarrow{(Y,\epsilon)} aZb
  \xRightarrow{(Z,\epsilon)} ab$ is not a depth-first derivation,
  whereas $X \xRightarrow{(X,YZ)} YZ \xRightarrow{(Y,aY)} aYZ
  \xRightarrow{(Y,\epsilon)} aZ \xRightarrow{(Z,Zb)} aZb
  \xRightarrow{(Z,\epsilon)} ab$ is a depth-first derivation. \(\blacksquare\)
\end{example}

Since we are dealing with visibly pushdown grammars $G_{\mathcal{P}} =
\tuple{\Vars, \widehat{\Theta}, \prod}$ corresponding to programs
$\mathcal{P}$, for every production $Q \arrow{}{} \calls \tau
Q_j^{init} \tau\rets Q' \in \prod$ we have $Q_j^{init} \neq Q'$. 
Hence, we can assume wlog that for all productions \(p\in\prod\), all nonterminals occurring in
\(\mathit{tail}(p)\) are distinct (e.g.\ \(X\rightarrow Z\, Z\) is not allowed).
As we show next, under that assumption, a control word uniquely identifies a
depth-first derivation:
\begin{lemma}\label{lemma:unique-df-control-word}
  Let $G_{\mathcal{P}} = \tuple{\Vars, \widehat{\Theta}, \prod}$ be a
  visibly pushdown grammar corresponding to a program $\mathcal{P}$,
  $Q \in \Vars$ be a nonterminal, $Q \xRightarrow[\textbf{df}]{\gamma}
  u$ and $Q \xRightarrow[\textbf{df}]{\gamma} v$ be two
  depth-first derivations of $G_{\mathcal{P}}$. Then they differ in no step, hence $u = v$.
\end{lemma}
\begin{proof}
By contradiction, suppose that there exists a step that differs in the two derivations
from \(Q\) with control word \(\gamma\in\prod^*\). Thus, there exists an integer \(i\), \(1\leq i < \len{\gamma}\), such that \(Q =
w_0 \stackrel{(\gamma)_1}{\Longrightarrow} w_1 \cdots w_{i-1}
\stackrel{(\gamma)_i}{\Longrightarrow} w_i \) and 
\(w_i\) contains two
occurrences of the nonterminal \(\mathit{head}( (\gamma)_{i+1} )\), that is,
there exists \(p_1 \neq p_2\) \( (w_i)_{p_1} = (w_i)_{p_2} = \mathit{head}( (\gamma)_{i+1} )\). Two cases arise:
\begin{compactenum}
\item \((w_i)_{p_1}\) and \( (w_i)_{p_2}\) result from the occurrence of some \( (\gamma)_j \)
  with \(j\leq i\) which contradicts that all nonterminals occurring in
  \(\mathit{tail}( (\gamma)_j)\) are distinct.
\item \((w_i)_{p_1}\) and \( (w_i)_{p_2}\) result from the occurrence of \( (\gamma)_k\) and
  \( (\gamma)_l\) with \(k\neq l\) respectively. Hence in the bd-step sequence thereof, their birthdate necessarily differ.
	Therefore there is only one occurence of \(\mathit{head}( (\gamma)_{i+1} )\) with the most recent birthdate which contradicts the
  existence of two distinct depth-first derivations.
\end{compactenum}%
\qed\end{proof}

Consequently, in a visibly pushdown grammar corresponding to a
program, a control word uniquely determines a step sequence, and,
moreover, if this step sequence is a derivation, the control word
determines the word produced by it. This remark leads to the
definition of an alternative semantics of programs, based on control
words, instead of produced words. To this end, for each non-final
control location $q \in \nf{P_i}$, of a program $\mathcal{P} =
\tuple{P_1, \ldots, P_n}$, where $1 \leq i \leq n$, we define the
semantics of a control word $\gamma$ that induces a depth-first
derivation $Q \xRightarrow[\textbf{df}]{\gamma} w$ of the grammar
$G_{\mathcal{P}} = \tuple{\Vars, \widehat{\Theta}, \prod}$, as a set
$\sem{\gamma} \subseteq \zed^{\vec{x}} \times \zed^{\vec{x}}$, where
$\vec{x} = \vec{x}_1 \cdot \ldots \cdot \vec{x}_n$ is the set of
variables in $\mathcal{P}$. The definition of $\sem{\gamma}$ is by
induction on the structure of $\gamma$:
\begin{compactenum}[(a)]
\item if $\gamma = Q \arrow{}{} \tau$ then $\sem{\gamma} = \set{I
  \cdot O \mid \tuple {I\proj_{\vec{x}_i}, O\proj_{\vec{x}_i}} \in
  \rho_\tau}$, where $Q\in\Vars$ corresponds to $q \in \nf{P_i}$;
\item if $\gamma = (Q \arrow{}{} \tau Q') \cdot \gamma'$ then
  \[\sem{\gamma} = \set{I \cdot O \mid \exists J ~.~
  \tuple{I\proj_{\vec{x}_i}, J\proj_{\vec{x}_i}} \in \rho_\tau
  ~\mbox{and}~ J \cdot O \in \sem{\gamma'}}\] where $Q,Q' \in \Vars$
  correspond to $q,q' \in \nf{P_i}$;
\item if $\gamma = (Q \arrow{}{} \calls\tau Q_j^{init} \tau\rets Q')
	\cdot \gamma'$ then \(\sem{\gamma}\) is given by  
\begin{multline*}
\{I \cdot O  \mid  \exists J,K,L\in\zed^{\vec{x}} \ldotp
\tuple{I\proj_{\vec{x}_i}, J\proj_{\vec{x}_j}} \in \rho_{\calls\tau},\, J \cdot K \in \sem{\gamma_1}, \\
\tuple{K\proj_{\vec{x}_j},L\proj_{\vec{x}_i}} \in \rho_{\tau\rets},\, \tuple{I \proj_{\vec{x}_i}, 
L \proj_{\vec{x}_i}} \in \phi_\tau,\, L \cdot O \in \sem{\gamma_2}\}\enspace ,
\end{multline*}
	where $Q_j^{init}, Q' \in \Vars$ correspond to $q_j^{init}$
        (the initial control location of $P_j$), $q' \in \nf{P_i}$,
        and $Q_j^{init} \xRightarrow[\textbf{df}]{\gamma_1} w_1$, $Q'
        \xRightarrow[\textbf{df}]{\gamma_2} w_2$, $\gamma'=
        \gamma_1\gamma_2$, respectively; since $\gamma$ is the control
        word of a depth-first derivation, the derivations of
        $Q_j^{init}$ and $Q'$ are unique, and will not interleave with
        each other.
\end{compactenum}
The following lemma proves the equivalence of the semantics of a (tagged) word
generated by a visibly pushdown grammar and that of a control word that
produces it.

\begin{lemma}\label{lemma:sem-equiv}
  Let $G_{\mathcal{P}} = \tuple{\Vars, \widehat{\Theta}, \prod}$ be a
  visibly pushdown grammar for a program $\mathcal{P} = \tuple{P_1,
    \ldots, P_n}$, $\vec{x} = \vec{x}_1 \cdot \ldots \cdot \vec{x}_n$
  be the concatenation of all tuples of local variables in
  $\mathcal{P}$, $Q \in \Vars$ be a nonterminal corresponding to a
  non-final control location $q \in \nf{P_i}$, and $Q
  \xRightarrow[\textbf{df}]{\gamma} \alpha$ be a depth-first
  derivation of $G_{\mathcal{P}}$, where $\alpha \in
  \widehat{\Theta}^*$ and $\gamma \in \prod^*$. Then, we have:
  \[\sem{\gamma} = \set{I \cdot O \in \zed^{\vec{x} \times \vec{x}}
    \mid \tuple{I\proj_{\vec{x}_i}, O\proj_{\vec{x}_i}} \in
    \sem{\alpha}} \enspace.\]
\end{lemma}
\begin{proof}
By induction on $\len{\gamma} > 0$. If $\len{\gamma} = 1$,
i.e.\ $\gamma = Q \arrow{}{} \tau$, we have $\alpha = \tau$, hence
$\sem{\alpha} = \sem{w\_nw(\alpha)} = \rho_\tau$ and the equality
follows trivially. If $\len{\gamma} > 1$, let $\gamma = p \cdot
\gamma'$, for some $p \in \prod$ and some $\gamma' \in \prod^*$. We
distinguish two cases, based on the type of $p$:
\begin{compactitem}
\item $p = Q \arrow{}{} \tau\, Q'$: in this case $\alpha = \tau \cdot \beta$
  and $Q' \xRightarrow[\textbf{df}]{\gamma'} \beta$ is a depth-first
  derivation of $G_{\mathcal{P}}$. By the induction hypothesis, since
  $\len{\gamma'} < \len{\gamma}$, we have $\sem{\gamma'} = \set{J
    \cdot O \mid \tuple{J\proj_{\vec{x}_i}, O\proj_{\vec{x}_i}} \in
    \sem{\beta}}$.
  \[\begin{array}{rcl} 
  \sem{\gamma} & = & \{I \cdot O \mid \exists J ~.~ 
    \tuple{I\proj_{\vec{x}_i}, J\proj_{\vec{x}_i}} \in \rho_\tau ~\mbox{and}~ 
    \tuple{J\proj_{\vec{x}_i}, O\proj_{\vec{x}_i}} \in \sem{\beta}\} \\ 
  & = & \{I \cdot O \mid \tuple{I\proj_{\vec{x}_i},
      O\proj_{\vec{x}_i}} \in \sem{w\_nw(\alpha)}\} \\
  & = & \{I \cdot O \mid \tuple{I\proj_{\vec{x}_i},
      O\proj_{\vec{x}_i}} \in \sem{\alpha}\} 
  \end{array}\]

\item $p = Q \arrow{}{} \calls\tau\, Q_j^{init}\, \tau\rets\, Q'$: in this
  case $\alpha = \calls\tau\, \beta_1\, \tau\rets\, \beta_2$ and
  $G_{\mathcal{P}}$ has depth-first derivations $Q_j^{init}
  \xRightarrow[\textbf{df}]{\gamma_1} \beta_1$ and $Q'
  \xRightarrow[\textbf{df}]{\gamma_2} \beta_2$. We have two
  symmetrical cases: either $\gamma' = \gamma_1\gamma_2$ or
  $\gamma' =\gamma_2\gamma_1$. We consider the first case in the
  following:
  \[\begin{array}{rcll}
  \sem{\gamma} & = & \{I \cdot O \mid & \exists
	J,K,L\in\zed^{\vec{x}} ~.~
        \tuple{I\proj_{\vec{x}_i}, J\proj_{\vec{x}_j}} \in \rho_{\calls\tau}, \\ 
    &&& J \cdot K \in \sem{\gamma_1},~ \tuple{K\proj_{\vec{x}_j},~ 
      L\proj_{\vec{x}_i}} \in \rho_{\tau\rets}, \\
    &&& \tuple{I \proj_{\vec{x}_i}, L \proj_{\vec{x}_i}} \in \phi_\tau,~ 
        L \cdot O \in \sem{\gamma_2}\}
  \end{array}\]
\end{compactitem}
We apply the induction hypothesis to $\gamma_1$ and $\gamma_2$, since
$\len{\gamma_1} < \len{\gamma}$ and $\len{\gamma_2} < \len{\gamma}$,
and obtain: 
\[\begin{array}{rcll}
\sem{\gamma} & = & \{I \cdot O \mid & \exists
	J,K,L\in\zed^{\vec{x}} ~.~
        \tuple{I\proj_{\vec{x}_i}, J\proj_{\vec{x}_j}} \in \rho_{\calls\tau}, \\ 
    &&& \tuple{J\proj_{\vec{x}_j}, K\proj_{\vec{x}_j}} \in \sem{\beta_1},~ \tuple{K\proj_{\vec{x}_j},~ 
      L\proj_{\vec{x}_i}} \in \rho_{\tau\rets}, \\
    &&& \tuple{I \proj_{\vec{x}_i}, L \proj_{\vec{x}_i}} \in \phi_\tau,~ 
        \tuple{L\proj_{\vec{x}_i}, O\proj_{\vec{x}_i}} \in \sem{\beta_2}\} \\
& = & \{I \cdot O \mid & \tuple{I\proj_{\vec{x}_i},O\proj_{\vec{x}_i}} \in \sem{w\_nw(\alpha)}\} \\
& = & \{I \cdot O \mid & \tuple{I\proj_{\vec{x}_i},O\proj_{\vec{x}_i}} \in \sem{\alpha}\}
\end{array}\]
\qed
\end{proof}

Consequently, the semantics of a program $\mathcal{P} = \tuple{P_1,
  \ldots, P_n}$ can be equivalently defined considering the sets
  \[ \sem{\mathcal{P}}_q = \{\tuple{I \proj_{\vec{x}_i}, O \proj_{\vec{x}_i}}
	\mid I \cdot O \in \textstyle{\bigcup_{Q \xRightarrow[\textbf{df}]{\gamma} w}}
  \sem{\gamma} \}\enspace ,\]
	for each  non-final state $q \in \nf{P_i}$ of the procedure $P_i$ of $\mathcal{P}$.

 %
%
%

\section{Underapproximating the Program Semantics}\label{sec:bounded-query}

In what follows we define context-free language underapproximations by
filtering out derivations. In particular, in this section, we define a
family of underapproximations of \(\sem{\mathcal{P}}\), called {\em
  bounded-index underapproximations}. Then we show that each
\(k\)-index underapproximation of the semantics of a (possibly
recursive) program \(\mathcal{P}\) coincides with the semantics of a
non-recursive program computable from \(\mathcal{P}\) and \(k\). 

\subsection{Index-bounded derivations}
The central notion of this section are {\em index-bounded
  derivations}, i.e.\ derivations in which each step has a {\em
  limited budget} of nonterminals. This notion is the key to our
underapproximation method.

 For a given integer constant \(k>0\), a word $u \in
(\Sigma \cup \Vars)^*$ is said to be of index $k$, if $u$ contains at
most $k$ occurrences of nonterminals (formally,
\(\len{{u}\proj_{\Vars}} \leq k\) ). A step \(u \Rightarrow v\) is
said to be \(k\)-indexed, denoted \(u \xRightarrow[(k)]{} v \), if and
only if both $u$ and $v$ are of index $k$. As expected, a step
sequence is \(k\)-indexed if all its steps are \(k\)-indexed. For
instance, both derivations from Ex. \ref{ex:depth-first} are of index
2.

\begin{lemma}\label{lemma:leibniz}
For every grammar \(G = \tuple{\Vars, \Sigma, \prod}\) the following
properties hold:
\begin{enumerate}[\upshape(\itshape 1\upshape)]
\item\label{item:derivdecomp1} \(\xRightarrow[(k)]{}^* ~\subseteq~
  \xRightarrow[(k+1)]{}^* \) for all \(k \geq 1\)

\item\label{item:derivdecomp1b} \(\xRightarrow{} ~=~
  \bigcup_{k=1}^\infty \xRightarrow[(k)]{}^*\)

\item\label{item:derivdecomp3} for all \(X,Y \in \Vars\), \(XY
  \xRightarrow[(k)]{}^* w \in \Sigma^*\) if and only if there exist
  \(w_1, w_2 \in \Sigma^*\), such that \(w = w_1 w_2\) and either:
  \begin{inparaenum}[\upshape(\itshape i\upshape)]
  \item \(X \xRightarrow[(k-1)]{}^* w_1 \) and \(Y \xRightarrow[(k)]{}^* w_2\), or 
  \item \(Y \xRightarrow[(k-1)]{}^* w_2\) and \(X \xRightarrow[(k)]{}^* w_1
    \).
  \end{inparaenum}
  \end{enumerate}
\label{lem:derivdecomp}
\end{lemma}
\begin{proof}
The proof of points (\ref{item:derivdecomp1}) and
(\ref{item:derivdecomp1b}) follow immediately from the definition of
$\xRightarrow[(k)]{}^*$. Let us now turn to the proof of point
(\ref{item:derivdecomp3}) (only if). First we define \(w_1\) and
\(w_2\). Consider the step sequence \(XY \xRightarrow[(k)]{}^* w \) and
look at the last step. It must be of the form \( u Z v
\xRightarrow[(k)]{}^* u y v = w\), where $u,v,y \in \Sigma^*$, and one
of the following must hold: \(Z\) has been generated from either \(X\)
or \(Y\). Suppose that \(Z\) stems from \(Y\) (the other case is
treated similarly). In this case, transitively remove from the
step-sequence all the steps transforming the rightmost occurrence of
\(Y\). Hence we obtain a step sequence \(XY \xRightarrow[(k)]{}^* w_1
Y\). Then \(w_2\) is the unique word satisfying \(w=w_1 w_2\). Since
\(XY \xRightarrow[(k)]{}^* w_1 Y\), by removing the occurrence of \(Y\)
in rightmost position at every step, we find that \(X
\xRightarrow[(k-1)]{}^* w_1\), and we are done. Having \(Z\) stemming
from \(X\) yields \(Y \xRightarrow[(k-1)]{}^* w_2\). For the proof of
the other direction (if) assuming \upshape(\itshape i\upshape) (the
other case is similar), it is easily seen that \(XY
\xRightarrow[(k)]{}^* w_1 Y \xRightarrow[(k)]{}^* w_1 w_2\). \qed
\end{proof}

The previous definitions extend naturally to bd-steps and bd-step
sequences, and we define \(\bdwords^{(k)} = \{w^{\vec{\beta}} \in
\big((\Vars \cup \Sigma)^{\nats}\big)^{*} \mid
\len{{w^{\vec{\beta}}}\proj_{\Vars^{\nats}}} \leq k\}\) the set of
bd-words with at most $k$ occurrences of nonterminals. We write the
fact that a bd-step sequence \(u^{\vec{\alpha}} \Rightarrow^*
v^{\vec{\beta}}\) is both \(k\)-indexed and depth-first as \(
u^{\vec{\alpha}} \xRightarrow[\df{k}]{}^* v^{\vec{\beta}}\). For any
symbol $X \in \Vars$ and constant \(k>0\), we define the languages:
\begin{align*}
L^{(k)}_{X}(G) &= \{w \in \Sigma^* \mid X \xRightarrow[(k)]{}^* w\} \\
\Gamma^{\df{k}}(G) &= \{\gamma\in\prod^* \mid \exists u^{\vec{\alpha}},v^{\vec{\beta}} \in \bdwords^{(k)} \colon 
u^{\vec{\alpha}} \xRightarrow[\df{k}]{\gamma} v^{\vec{\beta}}\} \enspace.
\end{align*}

\begin{example}\label{ex:betterthanboundedstack}(contd. from Ex.~\ref{ex:vpl})
Inspecting the grammar \(G_{\mathcal{P}}\) from Ex.\ref{ex:vpl}
reveals that \[L_{Q_1^{\mathit{init}}}(G_{\mathcal{P}}) = \{\left(\tau_1\calls\tau_2\right)^n \tau_4
  \left(\tau_2\rets\tau_3\right)^n \mid n \in \nat \}\enspace.\] For each value of \(n\) we
give a \(2\)-index derivation capturing the word: repeat \(n\) times
the steps 
\begin{align*}
& Q_1^{\mathit{init}} \xRightarrow{p_1^b p_2^c}
\tau_1 \calls\tau_2 Q_1^{\mathit{init}}\tau_2\rets Q_3
\xRightarrow{p_3^a} \tau_1 \calls\tau_2 Q_1^{\mathit{init}}\tau_2\rets \tau_3
	\shortintertext{followed by the step}	
& Q_1^{\mathit{init}}\stackrel{p_4^a}{\Longrightarrow} \tau_4\enspace .
\end{align*}
Therefore the \(2\)-index approximation of \(G_{\mathcal{P}}\) shows that
\(L_{Q_1^{\mathit{init}}}(G_{\mathcal{P}}) = L^{(2)}_{Q_1^{\mathit{init}}}(G_{\mathcal{P}})\).  
$\blacksquare$
\end{example}

\begin{example}(contd. from Ex. \ref{ex:depth-first})
For the grammar $G$ from Ex. \ref{ex:depth-first}, we obtain the
following control sets:
\[\begin{array}{rcl}
\Gamma^{\df{1}} & = & (Y,aY)^*(Y,\varepsilon) \cup (Z,Zb)^*(Z,\varepsilon)
\\ \Gamma^{\df{2}} & = &
(X,YZ)(Y,aY)^*(Y,\varepsilon)(Z,Zb)^*(Z,\varepsilon) \cup \\
&& (X,YZ)(Z,Zb)^*(Z,\varepsilon)(Y,aY)^*(Y,\varepsilon) \cup \Gamma^{\df{1}} \enspace.~ 
\blacksquare
\end{array}\] 
\end{example}

We recall a known result.
\begin{proposition}[\cite{Latteux78}]\label{df-are-complete}
For all $k \geq 1$, \(G=(\Vars,\Sigma,\prod)\) and \(X\in\Vars\), 
we have \(L_X^{(k)}(G)=\hat{L}_X(\Gamma^{\df{k}}, G)\). 
\end{proposition}

Finally, given \(k\geq 1\), we define the \emph{$k$-index semantics}
of $\mathcal{P}$ as $\sem{\mathcal{P}}^{(k)} = \langle
\sem{q_1}^{(k)}, \ldots, \sem{q_m}^{(k)} \rangle$, where
$\nf{\mathcal{P}} = \set{q_1, \ldots, q_m}$ and the $k$-index
semantics of a non-final control state $q \in \nf{P_i}$ of a procedure
$P_i$ of the program $\mathcal{P}$ is the relation $\sem{q} =
\sem{\mathcal{P}}_{q}^{(k)} \subseteq \zed^{\vec{x}_i} \times
\zed^{\vec{x}_i}$, defined as: \[\sem{\mathcal{P}}_q^{(k)} = \{\tuple{I
  \proj_{\vec{x}_i}, O \proj_{\vec{x}_i}} \mid I \cdot O \in
\textstyle{\bigcup_{Q \xRightarrow[\df{k}]{\gamma} w}} \sem{\gamma}
\} \enspace.\]

\subsection{Depth-first index-bounded control sets}
For a bd-word $w^{\vec{\alpha}}$, let
\[\age{w^{\vec{\alpha}}}=Pk_{\Vars^{\{\| \vec{\alpha} \|_{\max}
    \}}}(w^{\vec{\alpha}})\cdot Pk_{\Vars^{\{\| \vec{\alpha}
    \|_{\max}-1 \}}}(w^{\vec{\alpha}})\cdots
Pk_{\Vars^{\{0\}}}(w^{\vec{\alpha}})\enspace .\] Each symbol in
\(\age{w^{\vec{\alpha}}}\) is a \(\card{\Vars}\)-dimensional vector,
that is \(\age{w^{\vec{\alpha}}}\in
(\nats^{\card{\Vars}})^*\). Therefore with a slight abuse, we can view
each of these tuples as a multiset on \(\Vars\). Moreover, each tuple
$Pk_{\Vars^{\{i\}}}(w^{\vec{\alpha}})$ in $\age{w^{\vec{\alpha}}}$ is
the multiset of nonterminals that occur in $w^{\vec{\alpha}}$ with the
same birthdate $0 \leq i \leq \| \vec{\alpha} \|_{\max}$, and the
elements of $\age{w^{\vec{\alpha}}}$ are ordered in the reversed order
of their birthdates. For instance, the first tuple $Pk_{\Vars^{\{\|
    \vec{\alpha} \|_{\max} \}}}(w^{\vec{\alpha}})$ is the multiset of
the most recently added nonterminals. Notice that for each bd-word
\(u\) we have $\age{u} = \vec{0}$ if $u \in
(\Sigma^{\nat})^*$. Finally, let \(\vec{0}\) be the identity element
for concatenation, i.e.\ $\age{w^{\vec{\alpha}}} \cdot \vec{0} =
\vec{0} \cdot \age{w^{\vec{\alpha}}} = \age{w^{\vec{\alpha}}}$.

\begin{example}\label{ex:age-tuple}(contd. from Ex.~\ref{ex:depth-first})
  For the bd-step sequence $X^{\tuple{0}} \xRightarrow{} Y^{\tuple{1}}
  Z^{\tuple{1}} \xRightarrow{} a^{\tuple{2}}Y^{\tuple{2}}Z^{\tuple{1}}
  \xRightarrow{} a^{\tuple{2}}Y^{\tuple{2}}Z^{\tuple{3}}b^{\tuple{3}}$
  (Ex.~\ref{ex:depth-first}) we have $\age{X^{\tuple{0}}} =
  \{X\}$, $\age{Y^{\tuple{1}} Z^{\tuple{1}}} = \{Y,Z\}$,
  $\age{a^{\tuple{2}}Y^{\tuple{2}}Z^{\tuple{1}}} = \{Y\} \cdot \{Z\}$
  and $\age{a^{\tuple{2}}Y^{\tuple{2}}Z^{\tuple{3}}b^{\tuple{3}}} =
  \{Z\} \cdot \{Y\} \enspace.$ ~$\blacksquare$
\end{example}
The $\age{.}$ operator is lifted from bd-words to sets of bd-words,
i.e.\ subsets of $\bigl( (\Sigma \cup \Vars)^{\nats} \bigr)^*$. The set $\age{\bdwords^{(k)}}$
is of particular interest in the following developments. 
Next we define the graph $A^{\df{k}}(G) = \tuple{ \age{\bdwords^{(k)}},
  (\prod^*,\cdot) , \rightarrow}$, where \(\age{\bdwords^{(k)}}\) is the set of
vertices, \(\prod\) is the set of edge labels and \(\rightarrow\) is
the edge relation, defined as: $\widetilde{v} \xrightarrow{(Z,w)}
\widetilde{w}$ if and only if: 
\begin{itemize}
\item $\widetilde{v} = (\widetilde{v})_1 \cdot
\widetilde{v}_{t}$, where \((\widetilde{v})_1\in\nats^{\card{\Vars}}\), 
and $Pk_{\Vars}(Z) \leq (\widetilde{v})_1$, i.e.\ $Z$ occurs with
maximal birthdate $\widetilde{v}$, that is, it occurs in $(\widetilde{v})_1$, and
\item $\widetilde{w} = Pk_{\Vars}(w) \cdot \left((\widetilde{v})_1 - Pk_{\Vars}(Z)\right) 
\cdot \widetilde{v}_t$, i.e. $Z$ is removed from its multiset $(\widetilde{v})_1$, 
and the nonterminals of $w$ are added, with maximal birthdate to
obtain $\widetilde{w}$.
\end{itemize}
\begin{figure}
\centering
%
\begin{picture}(0,0)%
\includegraphics{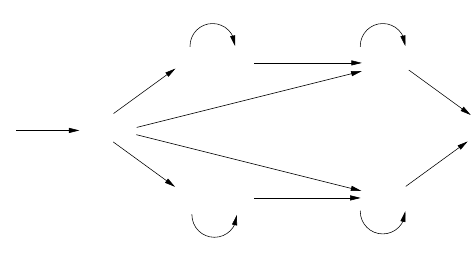}%
\end{picture}%
\setlength{\unitlength}{2368sp}%
\begingroup\makeatletter\ifx\SetFigFont\undefined%
\gdef\SetFigFont#1#2#3#4#5{%
  \reset@font\fontsize{#1}{#2pt}%
  \fontfamily{#3}\fontseries{#4}\fontshape{#5}%
  \selectfont}%
\fi\endgroup%
\begin{picture}(6330,3501)(2386,-3574)
\put(8701,-1861){\makebox(0,0)[b]{\smash{{\SetFigFont{8}{9.6}{\rmdefault}{\mddefault}{\updefault}{\color[rgb]{0,0,0}$\vec{0}$}%
}}}}
\put(3901,-1861){\makebox(0,0)[b]{\smash{{\SetFigFont{8}{9.6}{\rmdefault}{\mddefault}{\updefault}{\color[rgb]{0,0,0}$\{Y,Z\}$}%
}}}}
\put(2401,-1861){\makebox(0,0)[b]{\smash{{\SetFigFont{8}{9.6}{\rmdefault}{\mddefault}{\updefault}{\color[rgb]{0,0,0}$\{X\}$}%
}}}}
\put(7501,-961){\makebox(0,0)[b]{\smash{{\SetFigFont{8}{9.6}{\rmdefault}{\mddefault}{\updefault}{\color[rgb]{0,0,0}$\{Z\}$}%
}}}}
\put(7501,-2761){\makebox(0,0)[b]{\smash{{\SetFigFont{8}{9.6}{\rmdefault}{\mddefault}{\updefault}{\color[rgb]{0,0,0}$\{Y\}$}%
}}}}
\put(5251,-961){\makebox(0,0)[b]{\smash{{\SetFigFont{8}{9.6}{\rmdefault}{\mddefault}{\updefault}{\color[rgb]{0,0,0}$\{Y\}\{Z\}$}%
}}}}
\put(5251,-2761){\makebox(0,0)[b]{\smash{{\SetFigFont{8}{9.6}{\rmdefault}{\mddefault}{\updefault}{\color[rgb]{0,0,0}$\{Z\}\{Y\}$}%
}}}}
\put(2996,-1606){\makebox(0,0)[b]{\smash{{\SetFigFont{7}{8.4}{\rmdefault}{\mddefault}{\updefault}{\color[rgb]{0,0,0}$(X,YZ)$}%
}}}}
\put(4001,-1201){\makebox(0,0)[b]{\smash{{\SetFigFont{7}{8.4}{\rmdefault}{\mddefault}{\updefault}{\color[rgb]{0,0,0}$(Y,aY)$}%
}}}}
\put(3956,-2416){\makebox(0,0)[b]{\smash{{\SetFigFont{7}{8.4}{\rmdefault}{\mddefault}{\updefault}{\color[rgb]{0,0,0}$(Z,Zb)$}%
}}}}
\put(5201,-256){\makebox(0,0)[b]{\smash{{\SetFigFont{7}{8.4}{\rmdefault}{\mddefault}{\updefault}{\color[rgb]{0,0,0}$(Y,aY)$}%
}}}}
\put(6431,-751){\makebox(0,0)[b]{\smash{{\SetFigFont{7}{8.4}{\rmdefault}{\mddefault}{\updefault}{\color[rgb]{0,0,0}$(Y,\varepsilon)$}%
}}}}
\put(7556,-256){\makebox(0,0)[b]{\smash{{\SetFigFont{7}{8.4}{\rmdefault}{\mddefault}{\updefault}{\color[rgb]{0,0,0}$(Z,Zb)$}%
}}}}
\put(7541,-3496){\makebox(0,0)[b]{\smash{{\SetFigFont{7}{8.4}{\rmdefault}{\mddefault}{\updefault}{\color[rgb]{0,0,0}$(Y,aY)$}%
}}}}
\put(8606,-2416){\makebox(0,0)[b]{\smash{{\SetFigFont{7}{8.4}{\rmdefault}{\mddefault}{\updefault}{\color[rgb]{0,0,0}$(Y,\varepsilon)$}%
}}}}
\put(8636,-1216){\makebox(0,0)[b]{\smash{{\SetFigFont{7}{8.4}{\rmdefault}{\mddefault}{\updefault}{\color[rgb]{0,0,0}$(Z,\varepsilon)$}%
}}}}
\put(6431,-2971){\makebox(0,0)[b]{\smash{{\SetFigFont{7}{8.4}{\rmdefault}{\mddefault}{\updefault}{\color[rgb]{0,0,0}$(Z,\varepsilon)$}%
}}}}
\put(6476,-1456){\makebox(0,0)[b]{\smash{{\SetFigFont{7}{8.4}{\rmdefault}{\mddefault}{\updefault}{\color[rgb]{0,0,0}$(Y,\varepsilon)$}%
}}}}
\put(6446,-2251){\makebox(0,0)[b]{\smash{{\SetFigFont{7}{8.4}{\rmdefault}{\mddefault}{\updefault}{\color[rgb]{0,0,0}$(Z,\varepsilon)$}%
}}}}
\put(5251,-3436){\makebox(0,0)[b]{\smash{{\SetFigFont{7}{8.4}{\rmdefault}{\mddefault}{\updefault}{\color[rgb]{0,0,0}$(Z,Zb)$}%
}}}}
\end{picture}%
 \caption{The graph \(A^{\df{k}}(G)\) for \(k\geq 2\) and for the grammar 
\(G\) of Ex.~\ref{ex:depth-first}}\label{fig:adfk}
\end{figure}
Next, define \(L(A^{\df{k}}(G)) = \{ \gamma\in\prod^* \mid
\widetilde{v}\xrightarrow{\gamma}\widetilde{w} \text{ in }
A^{\df{k}}(G) \} \).  For example, Fig.~\ref{fig:adfk} shows the
$A^{\df{k}}$ graph for the grammar $G$ from Ex.~\ref{ex:depth-first}.
The next lemma proves that the paths of $A^{\df{k}}(G)$ represent the
control words of the depth-first derivations of $G$ of index $k$. 
In the following, we omit the
argument $G$ from $\Gamma^{\df{k}}(G)$, or
$A^{\df{k}}(G)$, when it is clear from the context.

\begin{lemma}\label{fsa-dfk}
Given a grammar $G = \tuple{ \Vars, \Sigma, \prod }$, and $k > 0$, for
each $X \in \Vars$ and $\gamma \in \prod^*$, there exists a derivation
$X \xRightarrow[\df{k}]{\gamma} w$, for some $w \in \Sigma^*$, if and
only if $\age{X} \arrow{\gamma}{} \vec{0}$ in $A^{\df{k}}(G)$.
\end{lemma}
\begin{proof}
  \noindent %
  ``$\Rightarrow$'' We shall prove the following more general
  statement. Let \(u^{\vec{\alpha}} \xRightarrow[\df{k}]{\gamma}
  w^{\vec{\beta}}\) be a \(k\)-indexed depth-first bd-step sequence.
  By induction on \(\len{\gamma}\geq 0\), we show the existence of a
  path \(\age{u^{\vec{\alpha}}}\xrightarrow{\gamma}
  \age{w^{\vec{\beta}}}\) in \(A^{\df{k}}\).

  For the base case \(\len{\gamma}=0\), we have \(u^{\vec{\alpha}} =
  w^{\vec{\beta}}\) which yields
  \(\age{u^{\vec{\alpha}}}=\age{w^{\vec{\beta}}}\) and since \(
  u^{\vec{\alpha}}\in\bdwords^{(k)}\) by definition of
  \(\Gamma^{\df{k}}\) we have that
  \(\age{u^{\vec{\alpha}}}\in\age{\bdwords^{(k)}}\) and we are done.

  For the induction step \(\len{\gamma}>0\), let \( v^{\vec{\eta}}
  \xRightarrow[\df{k}]{(Z,x)} w^{\vec{\beta}} \) be the last step of
  the sequence, for some \( (Z,x) \in \prod\), i.e.\ \(\gamma=\sigma
  \cdot (Z,x)\) with \(\sigma\in\prod^*\). By the induction
  hypothesis, \(A^{\df{k}}\) has a path \(\age{u^{\vec{\alpha}}}
  \xrightarrow{\sigma} \age{v^{\vec{\eta}}}\). Let
  \(\age{v^{\vec{\eta}}}= \vec{v}' \cdot \widetilde{v}_t \), where
  \(\vec{v}'= (\age{v^{\vec{\eta}}})_1 \in\nats^{|\Vars|}\), and
  \(\widetilde{v}_t\in (\nats^{|\Vars|})^* \) is a sequence of
  multisets of nonterminals. It remains to show that
	\(\age{w^{\vec{\beta}}}\in\bdwords^{(k)}\), \(Pk_{\Vars}(Z) \leq
  \vec{v}'\) and \(\age{w^{\vec{\beta}}} = Pk_{\Vars}(x) \cdot
  (\vec{v}' - Pk_{\Vars}(Z)) \cdot \widetilde{v}_t\) to conclude that
  \(A^{\df{k}}\) has an edge \(\age{v^{\vec{\eta}}}
  \xrightarrow{(Z,x)} \age{w^{\vec{\beta}}} \), hence a path \(
  \age{u^{\vec{\alpha}}}\xrightarrow{\gamma} \age{w^{\vec{\beta}}}\).

  Since \(v^{\vec{\eta}} \xRightarrow[\df{k}]{(Z,x)/j}
	w^{\vec{\beta}}\) for some \(1\leq j\leq \len{v^{\vec{\eta}}}\) we have that \( (v^{\vec{\eta}})_j=Z^{\tuple{i}}\)
  where \(i = \max\set{j\mid Pk_{\Vars^{\{j\}}}(v^{\vec{\eta}})\neq\vec{0}}\) and
	\(w^{\vec{\beta}} = (v^{\vec{\eta}})_1\ldots (v^{\vec{\eta}})_{j-1}\cdot
	x^{ {\langle}{\langle} \| \vec{\eta} \|_{\max}+1 {\rangle}{\rangle} }\cdot (v^{\vec{\eta}})_{j+1}\ldots
	(v^{\vec{\eta}})_{\len{v^{\vec{\eta}}}}\).
	It is easily seen that \(\|\vec{\beta}\|_{\max} = \| \vec{\eta} \|_{\max}+1\).
	Moreover, since \(i\) is the maximal birthdate among the non-terminals of $v^{\vec{\eta}}$, we have
  \(\age{v^{\vec{\eta}}} = Pk_{\Vars^{\{i\}}}(v^{\vec{\eta}}) \ldots
	Pk_{\Vars^{\{0\}}}(v^{\vec{\eta}})\), hence \(\vec{v}'=Pk_{\Vars^{\{i\}}}(v^{\vec{\eta}})\) and
  \(\widetilde{v}_t=Pk_{\Vars^{\{i-1\}}}(v^{\vec{\eta}})\ldots Pk_{\Vars^{\{0\}}}(v^{\vec{\eta}})\).
	Also we have \(Pk_{\Vars^{\{j\}}}(w^{\vec{\beta}}) = \vec{0}\) for all \(j, i<j\leq \| \vec{\eta} \|_{\max}\),
	\(Pk_{\Vars^{\{i\}}}(w^{\vec{\beta}}) = Pk_{\Vars^{\{i\}}}(v^{\vec{\eta}}) - Pk_{\Vars^{\{i\}}}(Z^{\tuple{i}})\) and 
	\(Pk_{\Vars^{\{\ell\}}}(w^{\vec{\beta}}) = Pk_{\Vars^{\{\ell\}}}(v^{\vec{\eta}})\) for all \(\ell, 0\leq \ell < i\). 
   Using the foregoing properties of \(w^{\vec{\beta}}\) the following equalities are easy to check:
  \[\begin{array}{rcll}
    &   & \age{w^{\vec{\beta}}} \\
    & = & Pk_{\Vars^{\{\| \vec{\eta} \|_{\max}+1\} }}(w^{\vec{\beta}}) \cdot Pk_{\Vars^{\{\| \vec{\eta} \|_{\max} \}}}(w^{\vec{\beta}})\ldots Pk_{\Vars^{\{0\}}}(w^{\vec{\beta}})\\
    & = & Pk_{\Vars^{\{\| \vec{\eta} \|_{\max}+1 }\}}(w^{\vec{\beta}}) \cdot Pk_{\Vars^{\{i\}}}(w^{\vec{\beta}})\cdot Pk_{\Vars^{\{i-1\}}}(w^{\vec{\beta}}) \ldots Pk_{\Vars^{\{0\}}}(w^{\vec{\beta}})\\
		& = & Pk_{\Vars}(x) \cdot Pk_{\Vars^{\{i\}}}(w^{\vec{\beta}})\cdot Pk_{\Vars^{\{i-1\}}}(w^{\vec{\beta}})\ldots Pk_{\Vars^{\{0\}}}(w^{\vec{\beta}})\\ 
    & = & Pk_{\Vars}(x) \cdot (Pk_{\Vars^{\{i\}}}(v^{\vec{\eta}}) - Pk_{\Vars^{\{i\}}}(Z^{\tuple{i}})) \ldots Pk_{\Vars^{\{0\}}}(w^{\vec{\beta}}) \\
    & = & Pk_{\Vars}(x) \cdot (\vec{v}' - Pk_{\Vars}(Z))\cdot Pk_{\Vars^{\{i-1\}}}(w^{\vec{\beta}}) \ldots Pk_{\Vars^{\{0\}}}(w^{\vec{\beta}})\\ 
    & = & Pk_{\Vars}(x) \cdot (\vec{v}' - Pk_{\Vars}(Z)) \cdot Pk_{\Vars^{\{i-1\}}}(v^{\vec{\eta}}) \ldots Pk_{\Vars^{\{0\}}}(v^{\vec{\eta}}) \\
    & = & Pk_{\Vars}(x) \cdot (\vec{v}' - Pk_{\Vars}(Z)) \cdot \widetilde{v}_t
  \end{array}\] 
  This concludes that \(\age{w^{\vec{\beta}}} = Pk_{\Vars}(z) \cdot
  (\vec{v}'-Pk_{\Vars}(Z))\cdot \widetilde{v}_t\), and since
  \(w^{\vec{\beta}}\in\bdwords^{(k)}\), we obtain that
  \(\age{v^{\vec{\eta}}} \xrightarrow{(Z,x)} \age{w^{\vec{\beta}}}\)
  is an edge in \(A^{\df{k}}\), and finally that
  \(\age{u^{\vec{\alpha}}}\xrightarrow{\gamma}\age{w^{\vec{\beta}}}\)
  is a path in $A^{\df{k}}$.

  \noindent ``$\Leftarrow$'' We prove a more general statement. Let
  $\widetilde{u} \xrightarrow{\gamma} \widetilde{w}$ be a path of
  $A^{\df{k}}(G)$. We show by induction on $\len{\gamma}$ that there
  exist bd-words $u^{\vec{\alpha}}, w^{\vec{\beta}} \in \bdwords^{(k)}$,
  such that \(\age{u^{\vec{\alpha}}}=\widetilde{u}\),
  \(\age{w^{\vec{\beta}}} = \widetilde{w}\), and \(u^{\vec{\alpha}}
  \xRightarrow[\df{k}]{\gamma} w^{\vec{\alpha}}\).
  
  %

  The base case $\len{\gamma}=0$ is trivial, because
  \(\widetilde{u}=\widetilde{w}\) and since
  \(\widetilde{u}\in\age{\bdwords^{(k)}}\) then there exists
  \(u^{\vec{\alpha}}\in \bdwords^{(k)}\) such that
  \(\age{u^{\vec{\alpha}}}=\widetilde{u}\), and we are done.

  For the induction step $\len{\gamma} > 0$, let $\gamma = \sigma
  \cdot (Z,x)$, for some production $(Z,x) \in \prod$ and $\sigma \in
  \prod^*$. By the induction hypothesis, there exist bd-words
  $u^{\vec{\alpha}}, v^{\vec{\eta}} \in \bdwords^{(k)}$ such that
  \(\widetilde{u}=\age{u^{\vec{\alpha}}} \xrightarrow{\sigma}
  \age{v^{\vec{\eta}}} \xrightarrow{(Z,x)} \widetilde{w}\) is a path
  in $A^{\df{k}}$, and \(u^{\vec{\alpha}} \xRightarrow[\df{k}]{\sigma}
  v^{\vec{\eta}}\) is a $k$-index bd-step sequence. By the definition
  of the edge relation in $A^{\df{k}}$, it follows that
  \(\age{v^{\vec{\eta}}}= Pk_{\Vars^{\{i\}}}(v^{\vec{\eta}}) \cdot
	\widetilde{v}_t\) where \(i=\max \set{j \mid Pk_{\Vars^{\{i\}}}(v^{\vec{\eta}}) \neq
	\vec{0}}\). Moreover, there exists \(j\), \(1\leq j \leq \len{v^{\vec{\eta}}}\) such that \( (v^{\vec{\eta}})_j = Z^{\tuple{i}}\) since 
	\(Pk_{\Vars}(Z)\leq Pk_{\Vars^{\{i\}}}(v^{\vec{\eta}})\). 
	Now define \(w^{\vec{\beta}}=(v^{\vec{\eta}})_{1}\ldots (v^{\vec{\eta}})_{j-1}\cdot x^{ {\langle}{\langle}{\| \vec{\eta} \|_{\max}+1 }{\rangle}{\rangle}}\cdot (v^{\vec{\eta}})_{j+1}\ldots (v^{\vec{\eta}})_{\len{v^{\vec{\eta}}}}\). It is routine to check \(v^{\vec{\eta}} \xRightarrow[\mathbf{df}]{(Z,x)/j} w^{\vec{\beta}}\) holds.
	Next we show, \(\widetilde{w}=\age{w^{\vec{\beta}}}\) which concludes the proof.
	\begin{align*}
    & \widetilde{w}\\ 
    =& Pk_{\Vars}(x) {\cdot} (Pk_{\Vars^{\{i\}}}(v^{\vec{\eta}}){-}Pk_{\Vars}(Z)) {\cdot} \widetilde{v}_t \\
		=&  Pk_{\Vars^{\{\| \vec{\eta} \|_{\max}+1 \}}}(x^{ {\langle}{\langle}{\| \vec{\eta} \|_{\max}+1 }{\rangle}{\rangle}}) {\cdot} 
    (Pk_{\Vars^{\{i\}}}(v^{\vec{\eta}}){-}Pk_{\Vars^{\{i\}}}(Z^{\tuple{i}})) {\cdot} \widetilde{v}_t \\
		=&  Pk_{\Vars^{\{\| \vec{\eta} \|_{\max}+1 \}}}(w^{\vec{\beta}}) {\cdot} 
    (Pk_{\Vars^{\{i\}}}(v^{\vec{\eta}}){-}Pk_{\Vars^{\{i\}}}(Z^{\tuple{i}})) {\cdot} \widetilde{v}_t \\
		\begin{split}	
			=& Pk_{\Vars^{\{\| \vec{\eta} \|_{\max}+1 \}}}(w^{\vec{\beta}}) {\cdot}(Pk_{\Vars^{\{i\}}}(v^{\vec{\eta}}) {-}Pk_{\Vars^{\{i\}}}(Z^{\tuple{i}})) {\cdot} \\
		&\quad\quad\quad\quad\quad\quad\quad\quad Pk_{\Vars^{\{i{-}1\}}}(v^{\vec{\eta}})\ldots Pk_{\Vars^{\{0\}}}(v^{\vec{\eta}})
	\end{split}\\
		\intertext{Since \(i=\max \set{j \mid Pk_{\Vars^{\{i\}}}(v^{\vec{\eta}}) \neq \vec{0}}\); \(Pk_{\Vars^{\set{\ell}}}(w^{\vec{\beta}}) = Pk_{\Vars^{\set{\ell}}}(v^{\vec{\eta}})\) for \(0\leq \ell < i\) and \(Pk_{\Vars^{\set{i}}}(w^{\vec{\beta}}) = Pk_{\Vars^{\{i\}}}(v^{\vec{\eta}}){-}Pk_{\Vars^{\{i\}}}(Z^{\tuple{i}})\) show that}
		=& Pk_{\Vars^{\{\| \vec{\eta} \|_{\max}+1 \}}}(w^{\vec{\beta}}) {\cdot} Pk_{\Vars^{\{i\}}}(w^{\vec{\beta}}) Pk_{\Vars^{\{i{-}1\}}}(w^{\vec{\beta}})\ldots Pk_{\Vars^{\{0\}}}(w^{\vec{\beta}})\\
		=& Pk_{\Vars^{\{\| \vec{\eta} \|_{\max}+1 \}}}(w^{\vec{\beta}}) {\cdot} Pk_{\Vars^{\{i\}}}(w^{\vec{\beta}}) Pk_{\Vars^{\{i{-}1\}}}(w^{\vec{\beta}})\ldots Pk_{\Vars^{\{0\}}}(w^{\vec{\beta}})\\
		=& \age{w^{\vec{\beta}}}
	\end{align*}
	\qed
\end{proof}
Consequently, we have the following (also proved in \cite{Luker80}):
\begin{corollary}\label{szilard-regular}
	For all $k \geq 1$, \(G=(\Vars,\Sigma,\prod)\) and \(X\in\Vars\), we have \(\Gamma^{\df{k}}\) is regular.
\end{corollary}

\subsection{Bounded-index Underapproximations of Control Structures}

We start describing our program transformation, from a recursive
program to a non-recursive program in which all computation traces
correspond to words generated by an index-bounded grammar. In the
beginning we choose to ignore the data manipulations, and give the
non-recursive program only in terms of transitions between control
locations and (non-recursive) calls. Then we show that the execution
traces of this new program match the depth-first index-bounded
derivations of the visibly pushdown grammar of the original program.

Let $\mathcal{P} = \langle P_1, \ldots, P_n \rangle$ be a recursive
program. For the moment, let us assume that $\mathcal{P}$ has no
(local) variables, and thus, all the labels of the internal
transitions, as well as all the call, return and frame relations are
trivially $\true$. As we did previously, we assume a fixed ordering
$q_1,\ldots,q_m$ on the set $\nf{\mathcal{P}}$ of non-final states of
$\mathcal{P}$. Let $G_{\mathcal{P}} = \tuple{\Vars, \widehat{\Theta},
  \prod}$ be the visibly pushdown grammar associated with
$\mathcal{P}$, where each non-final state $q$ of $\mathcal{P}$ is
associated a nonterminal $Q \in \Vars$. Then, for a given constant
$K>0$, we define a {\em non-recursive} program $\mathcal{H}^K$ that
captures only the traces of $\mathcal{P}$ corresponding to $K$-index
depth-first derivations of $G_{\mathcal{P}}$
(Algorithm~\ref{alg:ctl-query-nodots}). Formally, we define
$\mathcal{H}^K = \tuple{\mathit{query}^0, \mathit{query}^1, \ldots,
  \mathit{query}^K}$, i.e.\ the program is structured in $K+1$
procedures, such that:
\begin{compactitem}
\item $\mathit{query}^0$ consists of a single statement $\textbf{assume}\ \false$, i.e.\ no execution going through a call of
  $\mathit{query}^0$ is possible,
\item all executions of $\mathit{query}^k$, for each $1 \leq k \leq K$
  correspond to $k$-index depth-first derivations of
  $G_{\mathcal{P}}$.
\end{compactitem}
We distinguish between grammar productions of type (a) $Q \arrow{}{}
\tau$, (b) $Q \arrow{}{} \tau Q'$ and (c) $Q \arrow{}{} \calls\tau,
Q_j^{init} \tau\rets\; Q'$ (see Ex. \ref{ex:vpl}) of the visibly
pushdown grammar $G = \tuple{\Vars, \widehat{\Theta}, \prod}$. Since
$\Vars$ and $\widehat{\Theta}$ are finite sets, we associate each
nonterminal $Q \in \Vars$ an integer $1 \leq \mathcal{I}_Q \leq
\card{\Vars}$, each alphabet symbol $\tau \in \widehat{\Theta}$ an
integer $1 \leq \mathcal{I}_\tau \leq \card{\widehat{\Theta}}$, and
define the productions by the following formulae:
\begin{align*}
\pi_a(x,y) & \equiv  \bigvee_{(Q \arrow{}{} \tau) \in \prod} 
x = \mathcal{I}_Q \wedge y = \mathcal{I}_\tau \\
\pi_b(x,y,z) & \equiv  \bigvee_{(Q \arrow{}{} \tau Q') \in \prod} 
x = \mathcal{I}_Q \wedge y = \mathcal{I}_\tau \wedge z = \mathcal{I}_{Q'} \\
\begin{split}
\pi_c(x,y,z,t,s) & \equiv \bigvee_{(Q \arrow{}{} \calls\tau Q_j^{\mathit{init}} \tau\rets Q') \in \prod} 
\bigl( x = \mathcal{I}_Q \wedge y = \mathcal{I}_{\calls\tau} \wedge \\
&\quad\quad\quad z = \mathcal{I}_{Q_j^{\mathit{init}}} \wedge t = \mathcal{I}_{\tau\rets} \wedge s = \mathcal{I}_{Q'} \bigr)
\end{split}
\end{align*}
It is easy to see that the sizes of the $\pi_a$, $\pi_b$ and $\pi_c$
formulae are linear in the size of $\mathcal{P}$ (there is one
disjunctive clause per production of $G_{\mathcal{P}}$, and each such
production corresponds to a transition of $\mathcal{P}$). The
translation of $\mathcal{P}$ into $\mathcal{H}$ can hence be
implemented as a linear time source-to-source program transformation.

%

\begin{algorithm}
\SetAlgoLined
\SetKw{goto}{goto}
\SetKw{assume}{assume}
\SetKw{havoc}{havoc}
\SetKw{Kor}{or}
\SetKwFunction{swap}{swap}
\Begin{
\textbf{var} \(\textsc{PC}\), \(\textsc{y}\), \(\textsc{z}\) \;
\nlset{$\textbf{asgn}^k_0$:} $\textsc{PC} \leftarrow \textsc{X}$ \;
\nlset{\(\textbf{start}^k\):} \goto $\textbf{prod}^k_a$ \Kor $\textbf{prod}^k_b$ \Kor $\textbf{prod}^k_c$ \label{ternary-non-det-choice} \;
\nlset{$\textbf{prod}^k_a$:} \assume \( \exists \tau \ldotp \pi_a(\textsc{PC},\tau) \)
\tcc*[r]{\(Q \rightarrow \tau\)}
\nlset{$\textbf{asgn}^k_a$:} \assume $\true$ \;
\Return \;
\nlset{$\textbf{prod}^k_b$:} \havoc(\textsc{y}) \;
\assume \(\exists \tau \ldotp \pi_b(\textsc{PC},\tau,\textsc{y}) \)
\tcc*[r]{\(Q \rightarrow \tau\, Q'\)}
\nlset{$\textbf{asgn}^k_b$:} \(\textsc{PC} \leftarrow \textsc{y}\) \; 
\goto \(\textbf{start}^k\) \;
\nlset{$\textbf{prod}^k_c$:} \havoc(\textsc{y},\textsc{z}) \;
\assume \(\exists \tau, \tau' \ldotp \pi_c(\textsc{PC}, \tau, \textsc{y}, \tau', \textsc{z})\) 
\tcc*[r]{\(Q {\rightarrow} \tau\, Q_j^{\mathit{init}} \tau' Q'\)}
\nlset{$\textbf{ndet}^k$:} \goto \(\textbf{swap}^k\) \Kor $\textbf{asgn}^k_c$ \label{rod-ord-choice}\;
\nlset{$\textbf{swap}^k$:} \swap(\textsc{y}, \textsc{z}) \;
\nlset{$\textbf{asgn}^k_c$:} \(\textsc{PC} \leftarrow \textsc{z}\) \;
\(\mathit{query}^{k-1}(\textsc{y})\) \;
\goto \(\textbf{start}^k\)\;
}
\caption{\(\textbf{proc}\ \mathit{query}^k(\textsc{X})\) 
for \(1 \leq k \leq K\)\label{alg:ctl-query-nodots}}
\end{algorithm}%
 
Next, we show a mapping from the paths of \(A^{\df{k}}\) onto the feasible interprocedural valid paths of \(query^k\).
To relate these paths, we need to introduce the notion of gsm mappings.
\begin{definition}[\cite{ginsburg}]\label{def:gsm}
	A \emph{generalized sequential machine}, abbreviated gsm, is a \(6\)-tuple 
	\(S= \tuple{K,\Sigma,\Delta,\delta,\lambda,q_1}\) where
	\begin{inparaenum}[\upshape(\itshape 1\upshape)]
	\item \(K\) is a finite non-empty set of \emph{states};
	\item \(\Sigma\) and \(\Delta\) respectively are \emph{input} and \emph{output} alphabet;
	\item \(\delta\) and \(\lambda\) are mappings from \(K\times \Sigma\) into \(K\) and \(\Delta^*\), respectively;
	\item \(q_1\in K\) is the \emph{start} state.
	\end{inparaenum}
The functions \(\delta\) and \(\lambda\) are extended by induction to \(K\times \Sigma^*\) by defining
for every state \(q\), \(x\in\Sigma^*\), and \(y\in \Sigma\):
\begin{compactitem}
\item \(\delta(q,\varepsilon)=q\) and \(\lambda(q,\varepsilon)=\varepsilon\).
\item \(\delta(q,xy)=\delta(\delta(q,x),y)\) and \(\lambda(q,xy)=\lambda(q,x)\lambda(\delta(q,x),y)\).
\end{compactitem}
The operation defined by \(S(x)=\lambda(q_1,x)\) for each \(x\in\Sigma^*\) is called a \emph{gsm mapping}.
\end{definition}
We define the gsm \(\mathit{SC}_Q^k = \tuple{\age{\Upsilon^{(k)}} \cup \{ \mathit{sink} \} , \prod, \mathcal{L}, \delta, \lambda, \age{Q}}\) upon \(A^{\df{k}}\), where \(\mathcal{L}\) denotes the statement labels found in \(\mathit{query}^{0},\ldots,\mathit{query}^{k}\); and the mappings \(\delta\) and \(\lambda\) are given by the rules of Fig.~\ref{fig:lambda}.
\begin{figure*}
\raggedright 
Given \(s \in \age{\Upsilon^{(k)}} \cup \{ \mathit{sink} \}\) and \(p\in\prod\) define \(\delta( s, p ) = s' \) if \( s \xrightarrow{p} s'\) holds in \(A^{\df{k}}\) for some \(s'\), otherwise 
(\( s \xrightarrow{p} s'\) holds for no \(s'\)) then \(\delta( s, p ) = \mathit{sink} \).
The output mapping \(\lambda\) is defined as follows:
\begin{compactenum}
\item \(\lambda ( \set{X}\cdot \widetilde{v}, (X,\tau)	) = \mathbf{start}^{k-\len{\widetilde{v}}} \mathbf{prod}^{k-\len{\widetilde{v}}}_a \mathbf{asgn}^{k-\len{\widetilde{v}}}_a \mathbf{start}^{k-\len{\widetilde{v}}+1}\), if \(\widetilde{v}\neq\varepsilon\);  
\item \(\lambda ( \set{X}, (X,\tau)) = \mathbf{start}^{k}\; \mathbf{prod}^{k}_a\; \mathbf{asgn}^{k}_a\)
\item \(\lambda ( \set{X}\cdot \widetilde{v}, (X,\tau\, X')	)  = \mathbf{start}^{k-\len{\widetilde{v}}} \mathbf{prod}^{k-\len{\widetilde{v}}}_b \mathbf{asgn}^{k-\len{\widetilde{v}}}_b \mathbf{start}^{k-\len{\widetilde{v}}}\)
\item \(\lambda ( \set{X}\cdot \widetilde{v}, (X,\tau\, X_1\, \tau'\, X_2)	)  = \mathbf{start}^{k-\len{\widetilde{v}}} \mathbf{prod}^{k-\len{\widetilde{v}}}_c \mathbf{ndet}^{k-\len{\widetilde{v}}} \)
\item\label{point2} \(\lambda ( \set{Q^{init},Q'}\cdot \widetilde{v}, (Q^{init},\tau\, Q'')	)  = \mathbf{asgn}^{k-\len{\widetilde{v}}}_c \mathbf{asgn}^{k-\len{\widetilde{v}}-1}_0 \mathbf{start}^{k-\len{\widetilde{v}}-1}\mathbf{prod}^{k-\len{\widetilde{v}}-1}_b \mathbf{asgn}^{k-\len{\widetilde{v}}-1}_b \mathbf{start}^{k-\len{\widetilde{v}}-1}\)
\item \(\lambda ( \set{Q^{init},Q'}\cdot \widetilde{v}, (Q',\tau)) = \mathbf{swap}^{k-\len{\widetilde{v}}} \mathbf{asgn}^{k-\len{\widetilde{v}}}_c \mathbf{asgn}^{k-\len{\widetilde{v}}-1}_0 \mathbf{start}^{k-\len{\widetilde{v}}-1} \mathbf{prod}^{k-\len{\widetilde{v}}-1}_a \mathbf{asgn}^{k-\len{\widetilde{v}}-1}_a \mathbf{start}^{k-\len{\widetilde{v}}}\)
\item \(\lambda ( \set{Q^{init},Q'}\cdot \widetilde{v}, (Q',\tau\, Q'')) = \mathbf{swap}^{k-\len{\widetilde{v}}} \mathbf{asgn}^{k-\len{\widetilde{v}}}_c \mathbf{asgn}^{k-\len{\widetilde{v}}-1}_0 \mathbf{start}^{k-\len{\widetilde{v}}-1}\mathbf{prod}^{k-\len{\widetilde{v}}-1}_b \mathbf{asgn}^{k-\len{\widetilde{v}}-1}_b \mathbf{start}^{k-\len{\widetilde{v}}-1}\)
\item \(\lambda ( s, p) = \bot \),  for all \(s\) and \(p\), such that \(\delta(s,p)=\mathit{sink}\) holds.
\end{compactenum}
\caption{Definition of the mappings \(\delta\) and \(\lambda\) for \(\mathit{SC}_Q^k\).\label{fig:lambda}}
\end{figure*}

\begin{lemma}\label{query-fsa}
For a visibly pushdown grammar $G = \tuple{ \Vars, \widehat{\Theta},
  \prod }$, and $k > 0$, for each $Q \in \Vars$ 
	the set of feasible interprocedural valid paths of \(query^{k}(Q)\) coincides with the set \(\{ \mathit{SC}_Q^k(\gamma) \mid \age{Q} \xrightarrow{\gamma} \vec{0} \text{ in } A^{\df{k}} \} \).
\end{lemma}
\begin{proof}
The feasible interprocedural valid paths of \(query^{k}(Q)\) at Algorithm~\ref{alg:ctl-query-nodots} matches sequences of
the form $\sigma_0 \arrow{\delta_0}{} \sigma_1 \arrow{\delta_1}{} \ldots
\arrow{\delta_{n-1}}{} \sigma_n$, where each $\sigma_i \in \Vars^*$ is a {\em
stack}, i.e.\ a possibly empty sequence of frames each containing a snapshot of the value of the local variable $\textsc{PC}$,
$\delta_i \in \prod$ are productions of $G$. The
sequence of stacks $\sigma_0, \sigma_1, \ldots, \sigma_n$ are snapshots of
values of the local variable $\textsc{PC}$ between two consecutive visit to a \textbf{start} label or
between the last visit to a \textbf{start} label and the last \textbf{return}.
Instances of such consecutive visits are given by \(\textbf{start}^k\), $\textbf{prod}^k_a$, $\textbf{asgn}^k_a$; or
$\textbf{start}^k$, $\textbf{prod}^k_a$, $\textbf{asgn}^k_a$, $\textbf{return}$, $\textbf{start}^{k+1}$ (when returning from a previous call);
or $\textbf{start}^k$, $\textbf{prod}^k_c$, $\textbf{ndet}^k$, $\textbf{swap}^k$, $\textbf{asgn}^k_c$, \(\textbf{start}^{k-1}\) (immediately after entering the call \(query^{k-1}\)).

When Algorithm~\ref{alg:ctl-query-nodots} is started with a call to
$query^k(Q)$, the first stack in the trace is $Q$. The
set of stack sequences are generated by a labelled graph defined by the following rules, where 
the stack on both sides of each rule are words \(w\in\Vars^*\) such that \(\len{w} \leq k\). 
\begin{compactenum}[(a)]
\item $Q \cdot \sigma \arrow{(Q,\tau)}{} \sigma$
\item $Q \cdot \sigma \arrow{(Q,\tau Q')}{} Q' \cdot \sigma$
\item we have either \begin{inparaenum}[\upshape(\itshape i\upshape)] 
\item $Q \cdot \sigma \arrow{(Q,\calls\tau Q' \tau\rets Q'')}{} Q' \cdot Q'' \cdot \sigma$, or 
\item $Q \cdot \sigma \arrow{(Q,\calls\tau Q' \tau\rets Q'')}{} Q'' \cdot Q' \cdot \sigma$
\end{inparaenum}
\end{compactenum}
Following the previous definition, we find that the set of sequences of control labels \(\{ \mathit{SC}_Q^k( \gamma ) \mid Q \xrightarrow{\gamma} \varepsilon \} \) coincides with the feasible interprocedural valid path of \(query^k(Q)\).

Next we show that $Q \arrow{\gamma}{} \varepsilon$ is a valid stack sequence of $\mathit{query}^k(Q)$ if and only if $\age{Q}
\arrow{\gamma}{} \vec{0}$ in $A^{\df{k}}(G)$.
For this, consider the following relation between the stacks $\sigma \in
\Vars^*$ such that \(\len{\sigma}\leq k\) and words $\widetilde{w} \in
\age{\bdwords^{(k)}}$: we write $\sigma \llcurly \widetilde{w}$ if and
only if exactly one of the following holds:
\begin{compactenum}[(1)]
\item\label{query-inv:it1} $\len{\sigma} = \len{\widetilde{w}}$ and,
	for all $1 \leq i \leq \len{\widetilde{w}}\colon \{(\sigma)_i\} = (\widetilde{w})_i$, or
\item\label{query-inv:it2} $\len{\sigma} = \len{\widetilde{w}} + 1$,
	$(\widetilde{w})_{1} = \{(\sigma)_1,(\sigma)_2\}$, and for all $1
	< i \leq \len{\widetilde{w}}\colon \{(\sigma)_{i+1}\} = (\widetilde{w})_i$.
\end{compactenum}
The proof goes by induction and shows the following stronger statement relating the reachable stacks and the states of \(A^{\df{k}}\) reachable from \(\age{Q}\):
for any stack sequence $Q \arrow{\gamma}{} \sigma$, there exists a path $\age{Q} \arrow{\gamma}{}
\widetilde{w}$ in $A^{\df{k}}$, such that $\sigma \llcurly
\widetilde{w}$, and vice versa. 

By putting together the previous result about the feasible interprocedural valid paths of \(query^k(Q)\) we find that they coincide with
the set \(  \{ \mathit{SC}_Q^k( \gamma ) \mid \age{Q} \xrightarrow{\gamma} \vec{0} \text{ in } A^{\df{k}} \}\).\qed
\end{proof}

\subsection{Bounded-index Underapproximations of Programs}

Algorithm \ref{alg:ctl-query-nodots} implements the transformation of
the control structure of a recursive program $\mathcal{P}$ into a
non-recursive program $\mathcal{H}^K = \tuple{\mathit{query}^0,
  \ldots, \mathit{query}^K}$, which simulates its $K$-index
derivations (actually, the control words thereof). In this section we
extend this construction to programs with integer variables and data
manipulations (Algorithm \ref{alg:query-nodots}), by defining a set of
procedures $\mathit{query}^k$, for all $0 \leq k \leq K$, such that
each procedure $\mathit{query}^k$ has five sets of local variables,
all of the same cardinality as $\vec{x}$: two sets, named $\vec{x}_I$
and $\vec{x}_O$, are used as input variables, whereas the other three
sets, named $\vec{x}_J,\vec{x}_K$ and $\vec{x}_L$ are used locally by
$\mathit{query}^k$. Besides, each \(\mathit{query}^k\) has local
variables called \textsc{PC}, \(\tau\), \textsc{y}, \textsc{z} and
input variable \(X\). There are no output variables in
$\mathit{query}^k$. Let $\mathcal{V}_{query}^k$ denote the tuple of
local variables of $\mathit{query}^k$, and let
$\mathcal{V}^K_{\mathcal{H}} = \mathcal{V}_{query}^1 \cdot \ldots
\cdot \mathcal{V}_{query}^K$ be the tuple of all variables of
$\mathcal{H}^K$.


%

\begin{algorithm}
\SetAlgoLined
\SetKw{goto}{goto}
\SetKw{assume}{assume}
\SetKw{havoc}{havoc}
\SetKw{Kor}{or}
\SetKwFunction{swap}{swap}
\Begin{
\textbf{var} \(\vec{x}_J, \vec{x}_K, \vec{x}_L\)\;
\textbf{var} \(\textsc{PC}\), \(\tau, \textsc{y}\), \(\textsc{z}\) \;
\nlset{$\textbf{asgn}^k_0$:}$\textsc{PC} \leftarrow X$ \;
\nlset{\(\textbf{start}^k\):} \goto $\textbf{prod}^k_a$ \Kor $\textbf{prod}^k_b$ \Kor $\textbf{prod}^k_c$ \label{ternary-non-det-choice-data} \;
\nlset{$\textbf{prod}^k_a$:} \havoc(\(\tau\))\;
\assume \( \pi_a(\textsc{PC},\tau) \)\tcc*[r]{ \(Q \rightarrow \tau \)}
\nlset{$\textbf{asgn}^k_a$:} \assume $\rho_{\tau}(\vec{x}_I,\vec{x}_O)$\; 
\Return\;
\nlset{$\textbf{prod}^k_b$:} \havoc(\(\tau, \textsc{y}\))\;
\assume \(\pi_b(\textsc{PC},\tau,\textsc{y})\)\tcc*[r]{\(Q \rightarrow \tau\, Q'\)}
\havoc($\vec{x}_J$)\;
\assume $\rho_\tau(\vec{x}_I,\vec{x}_J)$\;
\(\vec{x}_I \leftarrow \vec{x}_J\)\;
\nlset{$\textbf{asgn}^k_b$:}\(\textsc{PC} \leftarrow \textsc{y}\)\;
\goto \(\textbf{start}^k\) \;
\nlset{$\textbf{prod}^k_c$:} \havoc(\(\tau,\textsc{y},\textsc{z}\))\;
\assume \(\pi_c(\textsc{PC}, \calls\tau, \textsc{y}, \tau\rets, \textsc{z})\)\tcc*[r]{ \(Q {\rightarrow} \calls\tau Q_j^{\mathit{init}} \tau\rets\, Q'\)}
\havoc($\vec{x}_J,\vec{x}_K,\vec{x}_L$)\;
\assume $\rho_{\calls\tau}(\vec{x}_I,\vec{x}_J)$ \tcc*[r]{call relation}
\assume $\rho_{\tau\rets}(\vec{x}_K,\vec{x}_L)$ \tcc*[r]{return relation}
\assume $\phi_{\tau}(\vec{x}_I,\vec{x}_L)$ \tcc*[r]{frame relation}
\nlset{$\textbf{ndet}^k$:} \goto \(\textbf{swap}^k\) \Kor $\textbf{asgn}^k_c$ \label{rod-ord-choice-data} \;
\nlset{$\textbf{swap}^k$:}\swap{\textsc{y}, \textsc{z}}\;
\swap{\(\vec{x}_{J}, \vec{x}_{L}\)}\;
\swap{\(\vec{x}_{K}, \vec{x}_{O}\)}\;
\nlset{$\textbf{asgn}^k_c$:}\(\vec{x}_I \leftarrow \vec{x}_L\)\;
\(\textsc{PC} \leftarrow \textsc{z}\)\;
\(\mathit{query}^{k-1}(\textsc{y}, \vec{x}_J, \vec{x}_K )\)\;
\goto \(\textbf{start}^k\)\;
}
\caption{\(\textbf{proc}\ \mathit{query}^k(\textsc{X},\vec{x}_I,\vec{x}_O)\) for \(1 \leq k \leq K\)\label{alg:query-nodots}}
\end{algorithm}%
 

%
\begin{figure*}
\begin{example}\label{ex:runofquery}
Let us consider an execution of \(\mathit{query}\) for the call
\(\mathit{query}^2(Q_1^{\mathit{init}}, \begin{psmallmatrix} 1 &
  0 \end{psmallmatrix}, \begin{psmallmatrix} 1 &
  2 \end{psmallmatrix})\) following {\footnotesize \(Q_1^{init}
  \stackrel{p_1^b p_2^c}{\Longrightarrow} \tau_1
  \calls\tau_2 Q_1^{init} \tau_2\rets Q_3
  \stackrel{p_3^a}{\Longrightarrow}\tau_1 \calls\tau_2
  Q_1^{init} \tau_2\rets \tau_3\stackrel{p_4^a}{\Longrightarrow}\tau_1 \calls\tau_2
  \tau_4 \tau_2\rets \tau_4 \)}. In the table
  below, the first row (labelled \(\mathrm{PC}\)) gives the value of local variable \(\mathrm{PC}\)
 when control hits the labelled statement given at the
second row (labelled \(ip\)). The third row (labelled
\(\vec{x}_I/\vec{x}_O\)) represents the content of the two
arrays. \(\vec{x}_I/\vec{x}_O=\begin{psmallmatrix} a &
b \end{psmallmatrix}\begin{psmallmatrix} c & d \end{psmallmatrix}\)
says that, in \(\vec{x}_I\), \(x\) has value \(a\) and \(z\) has value
\(b\); in \(\vec{x}_O\), \(x\) has value \(c\) and \(z\) has value
\(d\). 

\noindent
\begin{minipage}[htb]{\linewidth}
\[\begin{array}{l|cccccc}
  \mathrm{PC} & Q_1^{\mathit{init}} & - & Q_2 & - & - \\
  ip & \mathbf{start}^2 & \mathbf{prod}^2_b\,(p_1^b) & \mathbf{start}^2 & \mathbf{prod}^2_c\,(p_2^c) & \mathbf{swap}^2 \\
	\vec{x}_I/\vec{x}_O & \begin{psmallmatrix} 1 & 0 \end{psmallmatrix}\begin{psmallmatrix} 1 & 2 \end{psmallmatrix} & 
		\begin{psmallmatrix} 1 & 0 \end{psmallmatrix}\begin{psmallmatrix} 1 & 2 \end{psmallmatrix}&\begin{psmallmatrix} 1 & 0 \end{psmallmatrix}\begin{psmallmatrix} 1 & 2 \end{psmallmatrix}&\begin{psmallmatrix} 1 & 0 \end{psmallmatrix}\begin{psmallmatrix} 1 & 2  \end{psmallmatrix}&\begin{psmallmatrix} 1 & 0 \end{psmallmatrix}\begin{psmallmatrix} 1 & 2 \end{psmallmatrix}		\\
		\hline
		\mathrm{PC} & \multicolumn{1}{||c}{Q_3} & \multicolumn{1}{c||}{-} & Q_1^{\mathit{init}} & -\\
		ip & \multicolumn{1}{||c}{\mathbf{start}^1} & \multicolumn{1}{c||}{\mathbf{prod}^1_a\,(p_3^a)}  &\mathbf{start}^2&  \mathbf{prod}^2_a\,(p_4^a) \\
\vec{x}_I/\vec{x}_O&
\multicolumn{1}{||c}{\begin{psmallmatrix} 1 & 0 \end{psmallmatrix}\begin{psmallmatrix} 1 & 2 \end{psmallmatrix}}&\multicolumn{1}{c||}{\begin{psmallmatrix} 1 & 0 \end{psmallmatrix}\begin{psmallmatrix} 1 & 2 \end{psmallmatrix}}&\begin{psmallmatrix} 0 & 0 \end{psmallmatrix}\begin{psmallmatrix} 42 & 0 \end{psmallmatrix} &\begin{psmallmatrix} 0 & 0 \end{psmallmatrix}\begin{psmallmatrix} 42 & 0 \end{psmallmatrix}\\
\end{array}\]
\end{minipage}
The execution of 
\(\mathit{query}^2(Q_1^{\mathit{init}}, \begin{psmallmatrix} 1 &
  0 \end{psmallmatrix}, \begin{psmallmatrix} 1 &
  2 \end{psmallmatrix})\) starts on row 1,
column 1 and proceeds until the call to \(\mathit{query}^1(Q_3, \begin{psmallmatrix} 1 &
  0 \end{psmallmatrix}, \begin{psmallmatrix} 1 &
  2 \end{psmallmatrix}) \) at
row 2, column 1 (the out of order case). The latter ends at row 2,
column 2, where the execution of \(\mathit{query}^2(Q_1^{\mathit{init}}, \begin{psmallmatrix} 1 &
  0 \end{psmallmatrix}, \begin{psmallmatrix} 1 &
  2 \end{psmallmatrix})\)
resumes. Since the execution is out of order, and the previous
\(\mathbf{havoc}(\vec{x}_J,\vec{x}_K,\vec{x}_L)\) results into
\(\vec{x}_J=\begin{psmallmatrix} 0 & 0 \end{psmallmatrix}\),
\(\vec{x}_K=\begin{psmallmatrix} 42 & 0 \end{psmallmatrix}\) and
\(\vec{x}_L=\begin{psmallmatrix} 1 & 0 \end{psmallmatrix}\) (this
choice complies with the call relation), the values of
\(\vec{x}_I/\vec{x}_O\) are updated to \(\begin{psmallmatrix} 0 &
0 \end{psmallmatrix}/\begin{psmallmatrix} 42 &
0 \end{psmallmatrix}\). 
$\blacksquare$
\end{example}
\end{figure*}
 
For two tuples of variables $\vec{x}$ and $\vec{y}$ of equal length,
and a valuation $\nu \in \zed^{\vec{x}}$, we denote by
$\nu[\vec{y}/\vec{x}]$ the valuation that maps $(\vec{y})_i$ into
$(\nu(\vec{x}))_i$, for all $1 \leq i \leq \len{\vec{x}}$. The
following lemma is needed in the proof of Thm. \ref{thm:query}.

\begin{lemma}\label{lemma:derivation-semantics-gsm}
  Let $G_{\mathcal{P}} = \tuple{\Vars,\widehat{\Theta},\prod}$ be a
  visibly pushdown grammar for a program $\mathcal{P} = \tuple{P_1,
    \ldots, P_n}$, let $\vec{x} = \vec{x}_1 \cdot \ldots \cdot
  \vec{x}_n$ be the tuple of variables in $\mathcal{P}$, and let
  $\mathcal{H}^K = \tuple{\mathit{query}^0, \ldots, \mathit{query}^K}$
  be the program defined by Algorithm \ref{alg:query-nodots}. Given a
  nonterminal $Q \in\Vars$, corresponding to a non-final control state
  $q \in \nf{\mathcal{P}}$, $\gamma\in\prod^*$,
  $w\in\widehat{\Theta}^*$, and $1 \leq k \leq K$, such that $Q
  \xRightarrow[\df{k}]{\gamma} w$, we have: \[\sem{\gamma} =
  \set{\left(I \proj_{\vec{x}_I \cdot
      \vec{x}_O}\right)[\vec{x}\cdot\vec{x}/\vec{x}_I \cdot \vec{x}_O]
    \mid I \cdot O \in \sem{\mathit{SC}^k_Q(\gamma)}}\] where
  $\sem{\gamma} \subseteq \zed^{\vec{x} \times \vec{x}}$ and
  $\sem{\mathit{SC}^k_Q(\gamma)} \subseteq
  \zed^{\mathcal{V}^K_{\mathcal{H}}}$.
\end{lemma}
\begin{proof}
By induction on $\len{\gamma} > 0$, applying a case split on the type
of the first production in $\gamma$. \qed
\end{proof}

The following theorem summarizes the first major result in this paper,
namely that any $K$-index underapproximation of the semantics of a
recursive program $\mathcal{P}$ can be computed by looking at the
semantics of a non-recursive program $\mathcal{H}^K$, obtained from
$\mathcal{P}$ by a syntactic source-to-source transformation.  


\begin{theorem}\label{thm:query}
Let $\mathcal{P} = \langle P_1, \ldots, P_n \rangle$ be a program,
$\vec{x} = \vec{x}_1 \cdot \ldots \cdot \vec{x}_n$ be the tuple of
variables in $\mathcal{P}$, and let $q \in \nf{P_i}$ be a non-final
control state of $P_i = \langle \vec{x}_i, \vec{x}^{in}_i,
\vec{x}^{out}_i, S_i, q^{\mathit{init}}_i, F_i, \Delta_i
\rangle$. Moreover, let $\mathcal{H}^K = \tuple{\mathit{query}^0,
  \ldots, \mathit{query}^K}$ be the program defined by Algorithm
\ref{alg:query-nodots}. For any $1 \leq k \leq K$, we have:
\begin{multline*}
\sem{\mathcal{P}}_{q}^{(k)} = \{ \tuple{(\widetilde{I} \proj_{\vec{x}_I}[\vec{x}/\vec{x}_I])\proj_{\vec{x}_i}, 
(\widetilde{I} \proj_{\vec{x}_O}[\vec{x}/\vec{x}_O])\proj_{\vec{x}_i} } \mid 	\\ 
\widetilde{I} \cdot \widetilde{O} \in \sem{\mathcal{H}^K}_{\mathit{query}^k}, 
\widetilde{I}(X) = Q\}\enspace .
\end{multline*}
\end{theorem}
\begin{proof}
Let $G_{\mathcal{P}} = \tuple{\Vars,\widehat{\Theta},\prod}$ be the
visibly pushdown grammar corresponding to $\mathcal{P}$. By
definition, we have
\[\sem{\mathcal{P}}^{(k)}_q = \Big\{ \tuple{ I \proj_{\vec{x}_i}, O \proj_{\vec{x}_i} } \mid 
I \cdot O \in \textstyle{\bigcup_{Q \xRightarrow[\df{k}]{\gamma} w}} \sem{\gamma} \Big\} \]

\noindent''$\subseteq$'' Let $Q \xRightarrow[\df{k}]{\gamma} w$ be a
derivation of $G_{\mathcal{P}}$, and $I \cdot O \in \sem{\gamma}$ be a
tuple from $\zed^{\vec{x} \times \vec{x}}$. By Lemma \ref{fsa-dfk},
$\age{Q} \arrow{\gamma}{} \vec{0}$ is a path in
$A^{\df{k}}(G_{\mathcal{P}})$, and by Lemma \ref{query-fsa},
$\mathit{SC}_Q^k(\gamma)$ is a feasible interprocedurally valid path
of $\mathit{query}^k(Q)$. By Lemma
\ref{lemma:derivation-semantics-gsm}, there exists tuples
$\widetilde{I}, \widetilde{O}$ such that $\widetilde{I} \cdot
\widetilde{O} \in \sem{\mathit{SC}_Q^k(\gamma)}$, and $I \cdot O =
\left(\widetilde{I} \proj_{\vec{x}_I \cdot \vec{x}_O}\right)[\vec{x}
  \cdot \vec{x}/\vec{x}_I \cdot \vec{x}_O]$. We obtain thus $I =
\widetilde{I} \proj_{\vec{x}_I}[\vec{x}/\vec{x}_I]$ and $O =
\widetilde{I} \proj_{\vec{x}_O}[\vec{x}/\vec{x}_O]$. 

\noindent''$\supseteq$'' Let $\widetilde{I}, \widetilde{O} \in
\zed^{\mathcal{V}_{\mathit{query}}^k}$, such that $\widetilde{I} \cdot
\widetilde{O} \in \sem{\mathcal{H}^K}_{\mathit{query}^k}$ and
$\widetilde{I}(X) = Q$. Then there exists a feasible interprocedurally
valid path $\pi$ of $query^k(Q)$, such that $\widetilde{I} \cdot
\widetilde{O} \in \sem{\pi}$. By Lemma \ref{query-fsa}, there exists a
control word $\gamma \in \prod^*$, such that $\age{Q} \arrow{\gamma}{}
\vec{0}$ and $\pi = \mathit{SC}_Q^k(\gamma)$. By Lemma
\ref{lemma:derivation-semantics-gsm},
$\left(\widetilde{I}\proj_{\vec{x}_I \cdot \vec{x}_O}\right)[\vec{x}
  \cdot \vec{x} / \vec{x}_I \cdot \vec{x}_O] \in \sem{\gamma}$. By
Lemma \ref{fsa-dfk}, we have that $Q \xRightarrow[\df{k}]{\gamma} w$
is a derivation of $G_{\mathcal{P}}$. We can conclude that
$\tuple{(\widetilde{I}
  \proj_{\vec{x}_I}[\vec{x}/\vec{x}_I])\proj_{\vec{x}_i},
  (\widetilde{I}
  \proj_{\vec{x}_O}[\vec{x}/\vec{x}_O])\proj_{\vec{x}_i} } \in
\sem{\mathcal{P}}_q$. \qed
\end{proof}

As a last point, we observe that the bounded-index sequence
$\{\sem{\mathcal{P}}^{(k)}\}_{k=1}^\infty$ satisfies several
conditions that advocate its use in program analysis, as an
underapproximation sequence. The subset order and set union is
extended to tuples of relations, point-wise.
\[\belowdisplayskip=2pt\abovedisplayskip=2pt\begin{array}{lclllr}
\sem{\mathcal{P}}^{(k)} & \subseteq & \sem{\mathcal{P}}^{(k+1)} & 
\mbox{for all $k \geq 1$} & (A1) 
\\
\sem{\mathcal{P}} & = & \bigcup_{k=1}^\infty \sem{\mathcal{P}}^{(k)} && (A2)
\end{array}
\]
Condition ($A1$) requires that the sequence is monotonically
increasing, the limit of this increasing sequence being the actual
semantics of the program ($A2$). These conditions follow however
immediately from the two first points of Lemma~\ref{lem:derivdecomp}.
To decide whether the limit $\sem{\mathcal{P}}$ has been reached by
some iterate $\sem{\mathcal{P}}^{(k)}$, it is enough to check that the
tuple of relations in $\sem{\mathcal{P}}^{(k)}$ is inductive with
respect to the statements of $\mathcal{P}$. This can be implemented as
an SMT query.

 %
%
%

\section{Completeness of Index-Bounded Underapproximations for Bounded Programs}\label{sec:completness}

In this section we define a class of recursive programs for which the
precise summary semantics of each program in that class is effectively
computable.  We show for each program \(\mathcal{P}\) in the class
that \begin{inparaenum}[\upshape(\itshape a\upshape)] \item
  \(\sem{\mathcal{P}}=\sem{\mathcal{P}}^{(k)}\) for some value \(k\geq
  1\), bounded by a linear function in the total number
  $\loc(\mathcal{P})$ of control states in $\mathcal{P}$, and
  moreover \item the semantics of \(\mathcal{H}^{k}\) is effectively
  computable
\end{inparaenum} (and so is that of $\sem{\mathcal{P}}^{(k)}$ 
by Thm.~\ref{thm:query}).

Given an integer relation $R \subseteq \zed^n \times \zed^n$, its {\em
  transitive closure} $R^+ = \bigcup_{i=1}^\infty R^i$, where $R^1 =
R$ and $R^{i+1} = R^i \comp R$, for all $i \geq 1$. In general, the
transitive closure of a relation is not definable within decidable
subsets of integer arithmetic, such as Presburger arithmetic. In this
section we consider two classes of relations, called {\em periodic},
for which this is possible, namely octagonal relations, and finite
monoid affine relations. 
\begin{compactdesc}
	\item[Octagonal relation] An {\em octagonal relation} is defined by a finite
		conjunction of constraints of the form $\pm x \pm y \leq c$, where $x$ and
		$y$ range over the set $\vec{x} \cup \vec{x'}$, and $c$ is an integer
		constant. The transitive closure of any octagonal relation has been shown
		to be Presburger definable and effectively computable~\cite{BIK10}.
	\item[Linear affine relation] A {\em linear affine relation} is defined by a
		formula $\mathcal{R}(\vec{x},\vec{x'}) \equiv C\vec{x} \geq \vec{d}
		~\wedge~ \vec{x'} = A\vec{x} + \vec{b}$, where $A \in \zed^{n \times n}$,
		$C \in \zed^{p \times n}$ are matrices and $\vec{b} \in \zed^n$, $\vec{d}
		\in \zed^p$.  $\mathcal{R}$ is said to have the {\em finite monoid
		property} if and only if the set $\{A^i \mid i \geq 0\}$ is finite. It is
		known that the finite monoid condition is decidable \cite{Boi98}, and
		moreover that the transitive closure of a finite monoid affine relation is
		Presburger definable and effectively computable~\cite{FL02,Boi98}.
\end{compactdesc}

We define a {\em bounded-expression} \(\pat\) to be a regular
expression of the form $\pat=w_1^* \ldots w_d^*$, where \(d\geq 1\)
and each \(w_i\) is a non-empty word.  A language (not necessarily
context-free) $L$ over alphabet \(\Sigma\) is said to be {\em bounded}
if and only if \(L\) is included in (the language of) a bounded
expression \(\pat\).

\begin{theorem}[\cite{Luker78}]\label{bounded-finite-index}
  Let $G = (\Vars,\Sigma,\prod)$ be a grammar, and $X \in
  \Vars$ be a nonterminal, such that $L_X(G)$ is bounded. Then
  there exists a linear function $\mathcal{B} \colon \nats \rightarrow \nats$
  such that $L_X(G) =L^{(k)}_X(G)$ for some \(1 \leq k \leq
	\mathcal{B}(\card{\Vars})\).
\end{theorem}
If the grammar in question is $G_{\mathcal{P}}$, for a program
$\mathcal{P}$, then clearly $\card{\Vars}$ is bounded by the number of
control locations in $\mathcal{P}$, by the definition of
$G_{\mathcal{P}}$. The class of programs for which our method is
complete is defined below:
\begin{definition}\label{bounded-periodic}
  Let $\mathcal{P}$ be a program and $G_{\mathcal{P}} = (\Vars,
  \widehat{\Theta}, \prod)$ be its corresponding visibly pushdown
  grammar. Then $\mathcal{P}$ is said to be {\em bounded periodic} if
  and only if:
  \begin{compactenum}
  \item $L_X(G_{\mathcal{P}})$ is bounded for each $X \in \Vars$;
  \item each relation $\rho_\tau$ occurring in the program, for some
    $\tau \in \widehat{\Theta}$, is periodic.
  \end{compactenum}
\end{definition}
\begin{example}(continued from Ex.~\ref{ex:betterthanboundedstack})
Recall that \(L_{Q_1^{\mathit{init}}}(G_{\mathcal{P}})=L^{(2)}_{Q_1^{\mathit{init}}}(G_{\mathcal{P}})\) which equals to the set 
\(\{ \bigl(\tau_1 \calls \tau_2\bigr)^n \tau_4 \bigl(\tau_2\rets
\tau_3\bigr)^n \mid n \geq 0 \}\subseteq 
\bigl(\tau_1 \tau_2 \calls \bigr)^* \tau_4^* \bigl(\tau_2\rets
\tau_3\bigr)^*\).\hfill \(\blacksquare\)%
\end{example}
Concerning condition \(1\), it is decidable \cite{ginsburg} and
previous work \cite{GT09} defined a class of programs following a
recursion scheme which ensures boundedness of the set of
interprocedurally valid paths.

This section shows that the underapproximation sequence
$\{\sem{\mathcal{P}}^{(k)}\}_{k=1}^\infty$, defined in
Section~\ref{sec:bounded-query}, when applied to any bounded periodic
programs \(\mathcal{P}\), always yields \(\sem{\mathcal{P}}\) in at
most $\mathcal{B}(\loc(\mathcal{P}))$ steps, and moreover each iterate
$\sem{\mathcal{P}}^{(k)}$ is computable and Presburger
definable. Furthermore the method can be applied {\em as it is} to
bounded periodic programs, without prior knowledge of the bounded
expression $\pat \supseteq L_Q(G_{\mathcal{P}})$.

The proof goes as follows. Because \(\mathcal{P}\) is bounded
periodic, Thm.~\ref{bounded-finite-index} shows that the semantics
\(\sem{\mathcal{P}}\) of \(\mathcal{P}\) coincide with its \(k\)-index
semantics \(\sem{\mathcal{P}}^{(k)}\) for some \(1 \leq k \leq
\mathcal{B}(\loc(\mathcal{P})) \). Hence, the result of
Thm.~\ref{thm:query} shows that for each \(q\in\nf{\mathcal{P}}\), the
\(k\)-index semantics \(\sem{\mathcal{P}}^{(k)}_q=\{ \tuple{I \proj_{\vec{x}_I}, I \proj_{\vec{x}_O} } \mid I \cdot O \in \sem{\mathcal{H}^K}_{\mathit{query}^k}, I(X) = Q\}\), that is,  the semantics \(\sem{\mathcal{P}}^{(k)}_q\) is computed from that of procedure \(\mathit{query}^k\) called with \(X=Q\). 
Then, because \(\mathcal{P}\) is
bounded, we show in Thm.~\ref{query-flattable} that every procedure
\(\mathit{query}^k\) of program \(\mathcal{H}\) is {\em flattable}
(Def.~\ref{flattable}). Moreover, since the only transitions of
$\mathcal{H}$ which are not from $\mathcal{P}$ are equalities and {\bf
  havoc}, all transitions of \(\mathcal{H}\) are periodic. Since each
	procedure \(\mathit{query}^k\) is flattable then \(\sem{\mathcal{P}}\) is
computable in finite time by existing tools, such as \textsc{Fast}
\cite{BFLP03} or \textsc{Flata} \cite{BIL09,BIK10}. In fact, these
tools are guaranteed to terminate provided that
\begin{inparaenum}[\upshape(\itshape a\upshape)]
	\item the input program is flattable; and 
	\item loops are labelled with periodic relations.
\end{inparaenum}

\begin{definition}\label{flattable}
 Let $\mathcal{P} = \langle P_1, \ldots, P_n \rangle$ be a
 non-recursive program and $G_{\mathcal{P}} =
 (\Vars,\widehat{\Theta},\prod)$ be its corresponding visibly
 pushdown grammar. Procedure \(P_i\) is said to be {\em flattable} if
 and only if there exists a bounded and regular language $R$ over
 $\widehat{\Theta}$, such that $\sem{\mathcal{P}}_{P_i} =
 \bigcup_{\alpha \in L_{P_i}(G_{\mathcal{P}}) \cap R} \sem{\alpha}$.
\end{definition}
Notice that a flattable program is not necessarily bounded (Def.
\ref{bounded-periodic}), but its semantics can be computed by looking
only at a bounded subset of interprocedurally valid paths.

The proof that the procedures \(\mathit{query}^k\) are flattable
relies on grammar based reasoning, and, in particular, on control-sets
with relative completeness properties. Let us now turn to our main
result, Theorem~\ref{query-flattable} stated next, whose proof is
organized as follows.  First, Proposition~\ref{bounded-control-set}
roughly states that provided \(L(G)\) is bounded, then a bounded
subset of the \(k\)-index depth-first derivations suffices to capture
\(L^{(k)}(G)\) for some \(k\).  The proof of this proposition is split
into Theorem~\ref{bounded-underapprox},
Lemma~\ref{bounded-szilard-underapproximation} and
Lemma~\ref{lem:bowtie}. The rest of the proof uses
Lemma~\ref{query-fsa} which roughly states that there is a
well-behaved mapping from the \(k\)-index depth-first derivations of
\(G_{\mathcal{P}}\) from \(Q\) to the runs of
\(\mathit{query}^{k}(Q)\) for every value of \(k\) and \(Q\).

\begin{theorem}\label{query-flattable}
Let $\mathcal{P} = \langle P_1, \ldots, P_n \rangle$ be a bounded
program, 
then, for any $k \geq 1$,
procedure $\mathit{query}^k$ of program \(\mathcal{H}\) is flattable.
\end{theorem}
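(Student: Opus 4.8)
The plan is to prove flattability of each procedure $query^k_Q$ by induction on $k$, mirroring the hierarchical structure of $\mathcal{H}$ (where $query^k_Q$ only calls procedures $query^{k-1}_{Q'}$). The base case $k=0$ is trivial since $query^0_Q$ consists of a single \textbf{assume} $\mathit{false}$, whose language is empty and hence bounded. For the inductive step, the key observation is that by Thm.~\ref{thm:query} we have $\sem{\mathcal{H}}^{i/o}_{query^k_Q} = \{ I\cdot O \mid \langle \proj{I}{\vec{x}_i},\proj{O}{\vec{x}_i}\rangle \in \sem{\mathcal{P}}^{(k)}_q \}$, and the right-hand side is obtained by ranging over $\alpha \in L^{(k)}_Q(G_{\mathcal{P}})$. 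So it suffices to exhibit a bounded regular language $R$ over $\widehat{\Theta}$ such that the $k$-index words of $G_{\mathcal{P}}$ produced from $Q$ that fall inside $R$ already realize all of $\sem{\mathcal{P}}^{(k)}_q$.

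First I would use hypothesis (1) of Def.~\ref{bounded-periodic}: $L_Q(G_{\mathcal{P}})$ is bounded, so $L_Q(G_{\mathcal{P}}) \subseteq \pat_Q$ for some bounded expression $\pat_Q = w_1^*\cdots w_r^*$; a fortiori $L^{(k)}_Q(G_{\mathcal{P}}) \subseteq \pat_Q$ for every $k$. The candidate $R$ is essentially $\pat_Q$ itself — so the set of interprocedurally valid paths we need to inspect is already bounded. The remaining work is to argue that this boundedness, together with periodicity of each $\rho_\tau$, is exactly what a tool like \textsc{Fast}/\textsc{Flata} needs; but for the statement of the theorem all that is required is the \emph{existence} of such a bounded regular $R$, and $R = \pat_Q$ works because $L_{query^k_Q}(G_{\mathcal{H}})$ is in one-to-one correspondence with the $k$-index derivations of $G_{\mathcal{P}}$ from $Q$, all of which spell words in $\pat_Q$. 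More precisely, I would pull the bounded expression back along the correspondence between runs of $query^k_Q$ and $k$-index derivations of $G_{\mathcal{P}}$, noting that the code of $\mathcal{H}$ emits, along any run, exactly the sequence of symbols of the corresponding derivation (the "in order" branch emits $\langle\tau\,\beta'\,\tau\rangle\,\beta$ with $\beta'$ before $\beta$, while the "out of order" branch emits $\beta$ before $\langle\tau\,\beta'\,\tau\rangle$ — but both yield tagged words that, once read as nested words via $w\_nw$, lie in the same bounded expression over $\Theta$, up to a fixed finite reshuffling captured by a regular language).

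The main obstacle I anticipate is the out-of-order executions introduced by the \textbf{rod} branch of Alg.~\ref{alg:inductivecase}: the sequence of transition symbols appearing along a run of $query^k_Q$ is \emph{not} literally the control word of the corresponding derivation of $G_{\mathcal{P}}$, but a permuted version of it, so a naive pullback of $\pat_Q$ need not be bounded. The fix is to observe that the permutation is highly constrained — each call swaps a whole "callee block" with its continuation — so the set of emitted sequences is still included in a bounded regular language, obtainable by composing $\pat_Q$ with a fixed rational transduction that records the (bounded-depth, because $k$-bounded) reorderings. Making this precise is the technical heart of the argument; once it is done, flattability of $query^k_Q$ follows by taking $R$ to be that bounded regular over-approximation intersected with $L_{query^k_Q}(G_{\mathcal{H}})$, and invoking the induction hypothesis to handle the sub-calls $query^{k-1}_{Q'}$, whose own flattability lets us keep the combined witness language bounded.
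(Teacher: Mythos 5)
There is a genuine gap, and it sits exactly where you place the weight of your argument. You take the witness language to be (a pullback of) the bounded expression $\pat_Q \supseteq L_Q(G_{\mathcal{P}})$ over the \emph{terminal} alphabet $\widehat{\Theta}$, and justify this by the one-to-one correspondence between runs of $query^k_Q$ and $k$-index depth-first \emph{derivations} of $G_{\mathcal{P}}$. But that correspondence is with derivations, not with words: a single word of $L^{(k)}_Q(G_{\mathcal{P}})$ may be produced by many distinct derivations, hence by many distinct interprocedurally valid paths of $query^k_Q$. The set of control words of depth-first $k$-index derivations is regular (Lemma~\ref{szilard-regular}) but in general \emph{not} bounded, so boundedness of $L_Q(G_{\mathcal{P}})$ does not transfer to the set of runs of $query^k_Q$. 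Your proposed fix --- composing $\pat_Q$ with a ``fixed rational transduction'' that records the reorderings --- does not close this gap: the word-to-run correspondence is one-to-many, and images of bounded languages under general rational transductions (in particular under inverse homomorphisms) need not be bounded; only \emph{functional} gsm mappings preserve boundedness (Thm.~\ref{thm:ginsburg}), and those apply in the forward, deterministic direction.

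What is actually needed, and what the paper proves as Thm.~\ref{bounded-control-set}, is a bounded expression $\pat_\Gamma$ over the \emph{productions} such that $DF^{(k)}_Q(\pat_\Gamma,G_{\mathcal{P}}) = L^{(k)}_Q(G_{\mathcal{P}})$, i.e.\ a bounded set of \emph{control words} whose depth-first derivations still generate every word. This is where boundedness of $L_Q(G_{\mathcal{P}})$ is really used: after reducing to the case of a bounded expression over distinct letters $a_1^*\cdots a_d^*$ (the $G^{\bowtie}$ and $G^a$ constructions), the Parikh image of a control word determines the generated word, and Theorem~\ref{bounded-underapprox} extracts from the regular Szilard language a bounded subexpression with the same Parikh image. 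Only then does the paper map $\pat_\Gamma$ \emph{forward} into interprocedurally valid paths of $query^k_Q$; this map is a function (a depth-first derivation is determined by its control word, Lemma~\ref{lem:uniqueness}) satisfying the hypotheses of Thm.~\ref{thm:ginsburg}, so its image is bounded and regular. Your induction on $k$ and your observation about the \textbf{ord}/\textbf{rod} reordering are reasonable but secondary; without the bounded control set the argument does not go through.
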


\subsection{Bounded languages with bounded control sets}

The following result was proved in \cite{gmm12-fmsd}:
\begin{theorem}[Thm. 1 from \cite{gmm12-fmsd}, also in \cite{Latteux78}]
  \label{bounded-underapprox}
	For every regular language \(L\) over alphabet \(\Sigma\) there exists a bounded expression \(\pat_{\Gamma}\) such that
	\(Pk_{\Sigma}(L\cap \pat_{\Gamma})= Pk_{\Sigma}(L)\).
\end{theorem}

Next we prove a result characterizing a subset of derivations sufficient to
capture a bounded context-free language. But first, given a grammar \(G = (\Vars,\Sigma,\prod)\) and \(X\in\Vars\) define
\[\Gamma_X^{\df{k}}= \{ \gamma\in\prod^* \mid \age{X} \xrightarrow{\gamma}
\vec{0} \text{ in } A^{\df{k}} \}\enspace . \] Observe that
\(\Gamma_X^{\df{k}}\) is a regular language, because \(A^{\df{k}}\) is
a finite state automaton.

\begin{lemma}\label{bounded-szilard-underapproximation}
Let \(G = (\Vars,\Sigma,\prod)\) be a grammar and $X \in \Vars$
be a nonterminal, such that for all \(p\in\prod\), \(X\) does not occur in \(\mathit{tail}(p)\).
Also $L_X(G) \subseteq (a_1 w_1)^* \ldots (a_d w_d)^*$ where \(a_1,\ldots,a_d\) are distinct
symbols of \(\Sigma\) none of which occurs in \(w_1\cdots w_d\). Then, for each \(k\geq 1\)
there exists a bounded expression $\pat_\Gamma$ over \(\prod\)
such that \(L_X^{(k)}(G)=\hat{L}_X(\pat_{\Gamma} \cap \Gamma_X^{\df{k}}, G)\).
\end{lemma}
\begin{proof}
\noindent 
We first establish the claim that for each \(k\geq 1\), there exists a bounded expression \(\pat_\Gamma\) over \(\prod\)
such that \(Pk_{\prod}(\Gamma^{\df{k}}\cap \pat_\Gamma) = Pk_{\prod}(\Gamma^{\df{k}}).\)
	By Corollary~\ref{szilard-regular},
	$\Gamma^{\df{k}}$ is a regular language, and by Theorem~\ref{bounded-underapprox}, there exists a bounded expression $\pat_\Gamma$ over \(\prod\) such that $Pk_{\prod}(\Gamma^{\df{k}}\cap\pat_\Gamma) =
	Pk_{\prod}(\Gamma^{\df{k}})$ which proves the claim. 

Define \(\mathcal{A} = \{a_1,\ldots, a_d\}\) and assume \(\prod\) is given as a linearly ordered set of \(m\) productions \(\{p_1,\ldots, p_m\}\).
Then for \(u\) such that \(X \xRightarrow{\gamma} u \), we have $Pk_{\mathcal{A}}(u) = Pk_{\prod}(\gamma) \times \Pi$ where \(\Pi\) is the matrix of \(m\) rows and \(d\) columns where row \(i\) is given by \(Pk_{\mathcal{A}}(\mathit{tail}(p_i))\).
Next, let \(\gamma_1,\gamma_2\) be two control words such that $Pk_{\prod}(\gamma_1) =
Pk_{\prod}(\gamma_2)$ and each \(\gamma_i\) (\(i=1,2\)) generates a word \(u_i\) of \(L_X(G)\), that is $X \stackrel{\gamma_i}{\Longrightarrow} u_i$.
We conclude from the above that  $Pk_{\mathcal{A}}(u_1) = Pk_{\mathcal{A}}(u_2)$. Moreover, the assumption $L_X(G) \subseteq (a_1 w_1)^* \ldots (a_d w_d)^*$ where \(a_1,\ldots,a_d\) are distinct
symbols shows that \(u_1\proj_{\mathcal{A}}=u_2 \proj_{\mathcal{A}}\). Furthermore, because  
no symbol of \(\mathcal{A}\) occurs in \(w_1\cdots w_d\) we find that \(u_1=u_2\).

  \vspace*{\baselineskip}\noindent
	To show \(L_X^{(k)}(G)= \hat{L}_X(\pat_{\Gamma} \cap \Gamma_X^{\df{k}}, G)\) we prove that 
	\(L_X^{(k)}(G)\subseteq \hat{L}_X(\pat_{\Gamma} \cap \Gamma_X^{\df{k}}, G)\)	
	the other direction being immediate because of Proposition~\ref{df-are-complete} which says that \(L_X^{(k)}(G)=\hat{L}_X(\Gamma^{\df{k}}, G)\) 
	and because only those control words \(\gamma\) such that \( \mathit{head}((\gamma)_1)=X \) matters.

  So, let $u \in \hat{L}_X(\Gamma_X^{\df{k}}, G)$ be a word, 
	and $X \xRightarrow[\df{k}]{\gamma} u$ be a depth-first
	derivation of $u$. Since \(Pk_{\prod}( \Gamma_X^{\df{k}}\cap\pat_\Gamma) =
	Pk_{\prod}(\Gamma_X^{\df{k}})\), there
	exists a control word $\beta \in \Gamma^{\df{k}}\cap\pat_\Gamma$ such that $Pk_{\prod}(\beta) = Pk_{\prod}(\gamma)$.
	Also because no production \(p\in\prod\) is such that \(\mathit{tail}(p)\) contains an occurrence of \(X\), we find
	that \( (\beta)_1 = (\gamma)_1 \).
	Finally, Lemma~\ref{lemma:unique-df-control-word} shows that given
	\(\beta\in\Gamma^{\df{k}}\), there exist a (unique) word \(u'\) such that
	\(X \xRightarrow[\df{k}]{\beta} u'\), hence \(u'=u\) as shown above.
	\qed
\end{proof}
For the rest of this section, let $G = (\Vars, \Theta, \prod)$ be a
visibly pushdown grammar (we ignore for the time being the distinction between
tagged and untagged alphabet symbols), and $X_0 \in \Vars$ be an
arbitrarily chosen nonterminal.

Let \(\pat=w_1^*\cdots w_d^*\) be a bounded expression %
\footnote{Recall that each \(w_i\) is a non-empty word.}
over alphabet \(\Theta\) and
define the bounded expression $\patt=(a_1 w_1)^* \dots (a_d w_d)^*$
such that \(\set{a_1,\ldots,a_d}\) and \(\Theta\) are disjoint.
Next, let $\ell_i = \len{a_i\, w_i}$ for every $1 \leq i \leq d$ and let
$G^{\patt}=(\Vars^{\patt},\Theta \cup \set{a_1,\ldots,a_d},\delta^{\patt})$ be the regular grammar where
\[\begin{array}{rcl}
\Vars^{\patt} & = &  \set{\textsc{q}^{(s)}_{r} \mid 1\leq s\leq d \, \land \,  1\leq r\leq \ell_s} \\[0.3cm]
\delta^{\patt} & = & \set{\textsc{q}^{(s)}_{i}\rightarrow (a_s\, w_s)_i\; \textsc{q}^{(s)}_{i+1}\mid 1 \leq  s \leq  d \, \land \, 1 \leq  i < \ell_s} \; \cup\\
& & \set{\textsc{q}^{(s)}_{\ell_s}\rightarrow (a_s\, w_s)_{\ell_s}\; \textsc{q}^{(s')}_{1}\mid 1 \leq  s \leq  s' \leq  d}\enspace .
\end{array}\]
Checking \(\{w\mid \textsc{q}_1^{(s)}\Rightarrow^* w\, \textsc{q}_1^{(x)} \text{ for some } 1 {\leq} s {\leq} x {\leq} d\}=L(\patt)\) holds is routine.
Next, given \(G\) and \(G^{\patt}\), define
\(G^{\bowtie}=(\Vars^{\bowtie},\Theta\cup\set{a_1,\ldots,a_d},\prod^{\bowtie})\) such that \(
L_{X^{\bowtie}_0}(G^{\bowtie})=L_{X_0}(G)\parallel L(\patt) \).%
\footnote{%
Given two languages \(L_1 \subseteq \Sigma_1^*\)
and \(L_2 \subseteq \Sigma_2^*\) their asynchronous product, denoted \(L_1 \parallel L_2\), is the language \(L\) over the alphabet
\(\Sigma=\Sigma_1\cup\Sigma_2\) such that \(w\in L\) if{}f
the projections of $w$ to $\Sigma_1$ and $\Sigma_2$ belong to $L_1$ and $L_2$, respectively. %
Observe that the \(L_1\parallel L_2\) depends on \(L_1\), \(L_2\) \textbf{and} also their underlying alphabet \(\Sigma_1\) and \(\Sigma_2\).%
}

\begin{itemize}
	\item $\Vars^{\bowtie}=\set{X^{\bowtie}_0}\cup\set{ [\textsc{q}^{(s)}_r X \textsc{q}^{(x)}_{y}] \mid X\in\Vars,\, \textsc{q}^{(s)}_r,\textsc{q}^{(x)}_{y} \in\Vars^{\patt},\, s\leq x}$
	\item $\prod^{\bowtie}$ is the set containing for every $1\leq s\leq x \leq d$ a production
		$X^{\bowtie}_0\rightarrow [\textsc{q}^{(s)}_1 X_0 \textsc{q}^{(x)}_{1}]$, and:
		\begin{itemize}
			\item for every production $X\rightarrow \gamma\in\prod$, 
				\(\prod^{\bowtie}\) has a production  	
				\begin{align}
					[\textsc{q}^{(s)}_r X \textsc{q}^{(x)}_{y}]&\rightarrow\gamma & \text{if } \textsc{q}^{(s)}_r \rightarrow \gamma\; \textsc{q}^{(x)}_{y}\in\prod^{\patt}\enspace ;
				\end{align}
			\item for every production $X\rightarrow \gamma\; Y\in\prod$, \(\prod^{\bowtie}\) has a production
				\begin{multline}
					[\textsc{q}^{(s)}_{r} X \textsc{q}^{(x)}_y]\rightarrow \gamma \; [\textsc{q}^{(z)}_{t} Y \textsc{q}^{(x)}_y]\\
					\text{if } \textsc{q}^{(s)}_r \rightarrow \gamma\; \textsc{q}^{(z)}_{t}\in \prod^{\patt}; \label{eq:symbolepsilon}
				\end{multline}
				\item for every production $X\rightarrow \tau\; Z\; \sigma\; Y\in\prod$, \(\prod^{\bowtie}\) has a production
				\begin{multline}
					[\textsc{q}^{(s)}_{r} X \textsc{q}^{(x)}_{y}] \rightarrow \tau\; [\textsc{q}^{(z)}_t Z \textsc{q}^{(u)}_v]\;\sigma\;[\textsc{q}^{(\ell)}_k Y \textsc{q}^{(x)}_y]\\
					\text{if } \textsc{q}^{(s)}_{r}\rightarrow \tau\;\textsc{q}^{(z)}_t\in\prod^{\patt} \text{ and } \textsc{q}^{(u)}_v\rightarrow \sigma\;\textsc{q}^{(\ell)}_k\in\prod^{\patt};\label{eq:procvar}
				\end{multline}
			\item for every production \(\textsc{q}^{(s)}_1 \rightarrow a_s \; \textsc{q}^{(u)}_v\in \delta^{\patt}\), \(\prod^{\bowtie}\) has a production
				\begin{align}
					[\textsc{q}^{(s)}_{1} X \textsc{q}^{(x)}_y]&\rightarrow a_s \; [\textsc{q}^{(u)}_{v} X \textsc{q}^{(x)}_{y}] \enspace .
				\end{align}
		\end{itemize}
		\(\prod^{\bowtie}\) has no other production.
\end{itemize}

Next we define the mapping \(\xi\) which maps each nonterminal
$[\textsc{q}^{(s)}_r X \textsc{q}^{(x)}_y] \in \Vars^\bowtie$ onto \(X\),
\(X_0^{\bowtie}\) onto \(X_0\), every \(a_i\), \(1\leq i\leq d\), onto
\(\varepsilon\) and maps any other terminal (\(\Theta\)) onto itself. Then
\(\xi\) is naturally extended to words over \(\Theta\cup\set{a_1,\ldots,a_d}\cup
\Vars^{\bowtie}\).  Next we lift \(\xi\) to productions of
\(\prod^\bowtie\) such that the mapping of a production is defined by the
mapping of its head and tail.  The lifting of \(\xi\) to sequences of
productions and sets of sequences of productions is defined in the obvious way.

From the above definition we observe that given a derivation
$D^\bowtie \equiv X^\bowtie_0 \Rightarrow [\textsc{q}^{(s)}_1 X_0 \textsc{q}^{(x)}_{1}]
\Longrightarrow^* w$ in $G^\bowtie$, \(\xi\) maps \(D^{\bowtie}\) onto
a derivation of \(G\) of the form \(X_0 \Rightarrow X_0 \Longrightarrow^* w\proj_{\Theta}\). 

\noindent
\begin{lemma}\label{lem:bowtie}
	Let $G = (\Vars, \Theta, \prod)$ be a visibly pushdown grammar, $X_0
	\in \Vars$ be a nonterminal such that \(L_{X_0}(G)\subseteq \pat\) for
	a bounded expression $\pat = w_1^* \ldots w_d^*$. Let
	\(\set{a_1,\ldots,a_d}\) be a set of \(d\) symbols disjoint from \(\Theta\).
	Then for every \(k\geq 1\), the following hold:
\begin{compactenum}
	\item Let \(i_1,\ldots,i_d \in\nats\) we have 
		\[w_1^{i_1} \ldots w_d^{i_d} \in L^{(k)}_{X_0}(G) \text{ if{}f } (a_1 w_1)^{i_1} \ldots(a_d w_d)^{i_d} \in L^{(k)}_{X_0^{\bowtie}}(G^{\bowtie})\enspace ;\]
	\item Given a control set \(\Gamma\) over \(\prod^{\bowtie}\) such that
		\[\hat{L}_{X_0^{\bowtie}}(\Gamma\cap \Gamma^{\df{k}}(G^{\bowtie}),G^{\bowtie})= L^{(k)}_{X_0^{\bowtie}}(G^{\bowtie})\]
		then the control set \(\Gamma'=\xi(\Gamma)\) over \(\prod\) satisfies
		\[\hat{L}_{X_0}(\Gamma'\cap \Gamma^{\df{k}}(G),G)= L^{(k)}_{X_0}(G)\enspace .\]
\end{compactenum}
\end{lemma}
\begin{proof}
	The proof of point 1 is by induction. As customary, we show the following stronger statement: let \(k\geq 1\) and \(w\in (\Theta\cup\set{a_1,\ldots,a_d})^*\cdot \Theta\), 
	we have $[\textsc{q}_r^{(s)} X
	\textsc{q}_{v}^{(u)}] \xRightarrow[(k)]{}^* w$ if{}f
	$\textsc{q}_r^{(s)}\Rightarrow^* w \; \textsc{q}_{v}^{(u)}$ and $X\xRightarrow[(k)]{}^* w\proj_{\Theta}$.  The proof of the
	if direction is by induction on the length of
	\(\textsc{q}_r^{(s)}\Rightarrow^* w \; \textsc{q}_{v}^{(u)}\).

	\noindent \(\mathbf{i=1}\). 
	Then \(\textsc{q}_r^{(s)}\rightarrow \tau \; \textsc{q}_{v}^{(u)}\in\prod^{\patt}\).	Two cases can occur:
  \begin{inparaenum}[\upshape(\itshape i\upshape)]
	\item \(\tau\in\Theta\); or
	\item \(\tau \in \set{a_1,\ldots,a_d}\).
	\end{inparaenum}

	In case \upshape(\itshape i\upshape), we conclude from
	\(X \xRightarrow[(k)]{}^* w\proj_{\Theta}\)  that \(w{=}w\proj_{\Theta}{=}\tau\) and \(X\rightarrow
	\tau\in \prod\), hence that \([\textsc{q}_r^{(s)} X
	\textsc{q}_{v}^{(u)}]\rightarrow \tau\in\prod^{\bowtie}\), and finally that
	\([\textsc{q}_r^{(s)} X
	\textsc{q}_{v}^{(u)}] \xRightarrow[(k)]{}^*  w\).  Case
	\upshape(\itshape ii\upshape) is not allowed since
	\(w\) must end with a symbol in \(\Theta\). 

	\noindent \(\mathbf{i>1}\). Then \(\textsc{q}_r^{(s)}\Rightarrow \tau\;
	\textsc{q}_{r'}^{(s')} \Rightarrow\circ\Rightarrow^* \overbrace{\tau \; y}^{w} \textsc{q}_{v}^{(u)}\). As seen previously, two cases can occur:
  \begin{inparaenum}[\upshape(\itshape i\upshape)]
	\item \(\tau \in \set{a_1,\ldots,a_d}\); or
	\item \(\tau\in\Theta\).
	\end{inparaenum}
	In case \upshape(\itshape i\upshape), because \(w=\tau\, y\) and \(\tau\notin \Theta\) we find that \(X \xRightarrow[(k)]{}^* w\proj_{\Theta}=y\proj_{\Theta}\). Hence the induction hypothesis shows that 
	\([\textsc{q}_{r'}^{(s')} X \textsc{q}_{v}^{(u)}] \xRightarrow[(k)]{}^* y \). Finally the definition of \(G^{\bowtie}\) shows
	that \([\textsc{q}_r^{(s)}X\textsc{q}_v^{(u)} ]\rightarrow\tau\; [\textsc{q}_{r'}^{(s')} X \textsc{q}_v^{(u)}]\in \prod^{\bowtie}\), hence that
	\([\textsc{q}_r^{(s)}X\textsc{q}_v^{(u)} ] \\ \xRightarrow[(k)]{}^* \tau\; [\textsc{q}_{r'}^{(s')} X \textsc{q}_v^{(u)}] \xRightarrow[(k)]{}^* \tau\; y = w\) and we are done.

	For case \upshape(\itshape ii\upshape) (\(\tau\in\Theta\)), we do a (sub)case analysis according
	to the first production rule used in the derivation
	\(X \xRightarrow[(k)]{}^* w\proj_{\Theta}\).

	\begin{itemize}
		\item \(X\rightarrow\tau\). Then \(X \xRightarrow[(k)]{}^* w\proj_{\Theta}=\tau\).  On
			the other hand  \(\textsc{q}_r^{(s)}\Rightarrow \tau\,
			\textsc{q}_{r'}^{(s')} \Rightarrow\circ\Rightarrow^* \tau \, y\,
			\textsc{q}_{v}^{(u)}\) and our assumption on \(w=\tau\, y\) shows that
			\(y\) ends with a symbol in \(\Theta\). 	Hence a contradiction since
			\(w\proj_{\Theta}=\tau\) does not coincide with the projection of \(w=\tau\, y\).
		\item \(X\rightarrow \tau\, Y\). Then \(X \xRightarrow[(k)]{} \tau\,
			Y \xRightarrow[(k)]{}^* \tau\, y\proj_{\Theta}=w\proj_{\Theta}\). Also
			\(\textsc{q}_r^{(s)}\Rightarrow \tau\, \textsc{q}_{r'}^{(s')}
			\Rightarrow\circ\Rightarrow^* \tau \, y\, \textsc{q}_{v}^{(u)}\).  The
			induction hypothesis applied on \(Y \xRightarrow[(k)]{}^* y\proj_{\Theta}\) and
			\(\textsc{q}_{r'}^{(s')} \Rightarrow^* y\, \textsc{q}_{v}^{(u)}\) shows
			that \([\textsc{q}_{r'}^{(s')} Y \textsc{q}_v^{(u)}]
			\xRightarrow[(k)]{}^* y\). Finally, \(X\rightarrow \tau\, Y\in\prod\) and \(\textsc{q}_r^{(s)}\rightarrow \tau\, \textsc{q}_{r'}^{(s')}\in\prod^{\patt}\) show that \([\textsc{q}_r^{(s)} X \textsc{q}_{v}^{(u)}] \rightarrow \tau\, [\textsc{q}_{r'}^{(s')} Y \textsc{q}_{v}^{(u)}]\in\prod^{\bowtie}\), hence
			that
			\([\textsc{q}_r^{(s)} X \textsc{q}_{v}^{(u)}] \xRightarrow[(k)]{}^* \tau\, [\textsc{q}_{r'}^{(s')} Y \textsc{q}_{v}^{(u)}] \xRightarrow[(k)]{}^* \tau\, y=w\) and we are done.
		\item \(X\rightarrow \tau\, X_1\, \sigma\, X_2\).
			Then \(X \xRightarrow[(k)]{} \tau\, X_1 \sigma\, X_2 \xRightarrow[(k)]{}^* \tau\, w_1{\proj_{\Theta}}\, \sigma\, w_2\proj_{\Theta}=w\proj_{\Theta}\).
			Moreover, since \(\textsc{q}_s^{(r)} \Rightarrow^* w\, \textsc{q}_v^{(u)}\) and \(\tau,\sigma\in\Theta\) we find that
			there exist \(\textsc{q}_s^{(r)}\Rightarrow \tau\, \textsc{q}_{a}^{(b)} \Rightarrow^* \tau\, w_1\, \textsc{q}_{a'}^{(b')} \Rightarrow \tau\, w_1\, \sigma\,
			\textsc{q}_{c}^{(d)} \Rightarrow^* \tau\, w_1\, \sigma\, w_2 \textsc{q}_{v}^{(u)}\).
			Hence, the definition of \(G^{\bowtie}\) shows that 

			\[[\textsc{q}_s^{(r)} X \textsc{q}_{v}^{(u)}] \rightarrow \tau \; [\textsc{q}_{a}^{(b)} X_1 \textsc{q}_{a'}^{(b')}] \; \sigma\; [\textsc{q}_{c}^{(d)} X_2 \textsc{q}_{v}^{(u)}] \enspace.\]
			On the other hand, since \(X_1 X_2  \xRightarrow[(k)]{}^* w_1\proj_{\Theta}\; w_2\proj_{\Theta}\) (simply delete \(\tau\) and \(\sigma\)),
			Lemma~\ref{lem:derivdecomp} shows
			that either  \(X_1 \xRightarrow[(k-1)]{}^* w_1\proj_{\Theta}\) and \(X_2  \xRightarrow[(k)]{}^* w_2\proj_{\Theta}\); or
			\(X_1 \xRightarrow[(k)]{}^* w_1\proj_{\Theta}\) and \(X_2  \xRightarrow[(k-1)]{}^* w_2\proj_{\Theta}\).
			Let us assume the latter holds (the other being treated similarly).
			Applying the induction hypothesis, we find that 
			\([\textsc{q}_{a}^{(b)} X_1 \textsc{q}_{a'}^{(b')}] \xRightarrow[(k)]{}^* w_1 \) and
			\([\textsc{q}_{c}^{(d)} X_2 \textsc{q}_{v}^{(u)}] \xRightarrow[(k-1)]{}^* w_2 \), hence
			we conclude the case with the \(k\)-index derivation
			\([\textsc{q}_s^{(r)} X \textsc{q}_{v}^{(u)}] \xRightarrow[(k)]{}^*\, \tau\, [\textsc{q}_{a}^{(b)} X_1 \textsc{q}_{a'}^{(b')}] \, \sigma\, [\textsc{q}_{c}^{(d)} X_2 \textsc{q}_{v}^{(u)}] \xRightarrow[(k)]{}^*\) \(\tau\, [\textsc{q}_{a}^{(b)} X_1 \textsc{q}_{a'}^{(b')}] \, \sigma \, w_2 \xRightarrow[(k)]{}^* \tau\, w_1\, \sigma \, w_2\).
	\end{itemize}
	The ``only if'' direction is proved similarly, this time by induction
	on the length of the derivation 
	\([\textsc{q}_r^{(s)} X
	\textsc{q}_{v}^{(u)}]\xRightarrow[(k)]{}^* w\).
	
	\medskip

	For the proof of point 2 the ``\(\subseteq\)'' direction is obvious by definition
	of depth-first derivations. For the reverse direction ``\(\supseteq\)'' point 1
	combined with the assumption shows that 
	for every \(i_1,\ldots,i_d \in\nats\) the following equivalence holds:
	\[\begin{array}{c}
		w_1^{i_1} \ldots w_d^{i_d} \in L^{(k)}_{X_0}(G)\\ 
		\text{if{}f}\\ 
		(a_1 w_1)^{i_1} \ldots(a_d w_d)^{i_d} \in \hat{L}_{X_0^{\bowtie}}(\Gamma\cap\Gamma^{\df{k}},G^{\bowtie})\enspace .
	\end{array}\]
	So let \(D\equiv X_0^{\bowtie} \xRightarrow[(k)]{}^* w\) be a depth-first \(k\)-index derivation of \(G^{\bowtie}\) with
	control word conforming to \(\Gamma\). Now consider \(\xi(D)\), it defines
	again a depth-first \(k\)-index derivation except that this time the control
	word conforms to \(\xi(\Gamma)\). Further, the definition of \(\xi\) shows
	that the word generated by \(\xi(D)\) results
	from deleting the symbols \(\set{a_1,\ldots,a_d}\) from \(w=(a_1 w_1)^{i_1}\cdots (a_d w_d)^{i_d}\). To conclude, observe that \(w_1^{i_1}\cdots w_d^{i_d}\in L^{(k)}_{X_0}(G)\) and we are done. \qed
\end{proof}

The following proposition shows that $L_Q^{(k)}(G_{\mathcal{P}})$ is
captured by a subset of depth-first derivations whose control words
belong to some bounded expression.

\begin{proposition}\label{bounded-control-set}
	Let $G = (\Vars, \widehat{\Theta}, \prod)$ be a visibly pushdown grammar,
$X_0 \in \Vars$ be a nonterminal such that \(L_{X_0}(G)\) is bounded.
Then for each \(k\geq 1\) there exists a bounded expression \(\pat_{\Gamma}\)
over \(\prod\) such that \(\hat{L}_{X_0}(\pat_{\Gamma}\cap \Gamma^{\df{k}},G)=L_{X_0}^{(k)}(G)\).
\end{proposition}
\begin{proof}
	Since \(L_{X_0}(G)\) is bounded there exists a bounded expression \(\pat= w_1^*
	\ldots w_d^*\) such that \(L_{X_0}(G) \subseteq \pat\).

  Next, define \(\set{a_1,\ldots,a_d}\) be an alphabet disjoint from $\Theta$.
	Lemma~\ref{lem:bowtie} shows that 
	for every \(i_1,\ldots,i_d \in\nats\) the equivalence \(w_1^{i_1} \ldots w_d^{i_d} \in L^{(k)}_{X_0}(G)\) if{}f \( (a_1 w_1)^{i_1} \ldots(a_d w_d)^{i_d} \in L^{(k)}_{X_0^{\bowtie}}(G^{\bowtie})\) holds.
	Next, applying Lemma~\ref{bounded-szilard-underapproximation} on
	\(L^{(k)}_{X_0^{\bowtie}}(G^{\bowtie})\) (whose assumptions holds by definition of \(G^{\bowtie}\)) we obtain a bounded expression
	\(\pat_{\Gamma^{\bowtie}}\) over \(\prod^{\bowtie}\) such that
	\(\hat{L}_{X_0^{\bowtie}}(\pat_{\Gamma^{\bowtie}}\cap \Gamma^{\df{k}},G^{\bowtie})=L^{(k)}_{X_0^{\bowtie}}(G^{\bowtie})\).
	Our next step is to apply the results of Lemma~\ref{lem:bowtie} (second point)
	to obtain that \(L_{X_0}^{(k)}(G)= \hat{L}_{X_0}(\xi(\pat_{\Gamma^{\bowtie}})\cap \Gamma^{\df{k}}), G)\).
	Finally, since $\pat_{\Gamma^{\bowtie}}$ is a bounded expression, and \(\xi\) is
	an homomorphism we have that
	\(\xi(\pat_{\Gamma^\bowtie})\) is bounded (see Lem.~\ref{lem:gsmbounded}), hence included in a bounded expression and we are done by setting \(\pat_{\Gamma}\) to
	\(\xi(\pat_{\Gamma^\bowtie})\).\qed
\end{proof}

%

\subsection{Proof of Theorem~\ref{query-flattable}}

We recall two results from Ginsburg \cite{ginsburg}. 
\begin{theorem}[Theorem~3.3.2, \cite{ginsburg}]\label{thm:gsmregular}
	Each gsm mapping preserves regular sets.
\end{theorem}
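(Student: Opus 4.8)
The plan is to show that for every regular language $R \subseteq \Sigma^*$ the image $S(R) = \set{\lambda(q_1, x) \mid x \in R}$ is a regular subset of $\Delta^*$, where $S(x) = \lambda(q_1, x)$ is the gsm mapping of Def.~\ref{def:gsmmapping}. Since $R$ is regular, I would first fix a deterministic finite automaton $A = (P, \Sigma, \delta_A, p_0, F_A)$ with $L(A) = R$.

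The core of the argument is to build a single nondeterministic finite automaton $B$ over the output alphabet $\Delta$ that runs the gsm $S$ and the acceptor $A$ in lockstep while emitting the output of $S$ one letter at a time. Its principal states are the pairs $(q, p) \in K \times P$, the start state is $(q_1, p_0)$, and the accepting states are those $(q, p)$ with $p \in F_A$. For each principal state $(q, p)$ and each input letter $a \in \Sigma$, put $q' = \delta(q, a)$, $p' = \delta_A(p, a)$ and $w = \lambda(q, a) = b_1 \cdots b_m \in \Delta^*$; I would link $(q, p)$ to $(q', p')$ by a chain of transitions reading $b_1, \ldots, b_m$ through $m - 1$ fresh auxiliary states, replacing this chain by a single $\varepsilon$-transition when $w = \varepsilon$. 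In $B$ the choice of the input letter $a$ is thus internal and nondeterministic, whereas the letters that $B$ actually consumes are exactly those emitted by $S$. As $K$ and $P$ are finite and each $\lambda(q, a)$ is a fixed finite word, only finitely many auxiliary states arise, so $B$ is a bona fide finite automaton.

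It then remains to check $L(B) = S(R)$. Using the extension clauses $\delta(q, xy) = \delta(\delta(q, x), y)$ and $\lambda(q, xy) = \lambda(q, x)\,\lambda(\delta(q, x), y)$ from Def.~\ref{def:gsm}, an induction on $\len{x}$ shows that $B$ has an accepting run emitting $y \in \Delta^*$ if and only if some input $x = a_1 \cdots a_n$ drives $A$ from $p_0$ into $F_A$ (that is, $x \in R$) and satisfies $\lambda(q_1, x) = y$: the outputs emitted at the successive steps of the simulation concatenate to exactly $\lambda(q_1, x) = S(x)$. Hence $L(B) = S(R)$, which is therefore regular.

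The one piece requiring care — and the main, though routine, obstacle — is the bookkeeping of the auxiliary states that spell out multi-letter outputs together with the $\varepsilon$-transitions for empty outputs: one must verify that these chains introduce no spurious accepting runs and that they compose correctly in the induction. Everything else is immediate from the definitions of $\delta$, $\lambda$, and the gsm mapping.
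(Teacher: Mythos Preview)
The paper does not give its own proof of this statement: it is quoted verbatim as Theorem~3.3.2 from Ginsburg's book \cite{ginsburg} and used as a black box in the proof of Theorem~\ref{thm:ginsburg}. There is therefore nothing in the paper to compare your argument against.

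That said, your construction is correct and is essentially the classical one: take the synchronous product of the gsm with a DFA for $R$, expand each output word $\lambda(q,a)$ into a chain of single-letter transitions (with an $\varepsilon$-transition when the output is empty), and accept in the $R$-component. The induction via the clauses $\delta(q,xy)=\delta(\delta(q,x),y)$ and $\lambda(q,xy)=\lambda(q,x)\,\lambda(\delta(q,x),y)$ is exactly what is needed, and your remark about the auxiliary chains not introducing spurious accepting runs is the only point requiring any care (make the auxiliary states distinct per transition and non-accepting, which you implicitly do by calling them ``fresh'').
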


\begin{lemma}[Lemma~5.5.3, \cite{ginsburg}]\label{lem:gsmbounded}
	\(S(w_1^* \ldots w_n^*)\) is bounded for each gsm \(S\) and all words \(w_1,\ldots,w_n\).
\end{lemma}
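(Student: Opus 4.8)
The plan is to exploit the determinism and finiteness of the gsm \(S = (K, \Sigma, \Delta, \delta, \lambda, q_1)\) to decompose every output \(S(w_1^{i_1}\cdots w_n^{i_n})\) into \(n\) independent blocks, show that each block ranges over a bounded set, and then reassemble the whole image using closure of bounded languages under finite union and concatenation. Concretely, for each index \(j \in \set{1, \ldots, n}\) and each state \(q \in K\) I would introduce the single-block output set \(O_{j,q} = \set{\lambda(q, w_j^i) \mid i \geq 0}\). The crux of the argument is the claim that each \(O_{j,q}\) is bounded.

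To prove that claim I would track the sequence of states visited when \(w_j\) is read repeatedly from \(q\): setting \(r_0 = q\) and \(r_{m+1} = \delta(r_m, w_j)\), the sequence \(r_0, r_1, r_2, \ldots\) lives in the finite set \(K\) and is therefore ultimately periodic, say \(r_m = r_{m+p}\) for all \(m \geq s\), with preperiod \(s \geq 0\) and period \(p \geq 1\). Writing \(o_m = \lambda(r_m, w_j)\) for the output emitted on the \(m\)-th pass, the composition rule \(\lambda(q, xy) = \lambda(q,x)\,\lambda(\delta(q,x),y)\) (valid for all words, as noted after Def.~\ref{def:gsm}) gives \(\lambda(q, w_j^i) = o_0 o_1 \cdots o_{i-1}\); and since \(o_m\) depends only on \(r_m\), determinism forces \(o_m = o_{m+p}\) for \(m \geq s\). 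Hence for \(i \geq s\), decomposing \(i = s + tp + c\) with \(t \geq 0\) and \(0 \leq c < p\) yields \(\lambda(q, w_j^i) = A\,U^t\,B_c\), where \(A = o_0 \cdots o_{s-1}\) and \(U = o_s \cdots o_{s+p-1}\) are fixed and \(B_c = o_s \cdots o_{s+c-1}\) ranges over only the \(p\) prefixes of \(U\). Since \(\set{A\,U^t\,B_c \mid t \geq 0} \subseteq A^* U^* B_c^*\) is bounded, and the finitely many outputs with \(i < s\) contribute finitely many single words, \(O_{j,q}\) is a finite union of bounded languages, hence bounded.

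Finally I would assemble the image. Iterating the composition rule, with \(p_0 = q_1\) and \(p_j = \delta(p_{j-1}, w_j^{i_j})\), one has \(S(w_1^{i_1}\cdots w_n^{i_n}) = \lambda(q_1, w_1^{i_1})\,\lambda(p_1, w_2^{i_2}) \cdots \lambda(p_{n-1}, w_n^{i_n})\), and each factor lies in \(O_{j, p_{j-1}}\) with \(p_{j-1} \in K\). Relaxing the requirement that the intermediate states be the ones actually reached (which only enlarges the set) yields \(S(w_1^* \cdots w_n^*) \subseteq \bigcup_{(q^{(0)}, \ldots, q^{(n-1)}) \in K^n} O_{1,q^{(0)}} \cdots O_{n,q^{(n-1)}}\). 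The right-hand side is a finite union of concatenations of bounded languages; since the concatenation of bounded expressions \(u_1^* \cdots u_r^*\) and \(v_1^* \cdots v_t^*\) is again the bounded expression \(u_1^* \cdots u_r^* v_1^* \cdots v_t^*\), and since each member of a finite family of bounded expressions embeds into their concatenation by choosing exponent \(0\) outside its own block, the whole right-hand side is bounded, and therefore so is \(S(w_1^* \cdots w_n^*)\).

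I expect the periodicity argument of the second paragraph to be the main obstacle: the care lies in observing that the \emph{outputs} \(o_m\), and not merely the states \(r_m\), become periodic (because \(\lambda\) reads only the current state and the fixed word \(w_j\)), and in handling the remainder block \(B_c\) so that the output genuinely falls into a finite family of bounded sets. The closure properties of bounded languages invoked in the last paragraph are routine and could be recorded as a preliminary remark.
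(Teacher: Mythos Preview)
The paper does not prove this lemma; it merely quotes it from Ginsburg's monograph as one of three ingredients (together with Theorems~3.3.2 and~3.4.1 of \cite{ginsburg}) used to deduce Theorem~\ref{thm:ginsburg}. So there is no proof in the paper to compare against.

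That said, your argument is correct and is essentially the standard one. The key point---that the state sequence \(r_0,r_1,\ldots\) obtained by repeatedly reading the fixed word \(w_j\) is ultimately periodic, hence so is the output sequence \(o_m=\lambda(r_m,w_j)\), hence \(\lambda(q,w_j^i)=o_0\cdots o_{i-1}\) lies in a set of the form \(A\,U^*\,\{B_0,\ldots,B_{p-1}\}\) union finitely many sporadic words---is exactly right, and your assembly step via \(\bigcup_{(q^{(0)},\ldots,q^{(n-1)})\in K^n} O_{1,q^{(0)}}\cdots O_{n,q^{(n-1)}}\) together with closure of bounded languages under finite union and concatenation is clean. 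One cosmetic remark: you could fix \(q^{(0)}=q_1\) rather than ranging over all of \(K\), but enlarging the union does no harm.
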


And finally, the proof that \(\mathit{query}^k\) is flattable.
\begin{proof}[of Theorem.~\ref{query-flattable}]
	Since \(\mathcal{P}\) is bounded periodic we can apply
	Proposition~\ref{bounded-control-set} showing the existence of a bounded expression
	\(\pat_\Gamma\) over \(\prod\) such that
	\(\hat{L}_Q(\pat_{\Gamma}\cap \Gamma^{\df{k}},G_{\mathcal{P}})=L_Q^{(k)}(G_{\mathcal{P}})\).
	Hence we find that \(\sem{\mathcal{P}}^{(k)}_q\) coincides with \(\bigcup_{\alpha \in
	L^{(k)}_Q(G_{\mathcal{P}})} \sem{\alpha}\) which in turn is equal
	to \(\bigcup_{\alpha\in
	\hat{L}_Q(\pat_{\Gamma}\cap \Gamma^{\df{k}},G_{\mathcal{P}})} \sem{\alpha}\). 

	Lemma~\ref{lemma:sem-equiv} shows that for all control word \(\gamma\in\prod^*\) such that $Q \xRightarrow[\textbf{df}]{\gamma} \alpha$
	we have that $\sem{\gamma} = \set{I \cdot O \mid \tuple{I\proj_{\vec{x}_i}, O\proj_{\vec{x}_i}} \in \sem{\alpha}}$.
  This enables the use of Lemma~\ref{query-fsa} showing that such control word \(\gamma\) is such that $\sem{\gamma} = \sem{\mathit{SC}^k_Q(\gamma)}$.
	This is saying the semantics of \(\gamma\) in \(\mathcal{P}\) can be obtained by computing that of \(\mathit{SC}^k_Q(\gamma)\) in \(query^k\).


	We then conclude from Lem.~\ref{lem:gsmbounded} and Thm.~\ref{thm:gsmregular}, that \(\mathrm{SC}_Q^{k}(\pat_\Gamma)\) is a bounded and regular language. 
	Back to \(\sem{\mathcal{H}}_{\mathit{query}^k}\), we find that
	\[
	\sem{\mathcal{H}}_{\mathit{query}^k}=\textstyle{\bigcup_{\alpha\in L_{\mathit{query}^k}(G_{\mathcal{H}})}\sem{\alpha}}= \textstyle{\bigcup_{\alpha\in L_{\mathit{query}^k}(G_{\mathcal{H}})\cap \mathrm{SC}_Q^{k}(\pat_\Gamma)} \sem{\alpha}}
	\]
	and that \(\sem{\mathcal{H}}_{\mathit{query}^k}\) is flattable since \(\mathrm{SC}_Q^{k}(\pat_\Gamma)\) is a bounded regular set.
\qed
\end{proof}
 %
\section{Experiments}\label{sec:experiments}


\begin{table}
\centering
{\footnotesize\begin{tabular}{|@{\,}l@{\,}|ccc|ccc|ccc@{\,}|}
\hline
 & \multicolumn{3}{c|}{$k=2$} & \multicolumn{3}{c|}{$k=3$} & \multicolumn{3}{c|}{$k=4$} \\
 \cline{2-10}
& \# & t & fp & \# & t & fp & \# & t & fp \\
\hline
\texttt{identity} & 210 & 0.10 & no & 330 & 0.22 & yes & \multicolumn{3}{c|}{-} \\
\texttt{leq} & 152 & 0.12 & no & 240 & 0.27 & no & 328 & 0.41 & yes \\
\texttt{parity} & 384 & 0.14 & no & 606 & 0.54 & no & 828 & 1.31 & yes \\
\texttt{plus} & 462 & 0.53 & no & 728 & 2.54 & no & 994 & 9.20 & yes \\
\texttt{times2} & 210 & 0.14 & no & 330 & 0.35 & yes & \multicolumn{3}{c|}{-} \\
\hline
\end{tabular}}
\caption{Experiments with recursive implementations of basic arithmetic functions and predicates \cite{termination-competition}.}
\label{tab:exp1}
\end{table} 

We have implemented the proposed method in the \textsc{Flata} verifier
\cite{HKGIKR12} and experimented with several benchmarks. The
\textsc{Flata} tool is publicly
available\footnote{\url{https://github.com/filipkonecny/flata}} and
the benchmarks used in this section are given in the
repository. First, we have considered several programs from external
sources \cite{termination-competition}, that compute arithmetic
functions or predicates in a recursive way such as \texttt{identity}
(identity), \texttt{plus} (addition), \texttt{times2} (multiplication
by two), \texttt{leq} (comparison), and \texttt{parity} (parity
checking). It is worth noting that all of these programs have bounded
index visibly pushdown grammars, i.e.\ \(L(G^P)\) is of bounded index,
for each program \(P \in \set{\texttt{identity}, \texttt{plus},
  \texttt{times2}, \texttt{leq}, \texttt{parity}}\), the 
stabilization of the under-approximation sequence is thus
guaranteed. For all our benchmarks, the condition that the tuple of relation $\sem{\mathcal{P}}^{(k)}$ 
is inductive with respect to the statements of \(\mathcal{P}\) is met for
$k\leq3$. Table \ref{tab:exp1} shows the results, giving the size (\#) of each
under-approximation $query^k$ (the number of transitions) and the time (t)
needed to compute its summary (in seconds). The
column fp indicates whether the fixpoint check was successful. The
platform used for all experiments is MacBookPro with Intel Core i7
\(2,3\,\text{GHz}\) with \(16\,\text{GB}\) of RAM.

\begin{table}
\centering
{\footnotesize\begin{tabular}{|l|ccc|ccc|ccc|}
\hline
 & \multicolumn{3}{c|}{$k=2$} & \multicolumn{3}{c|}{$k=3$} & \multicolumn{3}{c|}{$k=4$} \\
\hline
& \# & t & fp & \# & t & fp & \# & t & fp \\
\hline
$F_1$ & 32 & 0.05 & no & 50 & 0.07 & no & 68 & 0.09 & yes \\
$F_2$ & 72 & 0.06 & no & 114 & 0.74 & no & 156 & 1.55 & yes \\
$F_3$ & 128 & 0.06 & no & 204 & 0.30 & no & 280 & 1.59 & yes \\
$F_4$ & 200 & 0.06 & no & 320 & 0.44 & no & 440 & 4.02 & yes \\
$F_5$ & 288 & 0.07 & no & 462 & 0.63 & no & 636 & 5.97 & yes \\
$F_6$ & 392 & 0.07 & no & 630 & 0.82 & no & 868 & 7.54 & yes \\
$F_7$ & 512 & 0.08 & no & 824 & 0.86 & no & 1136 & 14.23 & yes \\
$F_8$ & 648 & 0.08 & no & 1044 & 1.09 & no & 1440 & 12.87 & yes \\
\hline
\end{tabular}}
\[
  F_a(x)=\left\{\begin{array}{ll}
    x-10 & \textrm{ if } x\geq 101 \\
    (F_a)^a(x+10\cdot a-9) & \textrm{ if } x\leq 100
  \end{array}\right.
\]
\caption{\label{tab:exp-fa}Generalized McCarthy $F_a$ Experiments. The function \(F_2\) is the original McCarthy91 function.}
\end{table} 

\begin{table}
\centering
{\footnotesize\begin{tabular}{|l|ccc|ccc|ccc|}
\hline
 & \multicolumn{3}{c|}{$k=2$} & \multicolumn{3}{c|}{$k=3$} & \multicolumn{3}{c|}{$k=4$} \\
\hline
& \# & t & fp & \# & t & fp & \# & t & fp \\
\hline
$G_{11}$ & 72 & 0.06 & no & 114 & 0.74 & no & 156 & 1.55 & yes \\
$G_{12}$ & 72 & 0.08 & no & 114 & 1.53 & no & 156 & n/a & ? \\
$G_{13}$ & 72 & 0.08 & no & 114 & 5.07 & no & 156 & n/a & ? \\
$G_{14}$ & 72 & 0.08 & no & 114 & 7.07 & no & 156 & n/a & ? \\
\hline
\end{tabular}}
\[
  G_b(x)=\left\{\begin{array}{ll}
    x-10 & \textrm{ if } x\geq 101 \\
    G(G(x+b)) & \textrm{ if } x\leq 100
  \end{array}\right.
\]    
\caption{\label{tab:exp-gb}Generalized McCarthy $G_b$ Experiments. 
The function \(G_{11}\) is the original McCarthy91 function.}
\end{table} 

Next, we have considered two generalizations of the McCarthy 91
function \cite{cowles}, a~well-known verification benchmark that has
long been a challenge. We have automatically computed precise
summaries of its generalizations $F_a$ (Table \ref{tab:exp-fa}) and
$G_b$ (Table \ref{tab:exp-gb}) above for $a=2,\ldots,8$ and
$b=12,13,14$. For the $F_a$ functions, the computed summaries are
given by: \[F_a(x) = \left\{\begin{array}{ll}
91 & \mbox{if $x \leq 100$} \\
x-10 & \mbox{if $x \geq 101$}
\end{array}\right. \text{ for all $a=1,\ldots,8$ .} \]
The computed summaries for the $G_b$ functions are given in Table
\ref{tab:summaries}.

\begin{table}
\centering
\begin{tabular}{|l|c|}
\hline
\(G_{11}(x)\) & \(\begin{array}{ll} 91 & \mbox{if $x \leq 100$} \\ x-10 & \mbox{if $x \geq 101$} \end{array}\) \\
\hline
\(G_{12}(x)\) & \(\begin{array}{ll} 91 & \mbox{if $x \leq 100$ and $2 | x+1$} \\ 92 & \mbox{if $x \leq 100$ and $2 | x$} \\ x-10 & \mbox{if $x\geq 101$} \end{array}\) \\
\hline
\(G_{13}(x)\) & \(\begin{array}{ll} 91 & \mbox{if $x \leq 100$ and $3 | x+1$} \\ 92 & \mbox{if $x \leq 100$ and $3 | x$} \\ 93 & \mbox{if $x \leq 100$ and $3|x+2$} \\ x-10 & \mbox{if $x\geq 101$} \end{array}\) \\
\hline
\(G_{14}(x)\) & \(\begin{array}{ll} 91 & \mbox{if $x \leq 100$ and $4 | x+3$} \\ 92 & \mbox{if $x \leq 100$ and $4 | x+2$} \\ 93 & \mbox{if $x \leq 100$ and $4|x+1$} \\ 94 & \mbox{$x \leq 100$ and $4|x$} \\ x-10 & \mbox{if $x\geq 101$} \end{array}\) \\
\hline
\end{tabular}
\caption{\label{tab:summaries}Automatically computed summaries for the
  generalized McCarthy $G_b$ functions (for index $k=3$).}
\end{table}

The visibly pushdown grammars corresponding to the recursive programs
implementing the $F_a,G_b$ functions are not bounded. In the case of
the $F_a$ function, the under-approximation sequence reaches a
fixpoint after $4$ iterations. In the case of $G_b$, for $b=12,13,14$,
the summary of $query^{3}$ is the expected result. However, due to the
limitations of the \textsc{Flata} tool, which is based on an
acceleration procedure without abstraction, we could not compute the
summary of $query^{4}$, and we could not verify automatically that the
fixpoint has been reached.
 %
\section{Conclusions}\label{sec:conclusions}

We have presented an underapproximation method for computing summaries
of recursive programs operating on integers. The underapproximation is
driven by bounding the index of derivations that produce the execution
traces of the program, and computing the summary, for each index, by
analyzing a non-recursive program. We also present a class of programs
on which our method is complete. Finally, we report on an
implementation and experimental evaluation of our technique.
 
\paragraph{Acknowledgements.}
Pierre Ganty is supported by the EU FP7 2007--2013 program under agreement
610686 POLCA, by the Madrid Regional Government under CM project S2013/ICE-2731
(N-Greens) and RISCO: RIgorous analysis of Sophisticated COncurrent and
distributed systems, funded by the Spanish Ministry of Economy and
Competitiveness No.\@ TIN2015-71819-P (2016--2018).
Pierre thanks Thomas Reps for pointing out inconsistencies in the examples.

\bibliographystyle{abbrv}
\bibliography{../ref}


\end{document}